\newcommand{\sm}{appendix}
\newtheorem{theorem}{Theorem}
\newtheorem{lemma}{Lemma}
\newtheorem{definition}{Definition}
\newtheorem{corollary}{Corollary}
\begin{document}

\def\ket#1{\left|#1\right\rangle}
\def\bra#1{\langle#1|}
\newcommand{\ketbra}[2]{|#1\rangle\!\langle#2|}
\newcommand{\braket}[2]{\langle#1|#2\rangle}
\newcommand{\prob}[1]{{\rm Pr}\left(#1 \right)}
\newcommand{\expect}[2]{{\mathbb{E}_{#2}}\!\left\{#1 \right\}}
\newcommand{\var}[2]{{\mathbb{V}_{#2}}\!\left\{#1 \right\}}


\newcommand{\sde}{\mathrm{sde}}
\newcommand{\Z}{\mathbb{Z}}
\newcommand{\w}{\omega}
\newcommand{\Kap}{\kappa}

\newcommand{\Tchar}{$T$}
\newcommand{\T}{\Tchar~}
\newcommand{\ClT}{\{{\rm Clifford}, \Tchar\}~}
\newcommand{\Tcount}{\Tchar--count~}
\newcommand{\Tcountper}{\Tchar--count}
\newcommand{\Tcounts}{\Tchar--counts~}
\newcommand{\Tdepth}{\Tchar--depth~}
\newcommand{\Zr}{\Z[i,1/\sqrt{2}]}
\newcommand{\ve}{\varepsilon}

\newcommand{\eq}[1]{\hyperref[eq:#1]{(\ref*{eq:#1})}}
\renewcommand{\sec}[1]{\hyperref[sec:#1]{Section~\ref*{sec:#1}}}
\newcommand{\app}[1]{\hyperref[app:#1]{Appendix~\ref*{app:#1}}}
\newcommand{\fig}[1]{\hyperref[fig:#1]{Figure~\ref*{fig:#1}}}
\newcommand{\thm}[1]{\hyperref[thm:#1]{Theorem~\ref*{thm:#1}}}
\newcommand{\lem}[1]{\hyperref[lem:#1]{Lemma~\ref*{lem:#1}}}
\newcommand{\tab}[1]{\hyperref[tab:#1]{Table~\ref*{tab:#1}}}
\newcommand{\cor}[1]{\hyperref[cor:#1]{Corollary~\ref*{cor:#1}}}
\newcommand{\alg}[1]{\hyperref[alg:#1]{Algorithm~\ref*{alg:#1}}}
\newcommand{\defn}[1]{\hyperref[def:#1]{Definition~\ref*{def:#1}}}

\newcommand{\targfix}{\qw {\xy {<0em,0em> \ar @{ - } +<.39em,0em>
\ar @{ - } -<.39em,0em> \ar @{ - } +
<0em,.39em> \ar @{ - }
-<0em,.39em>},<0em,0em>*{\rule{.01em}{.01em}}*+<.8em>\frm{o}
\endxy}}

\newenvironment{proofof}[1]{\begin{trivlist}\item[]{\flushleft\it
Proof of~#1.}}
{\qed\end{trivlist}}

\newcommand{\cu}[1]{{\textcolor{red}{#1}}}
\newcommand{\tout}[1]{{}}
\newcommand{\beq}{\begin{equation}}
\newcommand{\eeq}{\end{equation}}
\newcommand{\beqa}{\begin{eqnarray}}
\newcommand{\good}{{\rm good}}
\newcommand{\bad}{{\rm bad}}
\newcommand{\eeqa}{\end{eqnarray}}

\newcommand{\id}{\openone}
\title{Quantum Deep Learning}
\author{Nathan Wiebe}
\author{Ashish Kapoor}
\author{Krysta M.~Svore}

\affiliation{Microsoft Research, Redmond, WA (USA)}

\begin{abstract}
In recent years, deep learning has had a profound impact on machine learning and artificial intelligence. At the same time, algorithms for quantum computers have been shown to efficiently solve some problems that are intractable on conventional, classical computers. We show that quantum computing not only reduces the time required to train a deep restricted Boltzmann machine, but also provides a richer and more comprehensive framework for deep learning than classical computing and leads to significant improvements in the optimization of the underlying objective function. Our quantum methods also permit efficient training of full Boltzmann machines and multi–layer, fully connected models and do not have well known classical counterparts.
\end{abstract}

\maketitle

\section*{Introduction}
We present quantum algorithms to perform deep learning that outperform conventional, state-of-the-art classical algorithms in terms of both training efficiency and model quality. Deep learning is a recent technique used in machine learning that has substantially impacted the way in which classification, inference, and artificial intelligence (AI) tasks are modeled~\cite{HOT06,CW08,Ben09,LYK+10}.  It is based on the premise that to perform sophisticated AI tasks, such as speech and visual recognition, it may be necessary to allow a machine to learn a model that contains several layers of abstractions of the raw input data.  For example, a model trained to detect a car might first accept a raw image, in pixels, as input.  In a subsequent layer, it may abstract the data into simple shapes. In the next layer, the elementary shapes may be abstracted further into aggregate forms, such as bumpers or wheels.  At even higher layers, the shapes may be tagged with words like ``tire" or ``hood".  Deep networks therefore automatically learn a complex, nested representation of raw data similar to layers of neuron processing in our brain, where ideally the learned hierarchy of concepts is (humanly) understandable.
In general, deep networks may contain many levels of abstraction encoded into a highly connected, complex graphical network; training such graphical networks falls under the umbrella of deep learning.

Boltzmann machines (BMs) are one such class of deep networks, which formally are a class recurrent neural nets with undirected edges and thus provide a generative model for the data.  From a physical perspective, Boltzmann machines model the training data with an Ising model that is in thermal equilibrium. These spins are called units in the machine learning literature and encode features and concepts while the edges in the Ising model's interaction graph represent the statistical dependencies of the features. The set of nodes that encode the observed data and the output are called the visible units ($v$), whereas the nodes used to model the latent concept and feature space are called the hidden units ($h$).  Two important classes of BMs are the restricted Boltzmann machine (RBM) which takes the underlying graph to be a complete bipartite graph, and the deep restricted Boltzmann machine which is composed of many layers of RBMs  (see Figure~\ref{fig:BMdiag}).  
For the purposes of discussion, we assume that the visible and hidden units are binary.

A Boltzmann machine models the probability of a given configuration of visible and hidden units by the Gibbs distribution (with inverse temperature $1$):
\begin{equation}
P(v,h) = e^{-E(v,h)}/Z,~\label{eq:rbmformula}
\end{equation}
where $Z$ is a normalizing factor known as the \emph{partition function} and the energy $E(v,h)$ of a given configuration $(v, h)$ of visible and hidden units is given by
\begin{equation}
E(v,h)=-\sum_i v_i b_i - \sum_j h_j d_j - \sum_{i,j} w^{vh}_{ij}v_ih_j- \sum_{i,j} w^{v}_{ i,j}v_iv_j-\sum_{i,j} w^{h}_{ i,j}h_ih_j.\label{eq:E}
\end{equation}
Here the vectors $b$ and $d$ are \emph{biases} that provide an energy penalty for a unit taking the value $1$ and $w^{v,h}_{i,j}$, $w^v_{i,j}$, and $w^h_{i,j}$ are \emph{weights} which assign an energy penalty if the visible and hidden units both take value $1$. We denote $w=[w^{v,h},w^v,w^h]$ and let $n_v$ and $n_h$ be the numbers of visible and hidden units, respectively.

Given some a priori observed data, referred to as the training set, {\em learning} for these models proceeds by modifying the strengths of the interactions in the graph to maximize the likelihood of the Boltzmann machine producing the given observations.  Consequently, the training process uses gradient descent to find weights and biases that optimize the maximum--likelihood objective (ML--objective):
\begin{equation}
O_{\rm ML} := \frac{1}{N_{\rm train}}\sum_{v\in x_{\rm train}} \log\left(\sum_{h=1}^{n_h} P(v,h)\right) - \frac{\lambda}{2}w^T w,
\end{equation}
where $N_{train}$ is the size of the training set, $x_{\rm train}$ is the set of training vectors, and $\lambda$ is an $L2$--regularization term to combat overfitting.  The derivative of $O_{\rm ML}$ with respect to the weights is
\begin{eqnarray}
\frac{\partial O_{\rm ML}}{\partial w_{i,j}}&=&\left\langle v_ih_j \right\rangle_{\rm data}-\left\langle v_ih_j \right\rangle_{\rm model}-\lambda w_{i,j},\label{eq:logderiv}
\end{eqnarray}
where the brackets denote the expectation values over the data and model for the BM.
The remaining derivatives take a similar form~\cite{Hin02}.

Computing these gradients directly from (\ref{eq:rbmformula}) and (\ref{eq:logderiv}) is exponentially hard in $n_v$ and $n_h$; thus, classical approaches resort to approximations such as contrastive divergence~\cite{Hin02,SMH07,Tie08,SH09,Ben09}.
Unfortunately, contrastive divergence does not provide the gradient of any true objective function~\cite{ST10}, it
is known to lead to suboptimal solutions ~\cite{TH09,BD07,FI11}, it is not guaranteed to converge in the presence of certain regularization functions~\cite{ST10}, and it cannot be used directly to train a full Boltzmann machine.
We show that quantum computation provides a much better framework for deep learning and illustrate this by providing efficient alternatives to these methods that are elementary to analyze, accelerate the learning process and lead to better models for the training data.


\begin{figure}
\centering
\includegraphics[width=0.85\linewidth]{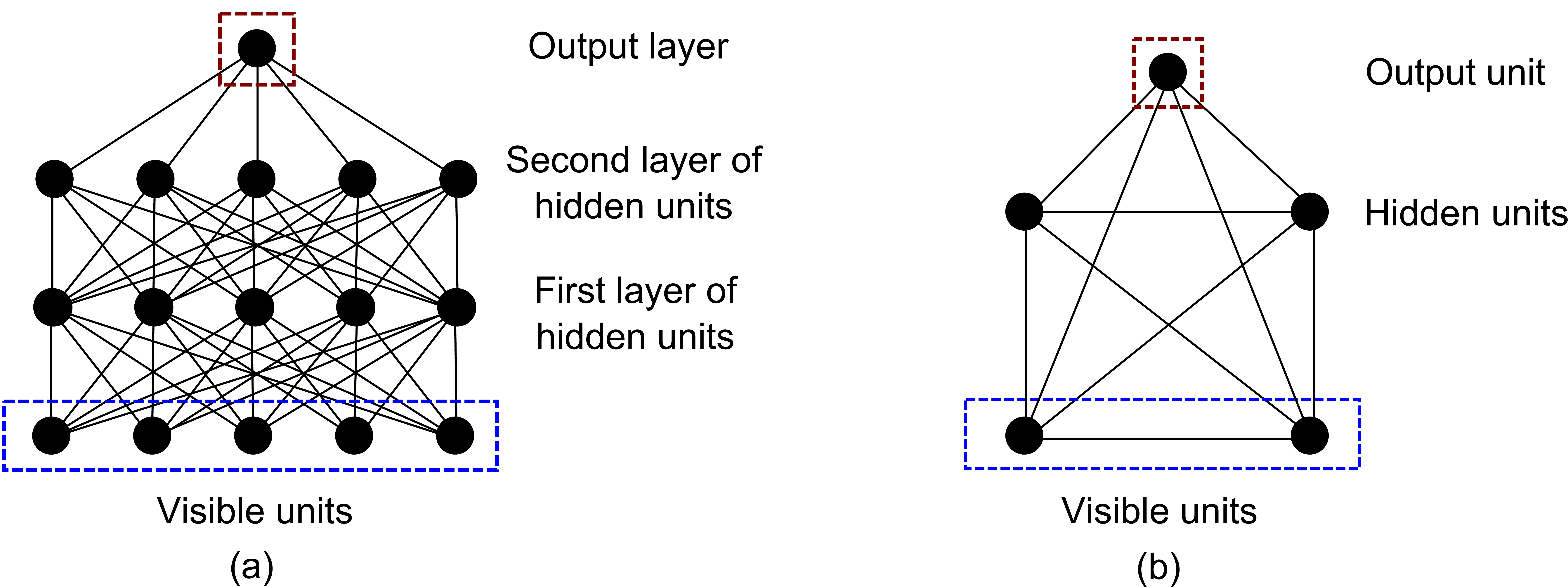}
\caption{A graphical representation of two types of Boltzmann machines.   (a) A $4$--layer deep restricted Boltzmann machine (dRBM) where each black circle represents a hidden or visible unit and each edge represents a non--zero weight for the corresponding interaction.  The output layer is often treated as a visible layer to provide a classification of the data input in the visible units at the bottom of the graph.  (b) An example of a $5$--unit full Boltzmann machine (BM).  The hidden and visible units no longer occupy distinct layers due to connections between units of the same type.}
\label{fig:BMdiag}
\end{figure}

%

\section*{GEQS Algorithm}
We propose two quantum algorithms: Gradient Estimation via Quantum Sampling (GEQS) and Gradient Estimation via Quantum Ampitude Estimation (GEQAE).
These algorithms prepare a coherent analog of the Gibbs state for Boltzmann machines and then draw samples from the
resultant state to compute the expectation values in~(\ref{eq:logderiv}).  Formal descriptions of the algorithms are given in the \sm.
Existing algorithms for preparing these states~\cite{LB97,TD98,PW09,DF11,ORR13} tend not to be efficient for machine learning applications or do not offer clear evidence of a quantum speedup.  The inefficiency of~\cite{PW09,ORR13} is a consequence of the uniform initial state having small overlap with the Gibbs state.
The complexities of these prior algorithms, along with our own, are given in~\tab{cost}

Our algorithms address this problem by using a non-uniform prior distribution for the probabilities of each configuration, which is motivated by the fact that we know a priori from the weights and the biases that certain configurations will be less likely than others.  We obtain this distribution by using a mean--field (MF) approximation to the configuration probabilities.  This approximation is classically efficient and typically provides a good approximation to the Gibbs states observed in practical machine learning problems~\cite{Jor99,WH02,Tie08}.  Our algorithms exploit this prior knowledge to refine the Gibbs state from copies of the MF state.  This allows the Gibbs distribution to be prepared efficiently and exactly if the two states are sufficiently close.

The MF approximation, $Q(v,h)$, is defined to be the product distribution that
minimizes the Kullback--Leibler divergence ${\rm KL}(Q||P)$.  The fact that it is a product distribution means that it can be efficiently computed and also
can be used to find a classically tractable estimate of the partition function $Z$:
$$Z_{Q}:=\sum_{v,h} Q(v,h) \log\left(\frac{e^{-E(v,h)}}{Q(v,h)}\right).$$
Here $Z_{Q}\le Z$ and equality is achieved if and only if ${\rm KL}(Q||P)=0$~\cite{Jor99}.  Here $Q(v,h)$ does not need to be the MF approximation.  The same formula also applies if $Q(v,h)$ is replaced by another efficient approximation, such as a structured mean--field theory calculation~\cite{Xin02}.

Let us assume that a constant $\kappa$ is known such that
\begin{equation}
 P(v,h)\le \frac{e^{-E(v,h)}}{Z_{Q}} \le \kappa\, Q(v,h),\label{eq:kappadef}
\end{equation}
and define the following ``normalized'' probability of a configuration as
\begin{equation}
\mathcal{P}(v,h) := \frac{e^{-E(v,h)}}{\kappa\, Z_{Q}\, Q(v,h)}.
\end{equation}
Note that
\begin{equation}
Q(v,h)\mathcal{P}(v,h) \propto P(v,h),
\end{equation}
which means that if the state
\begin{equation}
\sum_{v,h} \sqrt{Q(v,h)}|v\rangle |h\rangle,
\end{equation}
is prepared and each of the amplitudes are multiplied by $\sqrt{\mathcal{P}(v,h)}$ then the result will be proportional to the desired state.

The above process can be made operational by adding an additional quantum register to compute $\mathcal{P}(v,h)$ and using quantum superpostion to prepare the state
\begin{equation}
\sum_{v,h} \sqrt{Q(v,h)}|v\rangle |{h}\rangle |{\mathcal{P}(v,h)}\rangle\left(\sqrt{1-\mathcal{P}(v,h)}|{0}\rangle+\sqrt{\mathcal{P}(v,h)}|{1}\rangle\right).\label{eq:stateprep}
\end{equation}
The target Gibbs state is obtained if the right--most qubit is measured to be $1$.  Preparing~(\ref{eq:stateprep}) is efficient because $e^{-E(v,h)}$ and $Q(v,h)$ can be calculated in time that
is polynomial in the number of visible and hidden units.
The success probability of preparing the state in this manner is
\begin{equation}
P_{\rm success}  =\frac{Z}{\kappa \, Z_{Q} }\ge \frac{1}{\kappa}.\label{eq:Psucc}
\end{equation}
In practice, our algorithm uses quantum amplitude amplification~\cite{BHM+00} to quadratically boost the probability of success if (\ref{eq:Psucc}) is small.

The complexity of the algorithm is determined by the number of quantum operations needed in the gradient calculation.  Since the evaluation of the energy requires a number of operations that, up to logarithmic factors, scales linearly with the total number of edges in the model the combined cost of estimating the gradient is
\begin{equation}
\tilde{O}\left(N_{\rm train} E (\sqrt{\kappa} + \max_{x\in x_{\rm train}} \sqrt{\kappa_x})\right),
\end{equation}
here $\kappa_x$ is the value of $\kappa$ corresponding to the case where the visible units are constrained to be $x$.	The cost of estimating $Q(v,h)$ and $Z_{\rm MF}$ is $\tilde{O}(E)$ (see~\sm) and thus does not asymptotically contribute to the cost.
In contrast, the number of operations required to classically estimate the gradient using greedy layer--by--layer optimization~\cite{Ben09} scales as
\begin{equation}
\tilde{O}(N_{\rm train} \ell E),
\end{equation}
where $\ell$ is the number of layers in the dRBM and $E$ is the number of connections in the BM.  Assuming that $\kappa$ is a constant, the quantum sampling approach provides an asymptotic advantage for training deep networks.  We provide numerical evidence in the \sm showing that $\kappa$ can often be made constant by increasing $n_h$ and the regularization parameter $\lambda$.

The number of qubits required by our algorithm is minimal compared to existing quantum machine learning algorithms~\cite{ABG06,LMR13,RML13,QKS15}.  This is because the training data does not need to be stored in a quantum database, which would otherwise require $\tilde{O}(N_{\rm train})$ logical qubits~\cite{NC00,GLM08}.  Rather, if $\mathcal{P}(v,h)$ is computed with $\lceil\log(1/\mathcal{E})\rceil$ bits of precision and can be accessed as an oracle then only $$O(n_h + n_v + \log(1/\mathcal{E}))$$
logical qubits are needed for the GEQS algorithm. The number of qubits required will increase if $\mathcal{P}(v,h)$ is computed using reversible operations, but recent developments in quantum arithmetic can substantially reduce such costs~\cite{WR14}.

Furthermore, the exact value of $\kappa$ need not be known.  If a value of $\kappa$ is chosen that does not satisfy (\ref{eq:kappadef}) for all configurations then our algorithm will still be able to approximate the gradient  if $\mathcal{P}(v,h)$ is clipped to the interval $[0,1]$.  The algorithm can therefore always be made efficient, at the price of introducing errors in the resultant probability distribution, by holding $\kappa$ fixed as the size of the BM increases.  These errors emerge because the state preparation algorithm will under--estimate the relative probability of configurations that violate~(\ref{eq:kappadef}); however, if the sum of the probabilities of these violations is small then a simple continuity argument reveals that the fidelity of the approximate Gibbs state and the correct state is high.  In particular, if we define ``bad'' to be the set of configurations that violate~(\ref{eq:kappadef}) then the continuity argument shows that if $$\sum_{(v,h)\in {\rm bad}} P(v,h)\le \epsilon,$$ then the fidelity of the resultant state with the Gibbs state is at least $1-\epsilon$.  This is formalized in the \sm.

Our algorithms are not expected to be both exact and efficient~\emph{for all BMs}.
If they were then they could be used to learn ground--state energies of non--planar Ising models, implying that ${\rm NP} \subseteq {\rm BQP}$, which is widely believed to be false.  Therefore BMs exist for which our algorithm will fail to be efficient and exact, modulo complexity theoretic assumptions.  It is unknown how common these hard examples are in practice; however, they are unlikely to be commonplace because of the observed efficacy of the MF approximation for trained BMs~\cite{Jor99,WH02,Tie08,SH09} and because the weights used in trained models tend to be small.

\begin{table}[t!]
\begin{tabular}{|c|c|c|c|}
\hline
 & Operations & Qubits & Exact  \\
\hline
ML & $\tilde{O}(N_{\rm train}2^{n_v + n_h})$ & $0$ & Y \\
CD--k & $\tilde{O}(N_{\rm train} \ell Ek)$ & $0$ & N  \\
GEQS & $\tilde{O}(N_{\rm train} E(\sqrt{\kappa} + \max_x \sqrt{\kappa_x}))$ & $O(n_h+n_v+ \log(1/\mathcal{E}))$ & Y  \\
GEQAE & $\tilde{O}(\sqrt{N_{\rm train}} E^2(\sqrt{\kappa} + \max_x \sqrt{\kappa_x}))$ & $O(n_h+n_v+ \log(1/\mathcal{E}))$ & Y  \\
GEQAE (QRAM) & $\tilde{O}(\sqrt{N_{\rm train}} E^2(\sqrt{\kappa} + \max_x \sqrt{\kappa_x}))$ & $O(N_{\rm train}+n_h+n_v+ \log(1/\mathcal{E}))$ & Y  \\
\hline
\end{tabular}
\caption{The resource scalings of our quantum algorithms for a dRBM with $\ell$ layers, $E$ edges, and $N_{\rm train}$ training vectors.  An algorithm is exact if sampling is the only source of error.  GEQS and GEQAE are not exact if (\ref{eq:kappadef}) is violated.  We assume QRAM allows operations on different qubits to be executed simultaneously at unit cost. \label{tab:cost}}
\end{table}

\section*{GEQAE Algorithm}
New forms of training, such as our GEQAE algorithm, are possible in cases where the training data is provided via a quantum oracle, allowing access to the training data in superposition rather than sequentially.  The idea behind the GEQAE algorithm is to leverage the data superposition by amplitude estimation~\cite{BHM+00}, which leads to a quadratic reduction in the variance in the estimated gradients over the GEQS algorithm.  GEQAE consequently leads to substantial performance improvements for large training sets.  Also, allowing the training data to be accessed quantumly allows it to be pre--processed using quantum clustering and data processing algorithms~\cite{ABG06,ABG07,LMR13,RML13,QKS15}.

The quantum oracle used in GEQAE abstracts the access model for the training data.  The oracle can be thought of as a stand--in for a quantum database or an efficient quantum subroutine that generates the training data (such as a pre--trained quantum Boltzmann machine or quantum simulator).  As the training data must be directly (or indirectly) stored in the quantum computer, GEQAE typically requires more qubits than GEQS; however, this is mitigated by the fact that quantum superposition allows training over the entire set of training vectors in one step as opposed to learning over each training example sequentially.  This allows the gradient to be accurately estimated while accessing the training data at most $O(\sqrt{N_{\rm train}})$ times.

Let $U_O$ be a quantum oracle that for any index $i$ has the action
\begin{equation}
U_O|i\rangle |{y}\rangle := |i\rangle |{y}\oplus {x}_i\rangle,
\end{equation}
where $x_i$ is a training vector.  This oracle can be used to prepare the visible units in the state of the $i^{\rm th}$ training vector.
A single query to this oracle then suffices to prepare a uniform superposition over all of the training vectors, which then can be converted into
\begin{equation}
\frac{1}{\sqrt{N_{\rm train}}}\sum_{i,h} \sqrt{Q(X_i,h)}|i\rangle|x_i\rangle |{h}\rangle \left(\sqrt{1-\mathcal{P}(x_i,h)}|{0}\rangle+\sqrt{\mathcal{P}(x_i,h)}|{1}\rangle\right),\label{eq:stateprep2}
\end{equation}
 by repeating the state preparation
method given in (\ref{eq:stateprep}).

GEQAE computes expectations such as $\langle v_i h_j\rangle$ over the data and model by estimating (a) $P(1)$, the probability of measuring the right--most qubit in~(\ref{eq:stateprep2}) to be $1$ and (b) $P(11)$, the probability of measuring $v_i = h_j=1$ and the right--most qubit in~(\ref{eq:stateprep2}) to be $1$.  It then follows that
\begin{equation}
\langle v_i h_j \rangle  = \frac{P(11)}{P(1)}.
\end{equation}
These two probabilities can be estimated by sampling, but a more efficient method is to learn them using amplitude estimation~\cite{BHM+00} --- a quantum algorithm that uses phase estimation on Grover's algorithm to directly output these probabilities in a qubit string.  If we demand the sampling error to scale as $1/\sqrt{N_{\rm train}}$ (in rough analogy to the previous case) then the query complexity of GEQAE is
\begin{equation}
\tilde{O}\left(\sqrt{N_{\rm train}} E({\kappa+\max_x \kappa_x}) \right).\label{eq:queryAE}
\end{equation}
Each energy calculation requires $\tilde{O}(E)$ arithmetic operations, therefore (\ref{eq:queryAE}) gives that the number of non--query operations scales as
\begin{equation}
\tilde{O}\left(\sqrt{N_{\rm train}} E^2({\kappa+\max_x \kappa_x}) \right).
\end{equation}

If the success probability is known to within a constant factor then amplitude amplification~\cite{BHM+00} can be used to boost the success probability prior to estimating it.  The original success probability is then computed from the amplified probability.    This reduces the query complexity of GEQAE to
\begin{equation}
\tilde{O}\left(\sqrt{N_{\rm train}} E(\sqrt{\kappa}+\max_x \sqrt{\kappa_x}) \right).\label{eq:kappaAEAA}
\end{equation}
  GEQAE is therefore preferable to GEQS if $\sqrt{N_{\rm train}}\gg E$.



\section*{Parallelizing Algorithms}
Greedy CD--$k$ training is embarassingly parallel, meaning that almost all parts of the algorithm can be distributed over parallel processing nodes.  However, the $k$ rounds of sampling used to train each layer in CD--$k$ cannot be easily parallelized.  This means that simple but easily parallelizable models (such as GMMs) can be preferable in some circumstances~\cite{HDY+12}.  In contrast, GEQS and GEQAE can leverage the parallelism that is anticipated in a fault--tolerant quantum computer to train a dRBM much more effectively.  To see this, note that the energy  is the sum of the energies of each layer, which can be computed in depth $\log(M)=O(\max(n_v,n_h) \log(\max(n_v,n_h)))$ (see Figure~\ref{fig:BMdiag}) and summed in depth $O(\log(\ell))$.  The $O(\sqrt{\kappa+\max_x \kappa_x})$ MF state preparations can be executed simultaneously and the correct sample(s) located via a log--depth calculation.  The depth of GEQS is therefore
\begin{equation}
O\left(\log([\kappa+\max_x \kappa_x] M\ell N_{\rm train})\right).
\end{equation}
Since each of the derivatives output by GEQAE can be computed independently, the depth of GEQAE is
\begin{equation}
O\left(\sqrt{N_{\rm train}[\kappa+\max_x \kappa_x]}\log(M\ell) \right).
\end{equation}
The depth can be reduced, at the price of increased circuit size, by dividing the training set into mini--batches and averaging the resultant derivatives.

Training using $k$--step contrastive divergence (CD-$k$) requires depth
\begin{equation}
O(k\ell^2 \log(MN_{\rm train})).
\end{equation}
The $O(\ell^2)$ scaling arises because CD-$k$ is a feed--forward algorithm, whereas GEQS and GEQAE are not.

\section*{Numerical Results}
We address the following questions regarding the behavior of our algorithms:
\begin{enumerate}
\item What are typical values of $\kappa$?
\item How do models trained using CD-$1$ differ from those trained with GEQS and GEQAE?
\item Do full BMs yield substantially better models than dRBMs?
\end{enumerate}

To answer these questions, $P(v,h)$ and $O_{\rm ML}$ need to be computed classically, requiring time that grows exponentially with $\max\{n_v,n_h\}$.
Computational restrictions therefore severely limit the size of the models that we can study through numerical experiments.
In practice, we are computationally limited to models with at most $20$ units.
We train the following dRBMs with $\ell \in \{2,3\}$ layers, $n_h \in \{2,\ldots,8\}$ hidden units, and $n_v\in \{4,\ldots, 12 \}$ visible units.

\begin{figure}
\includegraphics[width=0.9\linewidth]{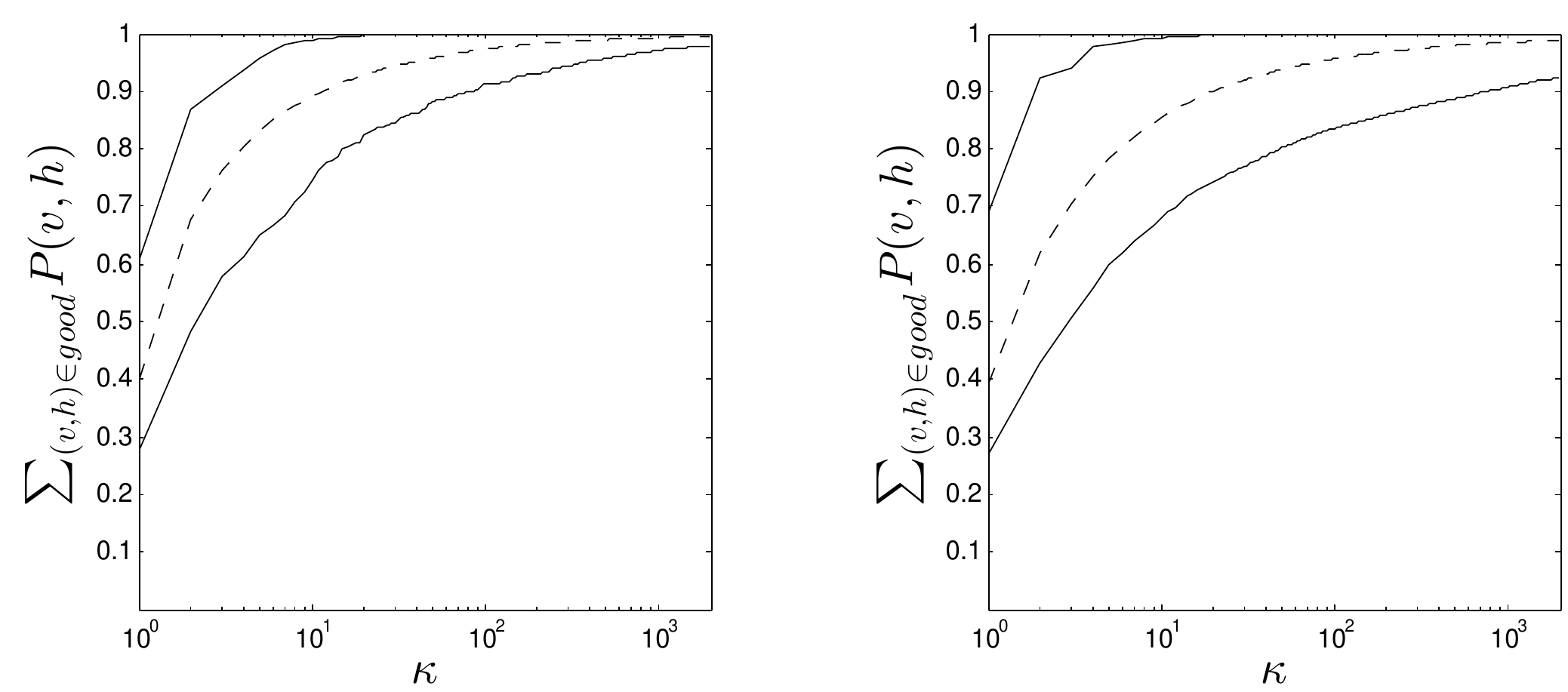}
\caption{Probability mass such that $\mathcal{P}(v,h)\le 1$ vs $\kappa$ for RBMs trained on~(\ref{eq:4datavectors}) with $n_h=8$ and $n_v =6$ (left) and $n_v=12$ (right).  Dashed lines give the mean value; solid lines give a $95\%$ confidence interval.\label{fig:kappacomb}}
\end{figure}

Large-scale traditional data sets for benchmarking machine learning, such as MNIST~\cite{lecun1998mnist}, are impractical here due to computational limitations.  Consequently, we focus on synthetic training data consisting of four distinct functions:
\begin{eqnarray}
{[x_1]_j} &=& 1 \mbox{ if } j \le n_v/2  \mbox{ else } 0 \nonumber\\
{[x_2]_j} & = & j \mbox{ mod } 2\label{eq:4datavectors},
\end{eqnarray}
as well as their bitwise negations.
We add Bernoulli noise $\mathcal{N}\in [0,0.5]$ to each of the bits in the bit string to increase the size of the training sets.  In particular, we take each of the four patterns in~(\ref{eq:4datavectors}) and flip each bit with probability $\mathcal{N}$.  We use $10,000$ training examples in each of our numerical experiments; each vector contains $4,\ldots,12$ binary features.
Our task is to infer a generative model for these four vectors.
We provide numerical experiments on sub--sampled MNIST digit data in the \sm.  The results are qualitatively similar.

\begin{figure}[t!]
\centering
\includegraphics[width=0.5\linewidth]{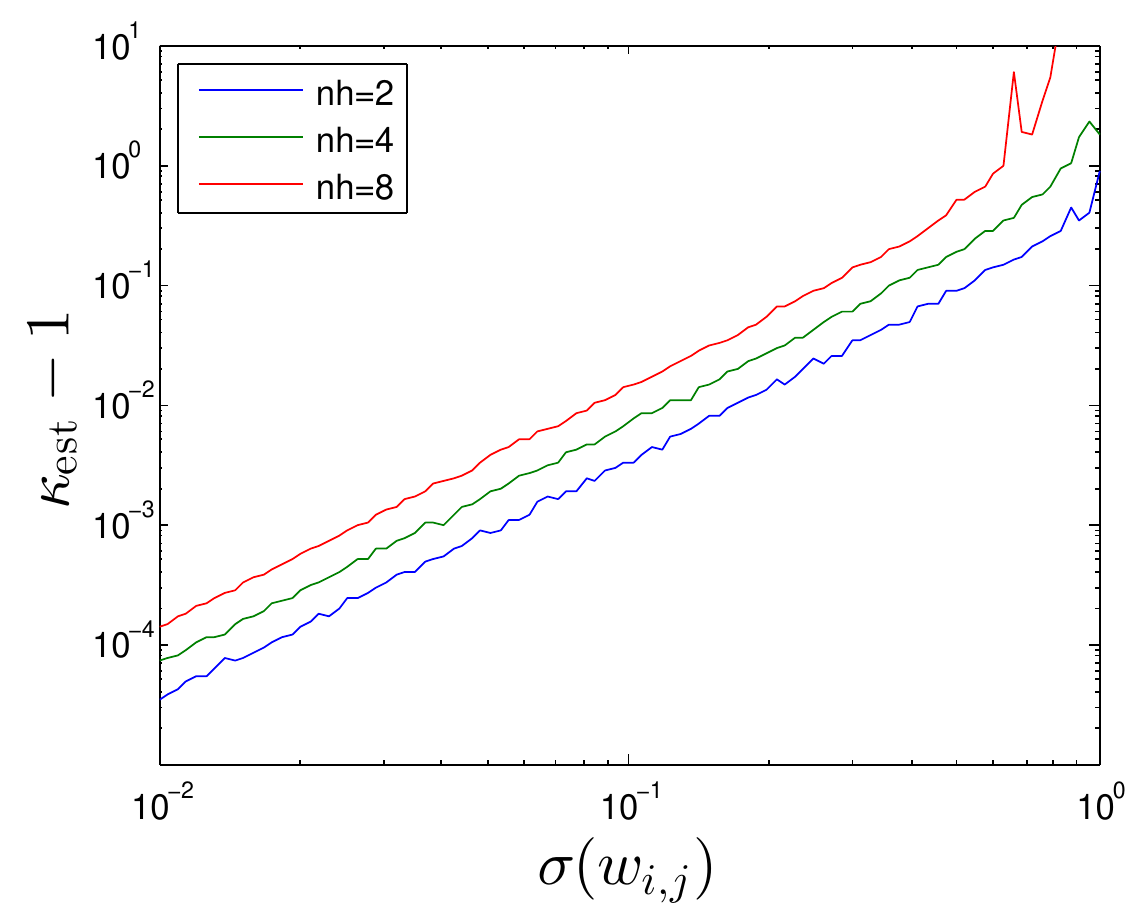}
\caption{Scaling of estimated $\kappa$ as a function of the standard deviation of weights in synthetic RBMs with $n_v=4$ and Gaussian weights with zero mean and variance $\sigma^2(w_{i,j})$.  The biases are set to be drawn from a Gaussian with zero mean and unit variance. \label{fig:kappa_scale}}
\end{figure}

Figure~\ref{fig:kappacomb} shows that doubling the number of visible units does not substantially increase $\kappa$  for this data set (with $\mathcal{N}=0$), despite the fact that their Hilbert space dimensions differ by a factor of $2^6$.  This illustrates that $\kappa$ primarily depends  on  the quality of the MF approximation, rather than $n_v$ and $n_h$.  Similar behavior is observed for full BMs, as shown in the \sm.  Furthermore, $\kappa\approx 1000$ typically results in a close approximation to the true Gibbs state. Although $\kappa=1000$ is not excessive, we introduce ``hedging strategies'' in the \sm that can reduce $\kappa$ to roughly $50$.


We further examine the scaling of $\kappa$ for random (untrained) RBMs via
\begin{equation}
\kappa_{\rm est} = \sum_{v,h} P^2(v,h)/Q(v,h).
\end{equation}
Figure~\ref{fig:kappa_scale} shows that for small random RBMs,  $\kappa -1\in O(\sigma^2(w_{i,j})E)$ for $\sigma^2( w_{i,j}) E\ll 1$.

This leads to the second issue: determining the distribution of weights for actual Boltzmann machines.  Figure~\ref{fig:sigma2_simple} shows that for large RBMs trained using contrastive divergence have weights that tend to rapidly shrink as $n_h$ is increased.  For $\mathcal{N}=0$, the empirical scaling is $\sigma^2 \in O(E^{-1})$ which suggests that $\kappa-1$ will not diverge as $n_h$ grows.  Although taking $\mathcal{N}=0.2$ reduces $\sigma^2$ considerably, the scaling is also reduced.  This may be a result of regularization having different effects for the two training sets.  In either case, these results coupled with those of Figure~\ref{fig:kappa_scale} suggest that $\kappa$ should be manageable for large networks.

\begin{figure}[t!]
\centering
\includegraphics[width=0.5\linewidth]{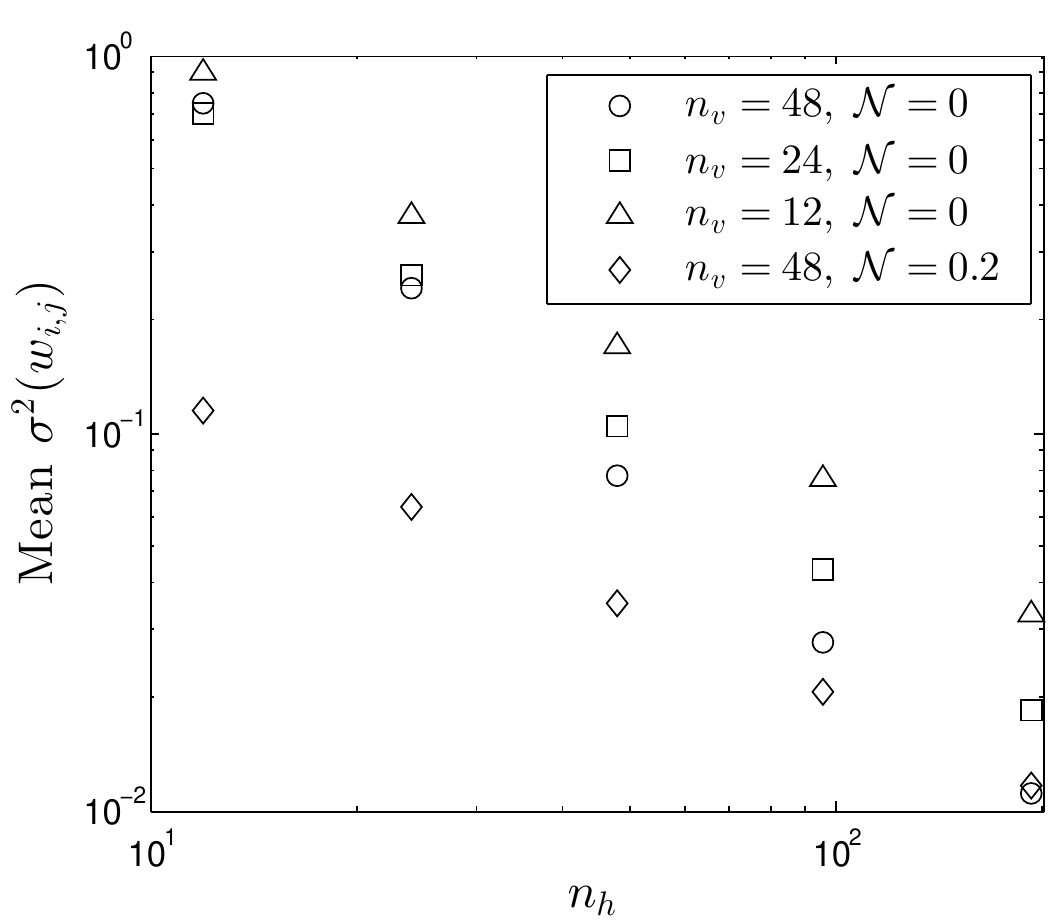}
\caption{Standard deviation of weights for large RBMs trained on~(\ref{eq:4datavectors}) using contrastive divergence with $\lambda=0.01$. \label{fig:sigma2_simple}}
\end{figure}


We assess the benefits of GEQS and GEQAE by comparing the average values of $O_{\rm ML}$ found under contrastive divergence and under our quantum algorithm, for dRBMs.  The difference between the optima found is significant for even small RBMs; the differences for deep networks can be on the order of $10\%$.  The data in Table~\ref{tab:3layer} shows that ML-training leads to substantial improvements in the quality of the resultant models.  We also observe that contrastive divergence can outperform gradient descent on the ML objective in highly constrained cases.  This is because the stochastic nature of the contrastive divergence approximation makes it less sensitive to local minima.
\begin{table}[tp]
\centering
\begin{tabular}{|c|c|c|c|c|c|}
\hline
$n_v$ & $n_{h1}$ & $n_{h2}$ & CD & ML&$\%$ Improvement \\
\hline
$6$ & $2$ &$2$ &$-2.7623$ & $-2.7125$&$1.80$\\
$6$ & $4$ &$4$ &$-2.4585$ & $-2.3541$&$4.25$\\
$6$ & $6$ &$6$ &$-2.4180$ & $-2.1968$&$9.15$\\
\hline
$8$ & $2$ &$2$ &$-2.8503$ & $-3.5125$&$-23.23$\\
$8$ & $4$ &$4$ &$-2.8503$ & $-2.6505$&$7.01$\\
$8$ & $6$ &$4$ &$-2.7656$ & $-2.4204$&$12.5$\\
\hline
$10$&$2$&$2$&$-3.8267$ &$-4.0625$&$-6.16$\\
$10$&$4$&$4$&$-3.3329$ &$-2.9537$&$11.38$\\
$10$&$6$&$4$&$-2.9997$ &$-2.5978$&$13.40$\\
\hline
\end{tabular}
\caption{Mean values of $O_{\rm ML}$ found by greedy training using contrastive divergence and gradient descent optimization of $O_{\rm ML}$ for $3$--layer dRBM with $\mathcal{N}=0$.\label{tab:3layer}}
\end{table}

The modeling capacity of a full BM can significantly outperform a dRBM in terms of the quality of the objective function.  In fact, we show in the \sm that a full Boltzmann machine with $n_v=6$ and $n_h=4$ can achieve $O_{\rm ML} \approx -1.84$.  dRBMs with comparable numbers of edges result in $O_{\rm ML}\approx -2.3$ (see Table~\ref{tab:3layer}), which is $25\%$ less than the full BM.  Since our quantum algorithms can efficiently train full BMs in addition to dRBMs, the quantum framework enables forms of machine learning that are not only richer than classically tractable methods but also potentially lead to improved models of the data.

\section*{Conclusions}
A fundamental result of our work is that training Boltzmann machines can be reduced to a problem of quantum state preparation.  This state preparation process notably does not require the use of contrastive divergence approximation or assumptions about the topology of the graphical model. We show that the quantum algorithms not only lead to significantly improved models of the data, but also provide a more elegant framework in which to think about training BMs.  This framework enables the wealth of knowledge developed in quantum information and condensed matter physics on preparing and approximating Gibbs states to be leveraged during training.

Our quantum deep learning framework enables the refining of MF approximations into states that are close (or equivalent to) the desired Gibbs state.  This state preparation method allows a BM to be trained using a number of operations that does not explicitly depend on the number of layers in a dRBM.  It also allows a quadratic reduction in the number of times the training data must be accessed and enables full Boltzmann machines to be trained.  Our algorithms can also be better parallelized over multiple quantum processors, addressing a major shortcoming of deep learning~\cite{HDY+12}.

While numerical results on small examples are encouraging in advance of having a scalable quantum computer, future experimental studies using quantum hardware will be needed to assess the generalization performance of our algorithm.  Given our algorithm's ability to provide better gradients than contrastive divergence, it is natural to expect that it will perform well in that setting by using the same methodologies currently used to train deep Boltzmann machines~\cite{Ben09}.  Regardless, the myriad advantages offered by quantum computing to deep learning not only suggests a significant near-term application of a quantum computer but also underscores the value of thinking about machine learning from a quantum perspective. 

\appendix

\section{Quantum algorithm for state preparation}\label{sec:QSTATE}

We begin by showing how quantum computers can draw unbiased samples from the Gibbs distribution, thereby allowing the probabilities $P(v,h)$ to be computed by sampling (or by quantum sampling). The idea behind our approach is to prepare a quantum distribution that approximates the ideal probability distribution over the model or data.  This approximate distribution is then refined using rejection sampling into a quantum distribution that is, to within numerical error, the target probability distribution~\cite{ORR13}.  
If we begin with a uniform prior over the amplitudes of the Gibbs state, then preparing the state via quantum rejection sampling is likely to be inefficient.  This is because the success probability depends on the ratio of the partition functions of the initial state and the Gibbs state~\cite{PW09}, which in practice is exponentially small for machine learning problems.  Instead, our algorithm uses a mean--field approximation, rather than a uniform prior, over the joint probabilities in the Gibbs state.  We show numerically that this extra information can be used to boost the probability of success to acceptable levels.
The required expectation values can then be found by sampling from the quantum distribution.  We show that the number of samples needed to achieve a fixed sampling error can be quadratically reduced by using a quantum algorithm known as amplitude estimation~\cite{BHM+00}.  

We first discuss the process by which the initial quantum distribution is refined into a quantum coherent Gibbs state (often called a coherent thermal state or CTS).  We then discuss how mean--field theory, or generalizations thereof, can be used to provide suitable initial states for the quantum computer to refine into the CTS.  We assume in the following that all units in the Boltzmann machine are binary valued. Other valued units, such as Gaussian units, can be approximated within this framework by forming a single unit out of a string of several qubits.

First, let us define the mean--field approximation to the joint probability distribution to be $Q(v,h)$.  For more details on the mean--field approximation, see~\sec{meanfield}.  We also use the mean--field distribution to compute a variational approximation to the partition functions needed for our algorithm.  These approximations can be efficiently calculated (because the probability distribution factorizes) and are defined below.
\begin{definition}
Let $Q$ be the mean--field approximation to the Gibbs distribution $P=e^{-E}/Z$ then
$$Z_{Q}:=\sum_{v,h} Q(v,h) \log\left(\frac{e^{-E(v,h)}}{Q(v,h)}\right).$$
Furthermore for any $x\in x_{\rm train}$ let $Q_x$ be the mean--field approximation to the Gibbs distribution found for a Boltzmann machine with the visible units clamped to $x$, then
$$Z_{x,Q}:=\sum_{h} Q_x(x,h) \log\left(\frac{e^{-E(x,h)}}{Q_x(x,h)}\right).$$ \label{def:Zmf}
\end{definition}

In order to use our quantum algorithm to prepare $P$ from $Q$ we need to know an upper bound, $\kappa$, on the ratio of the approximation $P(v,h)\approx e^{-E(v,h)}/Z_{Q}$ to $Q(v,h)$.  We formally define this below.
\begin{definition}
Let $\kappa>0$ be a constant that is promised to satisfy for all visible and hidden configurations $(v,h)$
\begin{equation}
\frac{e^{-E(v,h)}}{Z_{Q}} \le \kappa Q(v,h),\label{eq:kappacond}
\end{equation}
where $Z_{Q}$ is the approximation to the partition function given in~\defn{Zmf}.
\label{def:kappacond}
\end{definition}
We also define an analogous quantity appropriate for the case where the visible units are clamped to one of the training vectors.
\begin{definition}
Let $\kappa_x>0$ be a constant that is promised to satisfy for $x\in x_{\rm train}$ and all hidden configurations $h$
\begin{equation}
\frac{e^{-E(x,h)}}{Z_{{x},Q}} \le \kappa Q_x(x,h),\label{eq:kappacondx}
\end{equation}
where $Z_{{x},Q}$ is the approximation to the partition function given in~\defn{Zmf}.
\label{def:kappacondx}
\end{definition}

\begin{lemma}
Let $Q(v,h)$ be the mean--field probability distribution for a Boltzmann machine, then for all configurations of hidden and visible units we have 
\begin{equation*}
P(v,h)\le \frac{e^{-E(v,h)}}{Z_{Q}} \le \kappa Q(v,h).
\end{equation*}
\label{lem:kappacond}
\end{lemma}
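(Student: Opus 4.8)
The plan is to treat the two inequalities separately, since they are of entirely different natures. The right--hand inequality $e^{-E(v,h)}/Z_Q \le \kappa\, Q(v,h)$ is, for every configuration $(v,h)$, exactly the promise placed on $\kappa$ in \defn{kappacond}; it therefore holds by hypothesis and needs no argument. All of the content of the lemma sits in the left--hand inequality $P(v,h)\le e^{-E(v,h)}/Z_Q$, so that is where I would focus.

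First I would remove the configuration dependence from the left--hand inequality. Writing $P(v,h)=e^{-E(v,h)}/Z$ and cancelling the strictly positive factor $e^{-E(v,h)}$ shared by both sides, the inequality is equivalent --- uniformly in $(v,h)$ --- to the single scalar statement $Z_Q\le Z$, where $Z=\sum_{v,h}e^{-E(v,h)}$ is the true partition function and $Z_Q>0$ is the mean--field estimate of \defn{Zmf}. It therefore suffices to prove $Z_Q\le Z$.

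For that bound I would use the variational (Gibbs') inequality. By concavity of the logarithm, Jensen's inequality applied to the sum defining $Z_Q$ gives
$$\sum_{v,h} Q(v,h)\log\frac{e^{-E(v,h)}}{Q(v,h)} \le \log\left(\sum_{v,h} Q(v,h)\frac{e^{-E(v,h)}}{Q(v,h)}\right)=\log\left(\sum_{v,h} e^{-E(v,h)}\right)=\log Z,$$
and the gap between the two sides is precisely the relative entropy ${\rm KL}(Q\|P)=\sum_{v,h}Q(v,h)\log(Q(v,h)/P(v,h))\ge 0$, as one sees after substituting $e^{-E(v,h)}=Z\,P(v,h)$ and using $\sum_{v,h}Q(v,h)=1$. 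Exponentiating, the mean--field partition estimate obeys $Z_Q\le Z$, with equality exactly when ${\rm KL}(Q\|P)=0$, i.e. when the product distribution $Q$ equals $P$; this reproduces the fact quoted from \cite{Jor99}. Chaining $Z_Q\le Z$ back through the cancellation recovers $P(v,h)\le e^{-E(v,h)}/Z_Q$ and completes the proof.

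The only place the argument is more than bookkeeping is this last step, where some care is needed with the log scale: Jensen naturally produces a bound of the form (variational functional) $\le\log Z$, so one must be explicit that $Z_Q$ enters the lemma as the associated approximation to the partition function $Z$ itself, whereupon the inequality reads $Z_Q\le Z$ with the stated equality condition. Once this convention is fixed, the equality case $Q=P$ drops out of the equality condition for Jensen's inequality (constancy of $e^{-E(v,h)}/Q(v,h)$ across the support), and no further estimates are required.
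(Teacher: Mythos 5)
Your proposal is correct and follows essentially the same route as the paper's own proof: the right--hand inequality is dispatched as the promise in \defn{kappacond}, and the left--hand inequality is reduced to $Z_{Q}\le Z$, which both you and the paper obtain from Jensen's inequality applied to $\log Z=\log\bigl(\sum_{v,h}Q(v,h)\,e^{-E(v,h)}/Q(v,h)\bigr)$. Your added care about the log--scale convention (reading \defn{Zmf} so that $\log Z_{Q}$ is the variational functional, as the paper's own displayed equation $\sum_{v,h}Q\log(e^{-E}/Q)=\log Z_{Q}$ implicitly does) and your identification of the Jensen gap with ${\rm KL}(Q\|P)$ are sound refinements of the same argument, not a different one.
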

\begin{proof}
The mean--field approximation can also be used to provide a lower bound for the log--partition function.  For example, Jensen's inequality shows that
\begin{align}
\log\left(Z\right) &= \log\left(\sum_{v,h} \frac{Q(v,h)e^{-E(v,h)}}{Q(v,h)}\right),\nonumber\\
& \ge \sum_{v,h} Q(v,h) \log\left(\frac{e^{-E(v,h)}}{Q(v,h)}\right)=\log(Z_{Q}).\label{eq:Zbd}
\end{align}
This shows that $Z_{Q} \le Z$ and hence 
\begin{equation}
P(v,h)\le  e^{-E(v,h)}/Z_{Q},\label{eq:kappaLeftBound}
\end{equation}
where $Z_{Q}$ is the approximation to $Z$ that arises from using the mean--field distribution.
The result then follows from~\eq{kappaLeftBound} and~\defn{kappacond}.
\end{proof}
The result of~\lem{kappacond} allows us to prove the following lemma, which gives the success probability for preparing the Gibbs state from the mean--field state.

\begin{lemma}
A coherent analog of the Gibbs state for a Boltzmann machine can be prepared with a probability of success of $\frac{Z}{\kappa Z_{Q}}$.  Similarly, the Gibbs state corresponding to the
visible units being clamped to a configuration $x$ can be prepared with success probability $\frac{Z_x}{\kappa_x Z_{x, {Q}}}$.  \label{lem:succ}
\end{lemma}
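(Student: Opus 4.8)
The plan is to make rigorous the informal state--preparation argument sketched around~\eq{stateprep} and~\eq{Psucc}, and then to repeat it essentially verbatim for the clamped case. First I would prepare the mean--field state $\sum_{v,h}\sqrt{Q(v,h)}\ket{v}\ket{h}$; because $Q$ is a product distribution (see~\defn{Zmf} and the accompanying discussion of mean--field theory) this requires only single--qubit rotations, one per unit, and so is efficient. Next I would append an ancilla register and compute $\mathcal{P}(v,h)=e^{-E(v,h)}/(\kappa Z_{Q} Q(v,h))$ into it reversibly; this is possible in polynomial time since $E(v,h)$ and $Q(v,h)$ are each computable in time linear (up to logarithmic factors) in the number of edges, while $\kappa$ and $Z_{Q}$ are fixed constants. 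Controlled on the value stored in this register I would then rotate a final qubit to obtain exactly the state in~\eq{stateprep}.

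The crux is then a short calculation. Measuring the rightmost qubit, and using that by~\lem{kappacond} we have $0\le\mathcal{P}(v,h)\le 1$ for every configuration (so that the rotation $\sqrt{1-\mathcal{P}(v,h)}\ket{0}+\sqrt{\mathcal{P}(v,h)}\ket{1}$ is well defined and the state normalized), the probability of the outcome $1$ is
\begin{equation*}
\sum_{v,h} Q(v,h)\,\mathcal{P}(v,h)=\sum_{v,h}\frac{e^{-E(v,h)}}{\kappa Z_{Q}}=\frac{Z}{\kappa Z_{Q}},
\end{equation*}
where the last equality uses $\sum_{v,h}e^{-E(v,h)}=Z$. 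Conditioned on this outcome, and after uncomputing the $\mathcal{P}$ register, the residual state is proportional to $\sum_{v,h}\sqrt{Q(v,h)\mathcal{P}(v,h)}\ket{v}\ket{h}\propto\sum_{v,h}\sqrt{e^{-E(v,h)}}\ket{v}\ket{h}$, which is precisely the coherent analog of the Gibbs state, establishing the first claim.

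For the clamped case I would run the identical procedure on the hidden register alone, with the visible register fixed to $x$, replacing $Q$, $Z_{Q}$, and $\kappa$ by their clamped counterparts $Q_x$, $Z_{x,Q}$, and $\kappa_x$ from~\defn{Zmf} and~\defn{kappacondx}; the same calculation then yields the success probability $\sum_{h} Q_x(x,h)\mathcal{P}(x,h)=Z_x/(\kappa_x Z_{x,Q})$. The main obstacle is not the probability bookkeeping but ensuring that the construction is legitimate: I must verify that $\mathcal{P}(v,h)$ never exceeds $1$ (guaranteed by~\lem{kappacond}, which is exactly why it was proved first) so that the amplitudes are real and the state normalized, and that the arithmetic computing $\mathcal{P}$ is reversible so that the ancilla can be uncomputed and the $\ket{v}\ket{h}$ registers left in the clean Gibbs superposition. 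Questions of finite--precision encoding of $\mathcal{P}$ and of using amplitude amplification to boost the $Z/(\kappa Z_{Q})$ success probability are orthogonal to this lemma and can be deferred.
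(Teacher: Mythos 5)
Your proposal is correct and follows essentially the same route as the paper's proof: prepare the mean--field product state with single--qubit rotations, reversibly compute $\mathcal{P}(v,h)$ into an ancilla register, perform the controlled rotation, uncompute, and read off the success probability $\sum_{v,h}Q(v,h)\mathcal{P}(v,h)=Z/(\kappa Z_{Q})$, invoking \lem{kappacond} to guarantee $\mathcal{P}(v,h)\in[0,1]$, with the clamped case handled by the same argument using $Q_x$, $Z_{x,Q}$, and $\kappa_x$. The only cosmetic difference is that the paper uncomputes the $\mathcal{P}$ register before the measurement rather than after, which changes nothing since the operations act on disjoint registers.
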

\begin{proof}
The first step in the algorithm is to compute the mean--field parameters $\mu_i$ and $\nu_j$ using~\eq{update}.  These parameters uniquely specify the mean--field distribution $Q$.
Next the mean--field parameters are used to approximate the partition functions $Z$ and $Z_x$.
These mean--field parameters are then used to prepare a coherent analog of $Q(v,h)$, denoted as $\ket{\psi_{Q}}$, by performing a series of single--qubit rotations:
\begin{equation}
\ket{\psi_{Q}} := \prod_i R_y(2\arcsin(\sqrt{\mu_i}))\ket{0} \prod_j R_y(2\arcsin(\sqrt{\nu_j}))\ket{0}=\sum_{v,h} \sqrt{Q(v,h)}\ket{v}\ket{h} .\label{eq:psimf}
\end{equation}
The remaining steps use rejection sampling to refine this crude approximation to $\sum_{v,h} \sqrt{P(v,h)} \ket{v}\ket{h} $.

For compactness we define
\begin{equation}
\mathcal{P}(v,h) := \frac{e^{-E(v,h)}}{\kappa Z_{Q} Q(v,h)}.
\end{equation}
$\mathcal{P}(v,h)$ can be computed efficiently from the mean--field parameters and so an efficient quantum algorithm (quantum circuit) also exists to compute $\mathcal{P}(v,h)$.  \lem{kappacond} also guarantees that $0\le \mathcal{P}(v,h)\le 1$.

Since quantum operations (with the exception of measurement) are linear, if we apply the algorithm to a state $\sum_v\sum_h \sqrt{Q(v,h)}\ket{v}\ket{h}\ket{0}$ we obtain $\sum_v\sum_h \sqrt{Q(v,h)}\ket{v}\ket{h}\ket{\mathcal{P}(v,h)}$.  We then add an additional quantum bit, called an ancilla qubit, and perform a controlled rotation of the form $R_y(2 \sin^{-1}(\mathcal{P}(v,h)))$ on this qubit to enact the following transformation:
\begin{equation}
\sum_{v,h} \sqrt{Q(v,h)}\ket{v}\ket{h}\ket{\mathcal{P}(v,h)}\ket{0}\mapsto\sum_{v,h} \sqrt{Q(v,h)}\ket{v}\ket{h}\ket{\mathcal{P}(v,h)}\left(\sqrt{1-\mathcal{P}(v,h)}\ket{0}+\sqrt{\mathcal{P}(v,h)}\ket{1}\right).\label{eq:sampledist0}
\end{equation}
The quantum register that contains the qubit string $\mathcal{P}(v,h)$ is then reverted to the $\ket{0}$ state by applying the same operations used to prepare $\mathcal{P}(v,h)$ in reverse.  This process is possible because all quantum operations, save measurement, are reversible.
Since $\mathcal{P}(v,h)\in [0,1]$, then \eq{sampledist0} is a properly normalized quantum state and in turn its square is a valid probability distribution.

If the rightmost quantum bit in~\eq{sampledist0} is measured and a result of $1$ is obtained (recall that projective measurements always result in a unit vector) then the remainder of the state will be proportional to
\begin{equation}
\sum_{v,h} \sqrt{Q(v,h)\mathcal{P}(v,h)}  = \sqrt{\frac{Z}{\kappa Z_{Q}}}\sum_{v,h} \sqrt{\frac{e^{-E(v,h)}}{Z}}\ket{v}\ket{h}=\sqrt{\frac{Z}{\kappa Z_{Q}}}\sum_{v,h} \sqrt{P(v,h)} \ket{v}\ket{h},\label{eq:coeffeq}
\end{equation}
which is the desired state up to a normalizing factor.  The probability of measuring $1$ is the square of this constant of proportionality
\begin{equation}
P(1|\kappa, Z_{Q})=\frac{Z}{\kappa Z_{Q} }.
\end{equation}
Note that this is a valid probability because \lem{kappacond} gives that $\sum_{v,h} \kappa Z_{Q} Q(v,h) \ge \sum_{v,h} e^{-E(v,h)} \Rightarrow \kappa Z_Q \ge Z$.

Preparing a quantum state that can be used to estimate the expectation values over the data requires a slight modification to this algorithm.  First, for each $x\in x_{\rm train}$ needed for the expectation values, we replace $Q(v,h)$ with the constrained mean--field distribution $Q_x(x,h)$.  Then using this data the quantum state
\begin{equation}
\sum_h \sqrt{Q_x(x,h)} \ket{x} \ket{h},
\end{equation}
can be prepared.  We then follow the exact same protocol using $Q_x$ in place of $Q$, $Z_x$ in place of $Z$, and $Z_{x,{Q}}$ in place of $Z_{Q}$.  The success probability of this algorithm is 
\begin{equation}
P(1|\kappa, Z_{x,{Q}})=\frac{Z_{x}}{\kappa_x Z_{x,{Q}} }.
\end{equation}  
\end{proof}

\begin{algorithm}[t!]
\rule{\linewidth}{1pt}
\begin{algorithmic}
\Require Model weights $w$, visible biases $b$, hidden biases $d$, edge set $E$ and $\kappa$.
\Ensure Quantum state that can be measured to obtain the correct Gibbs state for a (deep) Boltzmann machine.
\vskip0.2em
\hrule
\vskip0.2em
\Function{qGenModelState}{$w,b,d,E,\kappa$}
\State Compute vectors of mean-field parameters $\mu$ and $\nu$ from $w$, $b$ and $d$.
\State Compute the mean-field partition function $Z_{Q}$.
\State Prepare state $\sum_{v,h} \sqrt{Q(v,h)} \ket{v}\ket{h} := \left(\prod_{i=1}^{n_v} e^{-i \sqrt{\mu_i} Y} \ket{0} \right)\left(\prod_{j=1}^{n_h} e^{-i \sqrt{\nu_j} Y} \ket{0} \right)$
\State Add qubit register to store energy values and initialize to zero: $\sum_{v,h} \sqrt{Q(v,h)} \ket{v}\ket{h} \rightarrow \sum_{v,h} \sqrt{Q(v,h)} \ket{v}\ket{h}\ket{0}$
\For{$i=1:n_v$}
\State $\sum_{v,h} \sqrt{Q(v,h)} \ket{v}\ket{h}\ket{E(v,h)} \rightarrow \sum_{v,h} \sqrt{Q(v,h)} \ket{v}\ket{h}\ket{E(v,h)+v_ib_i+\ln\left(\mu_i^{v_i}(1-\mu_i)^{1-v_i} \right)}$.
\State \Comment{Here an energy penalty is included to handle the bias and the visible units contribution to $Q(v,h)^{-1}$.}
\EndFor
\For{$j=1:n_h$}
\State $\sum_{v,h} \sqrt{Q(v,h)} \ket{v}\ket{h}\ket{E(v,h)} \rightarrow \sum_{v,h} \sqrt{Q(v,h)} \ket{v}\ket{h}\ket{E(v,h)+h_jd_j+\ln\left(\nu_j^{h_j}(1-\nu_j)^{1-h_j} \right)}$.
\EndFor
\For{$(i,j)\in E$}
\State $\sum_{v,h} \sqrt{Q(v,h)} \ket{v}\ket{h}\ket{E(v,h)} \rightarrow \sum_{v,h} \sqrt{Q(v,h)} \ket{v}\ket{h}\ket{E(v,h)+v_ih_jw_{i,j} }$.
\EndFor
\State $\sum_{v,h} \sqrt{Q(v,h)} \ket{v}\ket{h}\ket{E(v,h)}\rightarrow\sum_{v,h} \sqrt{Q(v,h)} \ket{v}\ket{h}\ket{E(v,h)}\left(\sqrt{\frac{e^{-E(v,h)}}{Z_{Q} \kappa}}\ket{1} +\sqrt{1-\frac{e^{-E(v,h)}}{Z_{Q} \kappa}}\ket{0} \right).$
\EndFunction
\end{algorithmic}
\rule{\linewidth}{1pt}
\caption{\label{alg:model} Quantum algorithm for generating states that can be measured to estimate the expectation values over the model.}
\end{algorithm}

\begin{algorithm}
\rule{\linewidth}{1pt}
\begin{algorithmic}
\Require Model weights $w$, visible biases $b$, hidden biases $d$, edge set $E$ and $\kappa_x$, training vector $x$.
\Ensure Quantum state that can be measured to obtain the correct Gibbs state for a (deep) Boltzmann machine with the visible units clamped to $x$.
\vskip0.2em
\hrule
\vskip0.2em
\Function{qGenDataState}{$w,b,d,E,\kappa_x,x$}
\State Compute vectors of mean-field parameters $\nu$ from $w$, $b$ and $d$ with visible units clamped to $x$.
\State Compute the mean-field partition function $Z_{x,{Q}}$.
\For{$i=1:n_v$}
\State $\sum_{h} \sqrt{Q(x,h)} \ket{x}\ket{h}\ket{E(x,h)} \rightarrow \sum_{h} \sqrt{Q(x,h)} \ket{x}\ket{h}\ket{E(x,h)+x_ib_i}$.
\EndFor
\For{$j=1:n_h$}
\State $\sum_{h} \sqrt{Q(x,h)} \ket{x}\ket{h}\ket{E(x,h)} \rightarrow \sum_{h} \sqrt{Q(x,h)} \ket{x}\ket{h}\ket{E(x,h)+h_jd_j+\ln\left(\nu_j^{h_j}(1-\nu_j)^{1-h_j} \right)}$.
\EndFor
\For{$(i,j)\in E$}
\State $\sum_{h} \sqrt{Q(x,h)} \ket{x}\ket{h}\ket{E(x,h)} \rightarrow \sum_{h} \sqrt{Q(x,h)} \ket{x}\ket{h}\ket{E(x,h)+x_ih_jw_{i,j} }$.
\EndFor
\State $\sum_{h} \sqrt{Q(x,h)} \ket{x}\ket{h}\ket{E(x,h)}\rightarrow\sum_{h} \sqrt{Q(x,h)} \ket{x}\ket{h}\ket{E(x,h)}\left(\sqrt{\frac{e^{-E(x,h)}}{Z_{x,{Q}} \kappa_x}}\ket{1} +\sqrt{1-\frac{e^{-E(x,h)}}{Z_{x,{Q}} \kappa_x}}\ket{0} \right).$
\EndFunction
\end{algorithmic}
\rule{\linewidth}{1pt}
\caption{\label{alg:data} Quantum algorithm for generating states that can be measured to estimate the expectation value over the data.}
\end{algorithm}

The approach to the state preparation problem used in~\lem{succ} is similar to that of~\cite{PW09}, with the exception that we use a mean-field approximation rather than the infinite temperature Gibbs state as our initial state.  This choice of initial state is important because the success probability of the state preparation process depends on the distance between the initial state and the target state. For machine learning applications, the inner product between the Gibbs state and the infinite temperature Gibbs state is often exponentially small; whereas we find in~\sec{kappa} that the mean--field and the Gibbs states typically have large overlaps.  

The following lemma is a more general version of~\lem{succ} that shows that if a insufficiently large value of $\kappa$ is used then the state preparation algorithm can still be employed, but at the price of reduced fidelity with the ideal coherent Gibbs state.

\begin{lemma}
If we relax the assumptions of~\lem{succ} such that $\kappa Q(v,h)\ge e^{-E(v,h)}/Z_{Q}$ for all $(v,h)\in \good$ and $\kappa Q(v,h) < e^{-E(v,h)}/Z_{Q}$ for all $j\in \bad$ and $\sum_{(v,h)\in \bad} \left(e^{-E(v,h)}-Z_{Q} \kappa Q(v,h)\right) \le \epsilon Z$, then a state can be prepared that has fidelity at least $1-\epsilon$ with the target Gibbs state with probability at least $Z(1-\epsilon)/(\kappa Z_{Q})$.\label{lem:kappa}
\end{lemma}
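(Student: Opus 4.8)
The plan is to reuse the construction in the proof of \lem{succ} essentially unchanged, altering only the single rejection step so that it remains well defined when~\eq{kappacond} is violated. Concretely, I would run the same procedure but replace the controlled rotation angle $\mathcal{P}(v,h)$ by its clipped version $\tilde{\mathcal{P}}(v,h):=\min\{\mathcal{P}(v,h),1\}$, which equals $\mathcal{P}(v,h)=e^{-E(v,h)}/(\kappa Z_{Q}Q(v,h))$ on $\good$ and equals $1$ on $\bad$. Writing $A:=Z/(\kappa Z_{Q})$, conditioning the resulting state on the ancilla being $\ket{1}$ leaves the unnormalized vector $\ket{\tilde\psi}=\sum_{v,h}\sqrt{Q(v,h)\tilde{\mathcal{P}}(v,h)}\ket{v}\ket{h}$, whose squared amplitude is $A\,P(v,h)$ on $\good$ and $Q(v,h)$ on $\bad$. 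The success probability is then the squared norm $P_{\rm success}=\braket{\tilde\psi}{\tilde\psi}=A\sum_{\good}P(v,h)+\sum_{\bad}Q(v,h)$.

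For the probability bound I would convert the hypothesis into a lower bound on the ``bad mass'' $\sum_{\bad}Q(v,h)$. Dividing $\sum_{\bad}(e^{-E(v,h)}-Z_{Q}\kappa Q(v,h))\le\epsilon Z$ by $Z$ and using $e^{-E(v,h)}/Z=P(v,h)$ together with $Z_{Q}\kappa/Z=1/A$ rearranges to $\sum_{\bad}Q(v,h)\ge A\bigl(\sum_{\bad}P(v,h)-\epsilon\bigr)$. Substituting this into $P_{\rm success}$ and recombining the good and bad contributions makes the probabilities sum to $1$, giving $P_{\rm success}\ge A(1-\epsilon)=Z(1-\epsilon)/(\kappa Z_{Q})$, as claimed.

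The fidelity is the overlap of the normalized output with the ideal state $\ket{\psi_{G}}=\sum_{v,h}\sqrt{P(v,h)}\ket{v}\ket{h}$, i.e. $F=\braket{\psi_{G}}{\tilde\psi}/\sqrt{P_{\rm success}}$, where $\braket{\psi_{G}}{\tilde\psi}=\sqrt{A}\sum_{\good}P(v,h)+\sum_{\bad}\sqrt{P(v,h)Q(v,h)}$. The key step, and the one I expect to be the crux of the argument, is controlling the bad sector: there clipping replaces the correct weight by the strictly smaller value $Q(v,h)$, so one must argue that the overlap does not collapse. This is exactly where the definition of $\bad$ helps, since $\kappa Q(v,h)<e^{-E(v,h)}/Z_{Q}$ is precisely the statement $Q(v,h)<A\,P(v,h)$, which yields the pointwise inequality $\sqrt{P(v,h)Q(v,h)}\ge Q(v,h)/\sqrt{A}$. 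Summing this over $\bad$ gives $\braket{\psi_{G}}{\tilde\psi}\ge \frac{1}{\sqrt{A}}\bigl(A\sum_{\good}P(v,h)+\sum_{\bad}Q(v,h)\bigr)=P_{\rm success}/\sqrt{A}$, so that $F\ge\sqrt{P_{\rm success}/A}\ge\sqrt{1-\epsilon}\ge 1-\epsilon$. The only subtlety beyond this is noticing that the overlap degrades only like $\sqrt{1-\epsilon}$, which still exceeds the claimed $1-\epsilon$; everything else follows the template of \lem{succ}.
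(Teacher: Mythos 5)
Your proposal is correct and follows essentially the same route as the paper's proof: the identical clipped-rotation protocol (setting the ancilla to $\ket{1}$ on the bad set), the same success-probability computation and bound, and the same key pointwise estimate on the bad sector --- your $\sqrt{P(v,h)Q(v,h)}\ge Q(v,h)/\sqrt{A}$ is exactly the paper's trick of replacing $\sqrt{Q(v,h)Z_{Q}\kappa e^{-E(v,h)}}$ by $Z_{Q}\kappa Q(v,h)$ in~\eq{trick} --- culminating in the same bound $F\ge\sqrt{1-\epsilon}\ge 1-\epsilon$. The differences are purely notational (working with $A=Z/(\kappa Z_{Q})$ and normalized probabilities rather than unnormalized sums of $e^{-E(v,h)}$).
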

\begin{proof}
Let our protocol be that used in~\lem{succ} with the modification that the rotation is only applied if $e^{-E(v,h)}/Z_{Q} \kappa \le 1$.
This means that prior to the measurement of the register that projects the state onto the success or failure branch, the state is
\begin{align}
&\sum_{(v,h)\in{\rm good}} \sqrt{Q(v,h)}\ket{v}\ket{h}\left(\sqrt{\frac{{e^{-E(v,h)}}}{Z_{Q} \kappa Q(v,h)}}\ket{1} + \sqrt{1-\frac{{e^{-E(v,h)}}}{Z_{Q} \kappa Q(v,h)}}\ket{0} \right)+ \sum_{(v,h)\in{\rm bad}} \sqrt{Q(v,h)}\ket{v}\ket{h}\ket{1}
\end{align}
The probability of successfully preparing the approximation to the state is then
\begin{equation}
\sum_{(v,h)\in {\rm good}} \frac{e^{-E(v,h)}}{\kappa Z_{Q}}+\sum_{(v,h)\in {\rm bad}} Q(v,h) =\frac{Z- (\sum_{(v,h)\in{\rm bad}}e^{-E(v,h)}-\sum_{(v,h)\in {\rm bad}} \kappa Z_{Q} Q(v,h))}{ \kappa Z_{Q}}\ge \frac{Z(1-\epsilon)}{\kappa Z_{Q}}.
\end{equation}

The fidelity of the resultant state with the ideal state $\sum_{v,h} \sqrt{e^{-E(v,h)}/Z}\ket{v}\ket{h}$ is
\begin{equation}
\frac{\sum_{(v,h)\in {\rm good}}e^{-E(v,h)}+\sum_{(v,h)\in {\rm bad}} \sqrt{Q(v,h)Z_{Q} \kappa e^{-E(v,h)}}}{\sqrt{Z(\sum_{(v,h)\in {\rm good}} e^{-E(v,h)} + \sum_{(v,h)\in {\rm bad}} Z_{Q} \kappa Q(v,h))}}\ge \frac{\sum_{(v,h)\in {\rm good}}e^{-E(v,h)}+\sum_{(v,h)\in {\rm bad}} {Z_{Q} \kappa Q(v,h)}}{\sqrt{Z(\sum_{(v,h)\in {\rm good}} e^{-E(v,h)} + \sum_{(v,h)\in {\rm bad}} Z_{Q} \kappa Q(v,h))}},\label{eq:trick}
\end{equation}
since $Q(v,h) Z_{Q} \kappa \le e^{-E(v,h)}$ for all $(v,h)\in {\rm bad}$.  Now using the same trick employed in~\eq{trick} and the assumption that $\sum_{(v,h)\in \bad} \left(e^{-E(v,h)}-Z_{Q} \kappa Q(v,h)\right) \le \epsilon Z$, we have that the fidelity is bounded below by
\begin{equation}
\frac{Z(1-\epsilon)}{Z\sqrt{1-\epsilon}} =\sqrt{1-\epsilon} \ge 1-\epsilon.
\end{equation}
\end{proof}

The corresonding algorithms are outlined in \alg{model} and \alg{data}, for preparing the state required to compute the model expectation and the data expectation, respectively.

\section{Gradient calculation by sampling}\label{sec:QSAMP}
Our first algorithm for estimating the gradients of $O_{\rm ML}$ involves preparing the Gibbs state from the mean--field state and then drawing samples from the resultant distribution in order to estimate the expectation values required in the expression for the gradient.  We refer to this algorithm as GEQS (Gradient Estimation via Quantum Sampling) in the main body. We also optimize GEQS algorithm by utilizing a quantum algorithm known as amplitude amplification~\cite{BHM+00} (a generalization of Grover's search algorithm~\cite{Gro96}) which quadratically reduces the mean number of repetitions needed to draw a sample from the Gibbs distribution using the approach in~\lem{succ} or~\lem{kappa}.  

t is important to see that the distributions that this algorithm prepares are not directly related to the mean--field distribution.  The mean field distribution is chosen because it is an efficiently computable distribution that is close to the true Gibbs distribution and thereby gives a shortcut to preparing the state.  Alternative choices, such as the uniform distribution, will ideally result in the same final distribution but may require many more operations than would be required if the mean--field approximation were used as a starting point.
We state the performance of the GEQS algorithm in the following theorem.

\begin{theorem}
There exists a quantum algorithm that can estimate the gradient of $O_{\rm ML}$ using $N_{\rm train}$ samples for a Boltzmann machine on a connected graph with $E$ edges.  The mean number of quantum operations required by algorithm to compute the gradient is
$$\tilde{O}\left(N_{\rm train}E\left(\sqrt{\kappa} +\sqrt{\max_v \kappa_v}\right)\right),$$
where $\kappa_v$ is the value of $\kappa$ that corresponds to the Gibbs distribution when the visible units are clamped to $v$ and $f\in\tilde{O}(g)$ implies $f\in O(g)$ up to polylogarithmic factors.\label{thm:sampalg}
\end{theorem}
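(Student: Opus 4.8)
The plan is to build the gradient estimator directly from the state-preparation primitive of \lem{succ}, exploiting the fact that the gradient of $O_{\rm ML}$ in \eq{logderiv} is determined entirely by the one- and two-unit correlators $\langle v_i\rangle$, $\langle h_j\rangle$, and $\langle v_i h_j\rangle$ taken over the model and over the data. The first point I would establish is that a single computational-basis measurement of the coherent Gibbs state produced by \alg{model} returns a configuration $(v,h)$ drawn exactly from $P(v,h)$, and that this one sample simultaneously fixes $v_i h_j$ for \emph{every} edge $(i,j)\in E$. Consequently the per-edge correlators need not be estimated independently: the $N_{\rm train}$ accepted samples named in the statement suffice to estimate every model correlator, and an analogous set of $N_{\rm train}$ clamped samples (one per training vector, via \alg{data}) suffices for the data correlators. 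The statistical step that turns accepted samples into additive-error estimates is a routine Chernoff/Hoeffding argument, and it is exactly this that pins the sample budget at $N_{\rm train}$ so that the sampling error matches the $1/\sqrt{N_{\rm train}}$ scaling used elsewhere.

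Next I would account for the cost of producing one accepted sample. By \lem{succ}, each run of \alg{model} succeeds with probability $Z/(\kappa Z_Q)\ge 1/\kappa$, so naive repetition costs an expected $\kappa$ state preparations; applying amplitude amplification to the success qubit reduces this to $O(\sqrt{\kappa})$ preparations. Because the exact success probability is unknown (it depends on the unknown $Z$), I would invoke the variant of amplitude amplification that uses a geometrically increasing schedule of Grover iterations, which still attains $O(1/\sqrt{P_{\rm success}})$ \emph{expected} queries --- this is the origin of the ``mean number of operations'' qualifier. Each preparation in turn costs $\tilde{O}(E)$, since the energy register is assembled by summing one contribution per edge (as in \alg{model}) and all the arithmetic, together with the classical precomputation of the mean-field parameters and $Z_Q$, is polynomial in the model size and logarithmic in the precision. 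Hence a single model sample costs $\tilde{O}(E\sqrt{\kappa})$, while a single clamped sample for training vector $x$ costs $\tilde{O}(E\sqrt{\kappa_x})\le\tilde{O}(E\sqrt{\max_v \kappa_v})$ by the analogous clamped success probability $Z_x/(\kappa_x Z_{x,Q})$.

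Combining these, the model contribution over $N_{\rm train}$ samples is $\tilde{O}(N_{\rm train} E\sqrt{\kappa})$ and the data contribution over the $N_{\rm train}$ training vectors is $\tilde{O}(N_{\rm train} E\sqrt{\max_v \kappa_v})$; summing gives the claimed bound. The step I expect to demand the most care is the amplitude-amplification argument: one must verify that boosting an \emph{unknown} success probability still yields the stated expected query count without biasing the sampled distribution away from $P(v,h)$, and confirm that folding the $\tilde{O}(E)$ cost of the classically precomputed mean-field quantities into the per-sample bound leaves the asymptotics unchanged. Everything else --- reading all correlators off each accepted sample and the concentration bound setting the sample count --- is bookkeeping on top of \lem{succ}.
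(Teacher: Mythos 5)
Your proposal is correct and follows essentially the same route as the paper's own proof: prepare the coherent Gibbs state via \lem{succ}, use amplitude amplification to reduce the expected repetitions to $O(\sqrt{\kappa})$ (respectively $O(\sqrt{\kappa_x})$ for the clamped states), charge $\tilde{O}(E)$ per preparation for the mean--field setup, energy arithmetic, and controlled rotation, and multiply by the $N_{\rm train}$ samples. If anything, you are more careful than the paper on two points it glosses over --- invoking the unknown--success--probability (expected--time) variant of amplitude amplification, and verifying that conditioning on acceptance leaves the measured distribution exactly $P(v,h)$ --- while the paper instead bounds both success probabilities jointly by $1/(\kappa+\max_v\kappa_v)$ and then splits $\sqrt{\kappa+\max_v\kappa_v}\le\sqrt{\kappa}+\sqrt{\max_v\kappa_v}$, which is equivalent to your per--case accounting.
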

\begin{proof}
We use \alg{deriv} to compute the required gradients.  It is straightforward to see from~\lem{succ} that \alg{deriv} draws $N_{\rm train}$ samples from the Boltzmann machine and then estimates the expectation values needed to compute the gradient of the log--likelihood by drawing samples from these states.  The subroutines that generate these states, \texttt{qGenModelState} and \texttt{qGenDataState}, given in \alg{model} and \alg{data}, represent the only quantum processing in this algorithm.  The number of times the subroutines must be called on average before a success is observed is given by the mean of a geometric distribution with success probability given by~\lem{succ} that is at least
\begin{equation}
\min\left\{\frac{Z}{\kappa Z_{Q}},\min_x \frac{Z_x}{\kappa_x Z_{x,Q}}\right\}.
\end{equation}
\lem{kappacond} gives us that $Z>Z_{Q}$ and hence the probability of success satisfies
\begin{equation}
\min\left\{\frac{Z}{\kappa Z_{Q}},\min_v \frac{Z_x}{\kappa_x Z_{x,Q}}\right\}\ge\frac{1}{{\kappa + \max_v \kappa_v}}.\label{eq:succp}
\end{equation}
Normally, \eq{succp} implies that preparation of the Gibbs state would require $O(\kappa +\max_v \kappa_v)$ calls to~\alg{model} and~\alg{data}  on average, but the quantum amplitude amplification algorithm~\cite{BHM+00} reduces the average number of repetitions needed before a success is obtained to $O(\sqrt{\kappa+\max_v \kappa_v})$.
\alg{deriv} therefore requires an average number of calls to~\texttt{qGenModelState} and \texttt{qGenDataState} that scale as $O(N_{\rm train}\sqrt{\kappa +\max_v \kappa_v})$.

\alg{model} and~\alg{data} require preparing the mean-field state, computing the energy of a configuration $(v,h)$, and performing a controlled rotation.  Assuming that the graph is connected, the number of hidden and visible units are $O(E)$.  Since the cost of synthesizing single qubit rotations to within error $\epsilon$ is $O(\log(E/\epsilon))$~\cite{KMM+13,RS14,BRS14} and the cost of computing the energy is $O(E~{\rm polylog}(E/\epsilon))$ it follows that the cost of these algorithms is $\tilde{O}(E)$.  Thus the expected cost of~\alg{deriv} is $\tilde{O}(N_{\rm train}E\sqrt{\kappa +\max_v \kappa_v})\in \tilde{O}(N_{\rm train}E(\sqrt{\kappa} +\sqrt{\max_v \kappa_v}))$ as claimed.
\end{proof}

In contrast, the number of operations and queries to $U_O$ required to estimate the gradients using greedy layer--by--layer optimization scales as~\cite{Ben09}
\begin{equation}
\tilde{O}(N_{\rm train} \ell E),
\end{equation}
where $\ell$ is the number of layers in the deep Boltzmann machine.  Assuming that $\kappa$ is a constant, it follows that the quantum sampling approach provides an asymptotic advantage for training deep networks.  In practice, the two approaches are difficult to directly compare because they both optimize different objective functions and thus the qualities of the resultant trained models will differ.  It is reasonable to expect, however, that the quantum approach will tend to find superior models because it optimizes the maximum-likelihood objective function up to sampling error due to taking finite $N_{\rm train}$.  

Note that~\alg{deriv} has an important advantage over many existing quantum machine learning algorithms~\cite{ABG06,LMR13,RML13,QKS15}: \emph{it does not require that the training vectors are stored in quantum memory}.  It requires only $n_h+n_v +1 + \lceil\log_2(1/\mathcal{E})\rceil$ qubits if a numerical precision of $\mathcal{E}$ is needed  in the evaluation of the $E(v,h) - \log(Q(v,h))$.  This means that a demonstration of this algorithm that would not be classically simulatable could be performed with fewer than $100$ qubits, assuming that $32$ bits of precision suffices for the energy.  In practice though, additional qubits will likely be required to implement the required arithmetic on a quantum computer.  Recent developments in quantum rotation synthesis could, however, be used to remove the requirement that the energy is explicitly stored as a qubit string~\cite{WR14}, which may substantially reduce the space requirements of this algorithm.  Below we consider the opposite case: the quantum computer can coherently access the database of training data via an oracle.  The algorithm requires more qubits (space), however it can quadratically reduce the number of samples required for learning in some settings.

\begin{algorithm}[t!]
\rule{\linewidth}{1pt}
\begin{algorithmic}
\Require Initial model weights $w$, visible biases $b$, hidden biases $d$, edge set $E$ and $\kappa$, a set of training vectors $x_{\rm train}$, a regularization term $\lambda$, and a learning rate $r$.
\Ensure Three arrays containing gradients of weights, hidden biases and visible biases: $\texttt{gradMLw},\texttt{gradMLb},\texttt{gradMLd}$.
\vskip0.2em
\hrule
\vskip0.2em

\For{$i=1:N_{\rm train}$}
\State ${\texttt{success}}\gets 0$
\While{$\texttt{success}=0$}
\State $\ket{\psi}\gets\texttt{qGenModelState}(w,b,d,E,\kappa)$
\State $\texttt{success}\gets$ result of measuring last qubit in $\ket{\psi}$
\EndWhile
\State $\texttt{modelVUnits}[i] \gets$ result of measuring visible qubit register in $\ket{\psi}$.
\State $\texttt{modelHUnits}[i] \gets$ result of measuring hidden unit register in $\ket{\psi}$ using amplitude amplification.
\State ${\texttt{success}}\gets 0$
\While{$\texttt{success}=0$}
\State $\ket{\psi}\gets\texttt{qGenDataState}(w,b,d,E,\kappa,x_{\rm train}[i])$.
\State $\texttt{success}\gets$ result of measuring last qubit in $\ket{\psi}$ using amplitude amplification.
\EndWhile
\State $\texttt{dataVUnits}[i] \gets$ result of measuring visible qubit register in $\ket{\psi}$.
\State $\texttt{dataHUnits}[i] \gets$ result of measuring hidden unit register in $\ket{\psi}$.
\EndFor
\For{each visible unit $i$ and hidden unit $j$}
\State $\texttt{gradMLw}[i,j] \gets r\left(\frac{1}{N_{\rm train}}\sum_{k=1}^{N_{\rm train}}\left(\texttt{dataVUnits}[k,i]\texttt{dataHUnits}[k,j]-\texttt{modelVUnits}[k,i]\texttt{modelHUnits}[k,j]\right)-\lambda w_{i,j}\right)$.
\State $\texttt{gradMLb}[i] \gets r\left(\frac{1}{N_{\rm train}}\sum_{k=1}^{N_{\rm train}}\left(\texttt{dataVUnits}[k,i]-\texttt{modelVUnits}[k,i]\right)\right)$.
\State $\texttt{gradMLd}[j] \gets r\left(\frac{1}{N_{\rm train}}\sum_{k=1}^{N_{\rm train}}\left(\texttt{dataHUnits}[k,j]-\texttt{modelHUnits}[k,j]\right)\right)$.
\EndFor
\end{algorithmic}
\rule{\linewidth}{1pt}
\caption{\label{alg:deriv} GEQS algorithm for estimating the gradient of $O_{\rm ML}$.}
\end{algorithm}

\section{Training via quantum amplitude estimation}\label{sec:QAE}
We now consider a different learning environment, one in which the user has access to the training data via a quantum oracle which could represent either an efficient
quantum algorithm that provides the training data (such as another Boltzmann machine used as a generative model) or a quantum database that stores the memory via a binary access tree~\cite{NC00,GLM08}, such as a quantum Random Access Memory (qRAM)~\cite{GLM08}.

If we denote the training set as $\{{x}_i| i=1,\ldots,N_{\rm train}\}$, then the oracle is defined as a unitary operation as follows:
\begin{definition}
 $U_O$ is a unitary operation that performs for any computational basis state $\ket{i}$ and any $y\in \mathbb{Z}_2^{n_v}$
\begin{equation*}
U_O\ket{i} \ket{{y}} := \ket{i}\ket{{y}\oplus {x}_i},
\end{equation*}
where $\{{x}_i| i=1,\ldots,N_{\rm train}\}$ is the training set and $x_i \in \mathbb{Z}_2^{n_v}$.
\label{def:UO}
\end{definition}
A single quantum access to $U_O$ is sufficient to prepare a uniform distribution over all the training data
\begin{equation}
U_O\left(\frac{1}{N_{\rm train}}\sum_{i=1}^{N_{\rm train}}  \ket{i}\ket{0}\right) = \frac{1}{N_{\rm train}}\sum_{i=1}^{N_{\rm train}}  \ket{i}\ket{x_i}.
\end{equation}
The state $\frac{1}{N_{\rm train}}\sum_{i=1}^{N_{\rm train}}  \ket{i}\ket{0}$ can be efficiently prepared using quantum techniques~\cite{QKS15} and so the entire procedure is efficient.

At first glance, the ability to prepare a superposition over all data in the training set seems to be a powerful resource.  However, a similar probability distribution can also be generated classically using one query by picking a random training vector.  More sophisticated approaches are needed if we wish to leverage such quantum superpositions of the training data.  
\alg{derivAE} utilizes such superpositions to provide advantages, under certain circumstances, for computing the gradient.  The performance of this algorithm is given in the following theorem.
\begin{theorem}
There exists a quantum algorithm that can compute $r\frac{\partial O_{\rm ML}}{\partial w_{ij}}$, $r\frac{\partial O_{\rm ML}}{\partial b_{i}}$ or $r\frac{\partial O_{\rm ML}}{\partial d_j}$ for any $(i,j)$ corresponding to visible/hidden unit pairs for a Boltzmann machine on a connected graph with $E$ edges to within error $\delta$ using an expected number of queries to $U_O$ that scales as
$$
\tilde{O}\left(\frac{\kappa+\max_v \kappa_v}{\delta} \right),
$$
and a number of quantum operations that scales as 
$$
\tilde{O}\left(\frac{E(\kappa+\max_v \kappa_v)}{\delta} \right),
$$
for constant learning rate $r$.\label{thm:AEtrain}\label{thm:AEtrain}
\end{theorem}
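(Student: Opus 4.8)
The plan is to analyze \alg{derivAE}, which reuses the state--preparation primitive of \lem{succ} but replaces sampling with amplitude estimation. First I would reduce each component of the gradient to the two expectation values appearing in \eq{logderiv}: $\langle v_i h_j\rangle_{\rm data}$ and $\langle v_i h_j\rangle_{\rm model}$ (and the analogous single--unit expectations for the bias derivatives). Using the identity $\langle v_ih_j\rangle = P(11)/P(1)$ from the main text, each expectation becomes a ratio of two probabilities that are squared amplitudes of marked subspaces of the prepared state \eq{stateprep2}: $P(1)$ is the weight on the branch whose flag qubit is $1$, and $P(11)$ additionally marks $v_i=h_j=1$. The bias derivatives are the same with one marker removed.

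Next I would invoke amplitude estimation \cite{BHM+00}, which returns an estimate of a probability $p$ to additive error $\epsilon$ using $O(1/\epsilon)$ applications of a Grover iterate built from the preparation unitary $A$ and a reflection about the marked subspace; repeating $O(\log(1/\gamma))$ times and taking the median boosts the confidence to $1-\gamma$, a factor absorbed into $\tilde O$. The key accounting point is that $A$ queries $U_O$ (\defn{UO}) only $O(1)$ times --- a single query loads the uniform superposition over the $N_{\rm train}$ training vectors --- and otherwise costs $\tilde O(E)$ operations to build the mean--field state and evaluate the energy, exactly as in the proof of \thm{sampalg}. Because $A$ acts on the entire data superposition, the average over the training set is carried out coherently, which is why $N_{\rm train}$ never enters the query count; this is the feature that distinguishes GEQAE from GEQS.

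The substantive step is propagating error through the ratio. Writing $a=P(1)$ and $b=P(11)$ with $0\le b\le a$, a short manipulation gives $\left|\tilde b/\tilde a - b/a\right|\le (|\Delta a|+|\Delta b|)/\tilde a$ (using $b\le a$), so once $\tilde a\ge a/2$ the ratio error is $O((|\Delta a|+|\Delta b|)/a)$. The denominator is controlled by the success--probability lower bounds of \lem{succ} and \lem{kappacond}: $P(1)=Z/(\kappa Z_{Q})\ge 1/\kappa$ for the model state, and the analogous bound $\ge 1/\max_v\kappa_v$ for the clamped (data) state. Splitting the target error $\delta$ evenly between the data and model expectations and demanding additive precision $\Theta(\delta/\kappa)$ on the model probabilities and $\Theta(\delta/\max_v\kappa_v)$ on the data probabilities therefore makes each ratio accurate to $\delta/2$. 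By the amplitude--estimation cost this requires $O(\kappa/\delta)$ queries for the model estimate and $O(\max_v\kappa_v/\delta)$ for the data estimate; summing yields the claimed $\tilde O((\kappa+\max_v\kappa_v)/\delta)$ queries to $U_O$, and multiplying by the $\tilde O(E)$ per--iterate operation cost gives $\tilde O(E(\kappa+\max_v\kappa_v)/\delta)$ operations.

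The hard part is exactly this division by the small success probability $a\sim 1/\kappa$: it is what turns a mild additive precision on the raw probabilities into the linear $\kappa$ dependence of the complexity, and care is needed to ensure the estimate $\tilde a$ stays bounded away from $0$ (e.g.\ $\tilde a\ge a/2$) so the first--order bound is valid. It is also the reason the unamplified estimator here carries $\kappa$ linearly rather than the $\sqrt\kappa$ that a preliminary amplitude amplification of the success branch would buy.
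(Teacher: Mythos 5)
Your proposal is correct and follows essentially the same route as the paper's proof: prepare the coherent superposition over training data with a single $U_O$ query, apply \texttt{qGenDataState}/\texttt{qGenModelState} as the amplitude--estimation unitary, estimate $P(11)$ and $P(1)$, and propagate error through the ratio using the $1/(\kappa+\max_v\kappa_v)$ lower bound on the success probability. The only difference is bookkeeping --- you track additive errors $\Theta(\delta/\kappa)$ with an explicit $\tilde a\ge a/2$ guard, whereas the paper demands relative error $\Delta/8$ on each probability and then sets $\Delta=\delta/(2r)$ --- and these are equivalent given the same success--probability bound, yielding identical complexities.
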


\alg{derivAE} requires the use of the amplitude estimation~\cite{BHM+00} algorithm, which provides a quadratic reduction in the number of samples needed to learn the probability of an event occurring, as stated in the following theorem.
\begin{theorem}[Brassard, H\o yer, Mosca and Tapp]
For any positive integer $L$, the amplitude estimation algorithm of takes as input a quantum algorithm
that does not use measurement and with success probability $a$ and outputs $\tilde{a}$ $(0 \le \tilde a \le 1)$ such that
$$
|\tilde{a}-a|\le \frac{\pi(\pi+1)}{L}
$$
with probability at least $8/\pi^2$.  
It uses exactly $L$ iterations of Grover's algorithm.  
If $a=0$ then $\tilde{a}=0$ with certainty, and if $a=1$ and $L$ is even, then $\tilde{a}=1$ with certainty.\label{thm:AE}
\end{theorem}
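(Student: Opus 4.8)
The plan is to realize amplitude estimation as quantum phase estimation applied to a Grover-type iterate built from the input algorithm, and then to convert the resulting phase estimate into an amplitude estimate. Write the measurement-free input algorithm as $\mathcal{A}$, and decompose its output as
$$\mathcal{A}\ket{0} = \sin(\theta_a)\ket{\phi_1} + \cos(\theta_a)\ket{\phi_0},$$
where $\ket{\phi_1}$ and $\ket{\phi_0}$ are the normalized projections of $\mathcal{A}\ket{0}$ onto the ``good'' (outcome-$1$) and ``bad'' subspaces and $\theta_a\in[0,\pi/2]$ is fixed by $a=\sin^2(\theta_a)$. First I would introduce the two reflections $S_\chi = I - 2\Pi_{\good}$, which flips the sign of the good subspace, and $S_0 = I - 2\ketbra{0}{0}$, and assemble the Grover iterate $Q = -\mathcal{A}S_0\mathcal{A}^{-1}S_\chi$.

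The key structural step is to show that $Q$ leaves invariant the two-dimensional subspace $\mathrm{span}\{\ket{\phi_0},\ket{\phi_1}\}$ and acts on it as a rotation by angle $2\theta_a$. This follows from the standard observation that $S_\chi$ restricts to a reflection about $\ket{\phi_0}$ while $\mathcal{A}S_0\mathcal{A}^{-1}$ restricts to a reflection about $\mathcal{A}\ket{0}$, so their product is a rotation by twice the angle $\theta_a$ between the two reflection axes. Consequently $Q$ has the two eigenvalues $e^{\pm 2i\theta_a}$ on this subspace, with eigenvectors $\ket{\psi_\pm}$, and the input $\mathcal{A}\ket{0}$ is an equal-weight superposition of $\ket{\psi_+}$ and $\ket{\psi_-}$. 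I would then run phase estimation on $Q$ using $\mathcal{A}\ket{0}$ as the eigenvector-register input and a phase register that drives exactly $L$ controlled applications of $Q$. Because both eigenphases are $\pm 2\theta_a$ and $\sin^2$ is even, the sign ambiguity is harmless: the measured register yields an estimate $\tilde\theta$ of $\theta_a$, and the standard analysis of the inverse quantum Fourier transform gives $|\tilde\theta-\theta_a|\le \pi/L$ with probability at least $8/\pi^2$. Reporting $\tilde a = \sin^2(\tilde\theta)$ returns the amplitude estimate while using exactly $L$ iterations of $Q$.

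The final and most delicate step is the conversion from phase accuracy to amplitude accuracy. Using the identity $\sin^2\tilde\theta-\sin^2\theta_a = \sin(2\theta_a+\Delta)\sin(\Delta)$ with $\Delta = \tilde\theta-\theta_a$, together with $|\sin\Delta|\le|\Delta|\le \pi/L$ and $|\sin(2\theta_a)| = 2\sqrt{a(1-a)}$, I would establish
$$|\tilde a - a| \le 2\pi\frac{\sqrt{a(1-a)}}{L} + \frac{\pi^2}{L^2},$$
and then bound the right-hand side using $\sqrt{a(1-a)}\le 1/2$ and $L\ge 1$ to obtain $|\tilde a - a|\le \pi(\pi+1)/L$. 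The hard part will be carrying out this conversion so that the linear and quadratic contributions combine to the exact constant $\pi(\pi+1)$ rather than a looser bound. Finally, the boundary cases follow by inspection: if $a=0$ then $\theta_a=0$, the eigenphase is $0$, and phase estimation returns it exactly, so $\tilde a = 0$; if $a=1$ then $2\theta_a=\pi$ corresponds to the eigenvalue $-1$, a phase that an even $L$ resolves with no error, giving $\tilde a=1$.
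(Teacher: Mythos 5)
This theorem is quoted in the paper from Brassard, H\o yer, Mosca and Tapp \cite{BHM+00} without proof, so there is no internal proof to compare against; your reconstruction is the standard BHMT argument and it is correct. Specifically, realizing the Grover iterate $Q=-\mathcal{A}S_0\mathcal{A}^{-1}S_\chi$ as a rotation by $2\theta_a$ on the invariant plane, running $L$-point phase estimation on the equal-weight superposition of the $e^{\pm 2i\theta_a}$ eigenvectors (with the $\pm$ ambiguity killed by the evenness of $\sin^2$), and converting $|\tilde\theta-\theta_a|\le\pi/L$ into an amplitude bound via $\sin^2\tilde\theta-\sin^2\theta_a=\sin(2\theta_a+\Delta)\sin\Delta$ is exactly the proof of Theorem 12 of \cite{BHM+00}, and your treatment of the boundary cases $a=0$ and $a=1$ with $L$ even matches theirs. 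The one step that goes beyond the original reference—collapsing the two-term bound $2\pi\sqrt{a(1-a)}/L+\pi^2/L^2$ into the single constant stated here via $\sqrt{a(1-a)}\le 1/2$ and $\pi^2/L^2\le\pi^2/L$ for $L\ge 1$, giving $|\tilde a-a|\le(\pi+\pi^2)/L=\pi(\pi+1)/L$—checks out and reproduces the paper's statement exactly.
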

This result is central to the proof of~\thm{AEtrain} which we give below.

\begin{proofof}{\thm{AEtrain}}
\alg{derivAE} computes the derivative of $O_{\rm ML}$ with respect to the weights.  The algorithm can be trivially adapted to compute the derivatives with respect to the biases.  The first step in the algorithm prepares a uniform superposition of all training data and then applies $U_O$ to it.  The result of this is 
\begin{equation}
\frac{1}{\sqrt{N_{\rm train}}}\sum_{p=1}^{N_{\rm train}} \ket{p}\ket{x_p},\label{eq:superstate}
\end{equation}
as claimed.

Any quantum algorithm that does not use measurement is linear and hence applying \texttt{qGenDataState} (\alg{data}) to~\eq{superstate} yields
\begin{align}
&\frac{1}{\sqrt{N_{\rm train}}}\sum_{p=1}^{N_{\rm train}}\ket{p} \ket{x_p}\sum_{h} \sqrt{Q(x_p,h)}\ket{h}\ket{\mathcal{P}(x_p,h)}\left(\sqrt{1-\mathcal{P}(x_p,h)}\ket{0}+\sqrt{\mathcal{P}(x_p,h)}\ket{1}\right)\nonumber\\
&\qquad :=\frac{1}{\sqrt{N_{\rm train}}}\sum_{p=1}^{N_{\rm train}}\ket{p} \ket{x_p}\sum_{h} \sqrt{Q(x_p,h)}\ket{h}\ket{\mathcal{P}(x_p,h)}\ket{\chi(x_p,h)}. \label{eq:state2}
\end{align}
If we consider measuring $\chi=1$ to be success then~\thm{AE} gives us that $\tilde{O}((\kappa +\max_v \kappa_v)/\Delta)$ preparations of~\eq{state2} are needed to learn $P(\texttt{success})=P(\chi=1)$ to within relative error $\Delta/8$ with high probability.  This is because $P(\texttt{success}) \ge 1/(\kappa+\max_v \kappa_v)$.  Similarly, we can also consider success to be the event where the $i^{\rm th}$ visible unit is $1$ and the $j^{\rm th}$ hidden unit is one and a successful state preparation is measured.  This marking process is exactly the same as the previous case, but requires a Toffoli gate.  Thus $P(v_i=h_j=\chi=1)$ can be learned within relative error $\Delta/8$ using $\tilde{O}((\kappa+\max_v \kappa_v)/\Delta)$ preparations.
It then follows from the laws of conditional probability that
\begin{equation}
\langle v_i h_j\rangle_{\rm data}= P([x_p]_i=h_j=1|\chi=1)=\frac{P([x_p]_i=h_j=\chi=1)}{P(\chi=1)},\label{eq:quotient}
\end{equation}
can be calculated from these values as claimed.

In order to ensure that the total error in $\langle v_i h_j\rangle_{\rm data}$ is at most $\Delta$, we need to bound the error in the quotient in~\eq{quotient}.  It can be seen that for $\Delta<1/2$,
\begin{equation}
\left|\frac{P([x_i]_j=h_k=\chi=1)(1 \pm \Delta/8)}{P(\chi=1)(1\pm \Delta/8)}- \frac{P([x_i]_j=h_k=\chi=1)}{P(\chi=1)}\right|\le  \frac{\Delta P([x_i]_j=h_k=\chi=1)}{P(\chi=1)}\le \Delta.
\end{equation}
Therefore the algorithm gives $\langle v_i h_j\rangle_{\rm data}$ within error $\Delta$.

The exact same steps can be repeated using~\alg{model} instead of~\alg{data} as the state preparation subroutine used in amplitude estimation.  This allows us to compute $\langle v_i h_j\rangle_{\rm data}$ within error $\Delta$ using $\tilde{O}(1/\Delta)$ state preparations.  The triangle inequality shows that the maximum error incurred from approximating $\langle v_i h_j\rangle_{\rm data}-\langle v_i h_j\rangle_{\rm model}$ is at most $2\Delta$.  Therefore, since a learning rate of $r$ is used in~\alg{derivAE}, the overall error in the derivative is at most $2\Delta r$.  If we pick $\Delta = \delta/(2r)$ then we see that the overall algorithm requires $\tilde{O}(1/\delta)$ state preparations for constant $r$.

Each state preparation requires one query to $U_O$ and $\tilde{O}(E)$ operations assuming that the graph that underlies the Boltzmann machine is connected.  This means that the expected query complexity of the algorithm is $\tilde{O}((\kappa+\max_v \kappa_v)/\delta)$ and the number of circuit elements required is $\tilde{O}((\kappa+\max_v \kappa_v)E/\delta)$ as claimed.
\end{proofof}

\begin{algorithm}[t!]
\rule{\linewidth}{1pt}
\begin{algorithmic}
\Require Initial model weights $w$, visible biases $b$, hidden biases $d$, edge set $E$ and $\kappa$, a set of training vectors $x_{\rm train}$, a regularization term $\lambda$, $1/2 \ge \Delta>0$, a learning rate $r$, and a specification of edge $(i,j)$.
\Ensure $r\frac{\partial O_{\rm ML}}{\partial w_{ij}}$ calculated to within error $2r\Delta$.
\vskip0.2em
\hrule
\vskip0.2em
\State Call $U_O$ once to prepare state $\ket{\psi} \gets \frac{1}{\sqrt{N_{\rm train}}}\sum_{p \in x_{\rm train}} \ket{p} \ket{x_p}$.
\vskip0.2em
\State $\ket{\psi} \gets \texttt{qGenDataState}({w,b,d,E,\kappa,\ket{\psi}})$. \Comment{Apply~\alg{data} using a superposition over $x_p$ rather than a single value.}
\vskip0.2em
\State Use amplitude estimation on state preparation process for $\ket{\psi}$ to learn $P([x_p]_i = h_j = \texttt{success}=1)$ within error $\Delta/8$.
\vskip0.2em
\State Use amplitude estimation on state preparation process for $\ket{\psi}$ to learn $P(\texttt{success}=1)$ within error $\Delta/8$.
\vskip0.2em
\State $\left\langle v_i h_j \right\rangle_{\rm data} \gets \frac{P([x_p]_i = h_j = \texttt{success}=1)}{P(\texttt{success}=1)}$.
\vskip0.2em
\State Use amplitude estimation in exact same fashion on $\texttt{qGenModelState}({w,b,d,E,\kappa})$ to learn $\left\langle v_i h_j \right\rangle_{\rm data}$.
\vskip0.2em
\State $\frac{\partial O_{\rm ML}}{\partial w_{ij}} \gets r\left(\left\langle v_i h_j \right\rangle_{\rm data} -\left\langle v_i h_j \right\rangle_{\rm model} \right)$
\end{algorithmic}
\rule{\linewidth}{1pt}
\caption{\label{alg:derivAE} GEQAE algorithm for estimating the gradient of $O_{\rm ML}$.}
\end{algorithm}
There are two qualitative differences between this approach and that of~\alg{deriv}.  The first is that the algorithm provides detailed information about one component of the gradient, whereas the samples from~\alg{deriv} provide limited information about every direction.  It may seem reasonable that if amplitude estimation is used to learn $\langle v_i h_j \rangle$ then $\langle v_k h_\ell \rangle$ could be estimated by sampling from the remaining qubits.  The problem is that amplitude estimation creates biases in these measurements and so the correct way to use this evidence to update the user's confidence about the remaining components of the gradient is unclear.

In general, it is not straight forward to blindly use amplitude amplification to provide a quadratic improvement to the scaling with $\kappa$ because a randomized algorithm is typically employed in cases where the success probability is unknown.  Since the randomized algorithm uses measurements, it cannot be used in concert with amplitude estimation.  However, if an upper bound on the success probability is known then amplitude amplification can be used deterministically to lead to amplify the success probability for the system.  Amplitude estimation can then be employed on the amplified version of the original circuit and the inferred success probability can then be backtracked to find the success probability in absentia of the amplification step.  This process is explained in the following corollary.
\begin{corollary}
Assume that $P_u:1\ge P_u>\frac{Z}{\kappa Z_{Q}}$ is known where $P_u\in \Theta(\frac{Z}{\kappa Z_{Q}})$ then there exists a quantum algorithm that can compute $r\frac{\partial O_{\rm ML}}{\partial w_{ij}}$, $r\frac{\partial O_{\rm ML}}{\partial b_{i}}$ or $r\frac{\partial O_{\rm ML}}{\partial d_j}$ for any $(i,j)$ corresponding to visible/hidden unit pairs for a Boltzmann machine on a connected graph with $E$ edges to within error $\delta\le P_u$ using an expected number of queries to $U_O$ that scales as
$$
\tilde{O}\left(\frac{\sqrt{\kappa}+\max_v \sqrt{\kappa_v}}{\delta/P_u} \right),
$$
and a number of quantum operations that scales as 
$$
\tilde{O}\left(\frac{E(\sqrt{\kappa}+\max_v \sqrt{\kappa_v})}{\delta/P_u} \right),
$$
for constant learning rate $r$.\label{cor:AEtrain}
\end{corollary}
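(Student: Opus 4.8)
The plan is to refine the amplitude-estimation procedure of \thm{AEtrain} by interposing a \emph{deterministic} amplitude-amplification stage, so that the $\kappa$-dependence of the query count is reduced from $\kappa$ to $\sqrt{\kappa}$. Recall from \lem{succ} that \texttt{qGenModelState} (\alg{model}) is a measurement-free unitary $A$ whose flag qubit equals $1$ with probability $a=Z/(\kappa Z_{Q})$, and that conditioned on that outcome the remaining registers hold the coherent Gibbs state; by \lem{kappacond} we have $a\ge 1/\kappa$. The hypothesis supplies a \emph{known} $P_u$ with $a< P_u\le 1$ and $P_u\in\Theta(a)$, and guarantees $\delta\le P_u$ so that the precision $\delta/P_u$ appearing below is at most $1$. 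The obstruction flagged before the corollary is that the usual way of amplifying an unknown success probability is randomized and uses measurement, which is incompatible with running amplitude estimation on top of it; knowing $P_u$ is what lets us sidestep this.

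First I would fix the amplification depth \emph{in advance} from $P_u$ alone. Writing $\theta=\arcsin\sqrt{a}$ and $Q=-AS_0A^{-1}S_\chi$ for the Grover iterate associated with $A$, I choose the integer $k=\Theta(1/\sqrt{P_u})$ (concretely $k\approx \pi/(4\arcsin\sqrt{P_u})$) and set $A'=Q^kA$. Because $A'$ contains no measurement it is a legal input to \thm{AE}. Its flag equals $1$ with probability $a'=\sin^2\!\big((2k+1)\theta\big)$, and the two-sided control $P_u\in\Theta(a)$ is exactly what forces $(2k+1)\theta$ to sit in a fixed subinterval of $(0,\pi/2)$, so that $a'=\Theta(1)$ and is bounded away from both $0$ and $1$. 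Since $P_u\ge 1/\kappa$ we also have $k=O(\sqrt{\kappa})$, which is the origin of the improved $\sqrt{\kappa}$ scaling. A key structural point is that amplitude amplification leaves the internal state of the ``good'' subspace untouched, so the conditional statistics $\langle v_ih_j\rangle=P(v_i=h_j=1\mid \text{flag}=1)$ used in \alg{derivAE} are unchanged by the amplification.

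Next I would run amplitude estimation (\thm{AE}) on $A'$, marking the flag, and in a second pass additionally marking $v_i=h_j=1$, to obtain estimates of $a'$ and of $a'\langle v_ih_j\rangle$. Each Grover iteration inside \thm{AE} invokes $A'$ and $(A')^{-1}$ and hence makes $O(k)=O(\sqrt{\kappa})$ calls to \texttt{qGenModelState}, i.e. $O(\sqrt{\kappa})$ queries to $U_O$, each costing $\tilde O(E)$ elementary operations for the energy evaluation exactly as in \thm{AEtrain}. I then recover $a$ from the estimate of $a'$ by inverting the amplification map, $a=\sin^2\!\big((\arcsin\sqrt{a'})/(2k+1)\big)$, and recover the gradient component from the resulting ratio. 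Propagating the $O(1/L)$ error of \thm{AE}: in the well-conditioned regime the inverse map has Jacobian $da/da'=\Theta(P_u)$, so it \emph{contracts} errors, and it suffices to estimate $a'$ to absolute precision $\Theta(\delta/P_u)$; by \thm{AE} this costs $L=\tilde O(P_u/\delta)$ iterations. Multiplying $L$ by the $O(\sqrt{\kappa})$ cost per iteration gives the claimed $\tilde O\!\big(\sqrt{\kappa}/(\delta/P_u)\big)$ queries for the model term, and the data term is identical with $\kappa_v$ and \texttt{qGenDataState} in place of $\kappa$ and \texttt{qGenModelState}; summing the two contributions and attaching the $\tilde O(E)$ operation cost per query reproduces both stated bounds.

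The step I expect to be the main obstacle is the error propagation through the nonlinear backtracking map together with the conditioning of the amplification. If the depth $k$ were chosen even slightly too large, $(2k+1)\theta$ would approach $\pi/2$, the Jacobian $da'/da$ would vanish, and the inversion would become arbitrarily ill-conditioned; this is precisely why the \emph{two-sided} hypothesis $P_u\in\Theta(a)$, rather than a mere upper bound, is indispensable, since it pins the amplified angle away from the peak from both sides. The remaining care is bookkeeping: choosing $k$ from $P_u$ so that the guaranteed interval for $(2k+1)\theta$ is uniform in the problem size, and tracking the constants through the quotient that forms $\langle v_ih_j\rangle$ so that the final gradient error is genuinely $\delta$.
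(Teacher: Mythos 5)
Your proposal is correct and takes essentially the same route as the paper's own proof of \cor{AEtrain}: fix the Grover amplification depth $\Theta(1/\sqrt{P_u})$ deterministically from the known $P_u$ so that the amplified circuit remains measurement--free and is a valid input to \thm{AE}, run amplitude estimation on it, and back--track through $P(1)=\sin^2\left(\frac{\arcsin\sqrt{P_s}}{2m+1}\right)$, whose $\Theta(P_u)$ Jacobian contracts errors so that precision $\Theta(\delta/P_u)$ on the amplified probability suffices, giving exactly the paper's accounting of $\tilde{O}(P_u/\delta)$ estimation iterations times $O(\sqrt{\kappa}+\max_v\sqrt{\kappa_v})$ amplification overhead. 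The only minor deviations are that you amplify the flag event alone and invoke preservation of the good--subspace conditionals to handle the joint probability (the paper treats $P(1)$ and $P(11)$ by ``identical'' separate amplifications), and that the paper secures well--conditioning of the inversion by assuming the amplified probability is at most $1/4$, whereas your literal constant $k\approx\pi/(4\arcsin\sqrt{P_u})$ would let the amplified angle reach $\pi/2$ when $a$ is close to $P_u$ and should be taken a constant factor smaller, a repair your own final paragraph already anticipates.
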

\begin{proof}
The idea behind this corollary is to apply $m$ iterations of Grover's search to boost the probability of success before estimating the amplitude and then use the resultant probability of success to work backwards to find $P(11):= P([x_p]_i = h_j =\chi=1)$ or $P(1):=P(\chi=1)$.

Let us focus on learning $P(1)$.  The case of learning $P(11)$ is identical.  Applying $m$ steps of amplitude amplification (without measuring) causes the success probability to become
\begin{equation}
P_s = \sin^2([2m+1] \sin^{-1} (\sqrt{P(1)}) ).
\end{equation}
This equation cannot be inverted to find $P_s$ unless $[2m+1] \sin^{-1} (P(1)) \le \pi/2$.  However if $m$ is chosen such that $[2m+1] \sin^{-1} (P_u) \le \pi/2$ then we are guaranteed that an inverse exists.  Under this assumption
\begin{equation}
P(1) = \sin^2\left(\frac{\sin^{-1}(\sqrt{ P_s}) }{2m+1}\right).\label{eq:P1}
\end{equation}

Now we need to show that small errors in estimating $P_s$ do not propagate to large errors in $P(1)$.  In general, large errors can propagate because the derivative of~\eq{P1} with respect to $P_s$ diverges at $P_s=1$.  Let us assume that $m$ is chosen such that $P_s\le 1/4$.  If such an $m$ does not exist, then the success probability is already $O(1)$ and hence the complexity is independent of $\kappa$, so we can safely assume this as it is a worst case assumption.  The derivative of $P(1)$ then exists and is

\begin{equation}
\frac{\partial P(1)}{\partial P_s} = \frac{\sin\left(\frac{2\sin^{-1}(\sqrt{P_s})}{2m+1} \right)}{2\sqrt{P_s}\sqrt{1-P_s}(2m+1)}.
\end{equation}
Using the fact that $\sin(x) \le x$ this equation becomes
\begin{equation}
\frac{\sin\left(\frac{2\sin^{-1}(\sqrt{P_s})}{2m+1} \right)}{2\sqrt{P_s}\sqrt{1-P_s}(2m+1)} \le \frac{\sin^{-1}(\sqrt{P_s})}{\sqrt{P_s}\sqrt{1-P_s}(2m+1)^2}.\label{eq:derivbd}
\end{equation}
Since the Taylor series of $\sin^{-1}(x)$ has only positive terms and $\sin^{-1}(x) /x =1 +O(x^2)$, $\sin^{-1}(\sqrt{P_s})/\sqrt{P_s} \ge 1$.  Ergo~\eq{derivbd} is a monotonically increasing function of $P_s$ on $(0,1)$.  Thus the extreme value theorem implies
\begin{equation}
\frac{\partial P(1)}{\partial P_s} \le \frac{2\pi}{3^{3/2} (2m+1)^2}.
\end{equation}
Taylor's remainder theorem then shows that if phase estimation is used with precision $\Delta_0:-P_s \le \Delta_0\le 1/4 -P_s$ then
\begin{equation}
\left| \frac{\sin\left(\frac{2\sin^{-1}(\sqrt{P_s})}{2m+1} \right)}{2\sqrt{P_s}\sqrt{1-P_s}(2m+1)}  - \frac{\sin\left(\frac{2\sin^{-1}(\sqrt{P_s+\Delta_0})}{2m+1} \right)}{2\sqrt{P_s+\Delta_0}\sqrt{1-P_s-\Delta_0}(2m+1)}\right| \le \frac{2\pi\Delta_0}{3^{3/2} (2m+1)^2}.  
\end{equation}
Thus the resultant error is $O(\Delta_0 / m^2)$.  Hence if overall error $\Delta/8$ is required then it suffices to take $\Delta_0 \in O(m^2 \Delta)$.  Here $m\in \Theta(\sqrt{1/P_u})$ which means that $\Delta_0 \in O(\Delta/P_u)$.  Thus amplitude estimation requires $O(P_u/\Delta)$ repetitions of the amplified circuit used to produce $P(1)$ and $P(11)$.  Amplitude amplification results in an overhead of $O(\sqrt{\kappa}+\max_v \sqrt{\kappa_v})$ assuming $P_u\in \Theta(\frac{Z}{\kappa Z_{Q}})$. The total cost is then simply the product of these two costs resulting in the claimed complexities.
\end{proof}

The above processes can be repeated for each of the components of the gradient vector in order to perform an update of the weights and biases of the Boltzmann machine.
\begin{corollary}
Let $N_{\rm op}$ be the number of quantum operations and oracle calls needed to compute a component the gradient of $O_{\rm ML}$ using~\alg{derivAE} or~\cor{AEtrain} for a Boltzmann machine on a connected graph  scales as $$\tilde{O}\left( E N_{\rm op} \right),$$
if the learning rate $r$ is a constant.\label{cor:AE}
\end{corollary}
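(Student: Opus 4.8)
The plan is to reduce the claim to a straightforward counting argument: the full gradient vector of $O_{\rm ML}$ has only $O(E)$ scalar components, and each of them is produced at cost $N_{\rm op}$ by \alg{derivAE} (equivalently \cor{AEtrain}), so assembling the entire gradient costs $O(E)\cdot N_{\rm op}$. The whole content of the corollary is therefore bounding the number of components by $O(E)$ and then multiplying through.

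First I would enumerate the components. By~\eq{logderiv} and its analogues for the biases, $O_{\rm ML}$ contributes one partial derivative per weight $w_{i,j}$ and one per bias $b_i$ and $d_j$. The number of weight derivatives is exactly the number of edges $E$, while the number of bias derivatives equals the number of vertices $n_v+n_h$. The key step is to control $n_v+n_h$ in terms of $E$, and here I would invoke the stated hypothesis that the underlying graph is connected: a connected graph on $n_v+n_h$ vertices has at least $n_v+n_h-1$ edges, so $n_v+n_h\le E+1$. Consequently the total number of scalar components is $E+(n_v+n_h)\le 2E+1\in O(E)$.

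Finally, \thm{AEtrain} (or \cor{AEtrain}) guarantees that each individual component $r\,\partial O_{\rm ML}/\partial w_{ij}$, $r\,\partial O_{\rm ML}/\partial b_{i}$, or $r\,\partial O_{\rm ML}/\partial d_j$ is obtained using $N_{\rm op}$ quantum operations and oracle calls for constant learning rate $r$. Since these components are computed independently of one another, the cost of producing the whole gradient vector is the product of the number of components and the per-component cost, namely $O(E)\cdot N_{\rm op}=\tilde{O}(E N_{\rm op})$, where the tilde simply absorbs the polylogarithmic overhead already folded into $N_{\rm op}$.

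I do not expect any genuine obstacle, as the argument is purely combinatorial. The one point that requires care is the appeal to connectivity: without it one could append arbitrarily many isolated vertices whose bias derivatives would inflate the component count beyond $O(E)$, and the stated scaling would fail. I would also remark that because the components are estimated independently, no additional amplitude-estimation error analysis is needed beyond that of \thm{AEtrain}; the per-component error and success-probability guarantees carry over termwise, so the only aggregation step is the multiplicative one above.
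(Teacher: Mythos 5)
Your proposal is correct and follows essentially the same route as the paper: the paper's proof simply states that the result follows from invoking \thm{AEtrain} $O(E)$ times, once per component of the gradient vector. Your write-up usefully fills in the detail the paper leaves implicit---namely that connectivity bounds the number of bias components by $n_v+n_h\le E+1$, so the total component count is $O(E)$---but this is an elaboration of the same counting argument, not a different approach.
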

\begin{proof}
The proof is a trivial consequence of using the result of~\thm{AEtrain} $O(E)$ times to compute each of the components of the gradient vector.
\end{proof}

Unlike the prior algorithm, it is difficult to meaningfully compare the costs in~\cor{AE} to those incurred with training under contrastive divergence.  This is because~\alg{derivAE} uses quantum superposition to compute the relevant expectation values using all the training data simultaneously.  Thus each component of the derivative operator is computed using the entire set of data, and it is better to instead consider the run time as a function of the estimation error rather than the number of training vectors.  A natural metric for comparison is to imagine that the training set is drawn from a larger set of training data that could be considered.  In that case there is inherent sampling error in the expectation values computed over the training data of $O(1/\sqrt{N_{\rm train}})$.  Thus taking $\delta \in O(1/\sqrt{N_{\rm train}})$ is a reasonable choice to compare the two methods, but this is by no means the only way the two costs could be meaningfully compared.

Although the query complexity is independent of the number of training vectors, in order to assess the cost of this algorithm in practical examples we also need to include the costs of instantiating the oracle.  We consider three cases.  If each oracle implements an efficiently computable function then the space-- and time--complexities of implementing the oracle is polylogarithmic in $N_{\rm train}$.  On the other hand, if the data can only be accessed via a lookup table (as is true in most machine learning problems) then a quantum computer that allows parallel execution can implement the oracle in time $O({\rm polylog}(N_{\rm train}))$ using memory $O(N_{\rm train})$.  If on the other hand the quantum computer only can process information serially then $\Theta(N_{\rm train})$ space and time are required to implement an oracle query using a database of training vectors stored as a qubit string in the quantum computer.  The lower bound follows from lower bounds on the parity function that show $\Theta(N)$ queries to the bits in this database are required to determine the parity of a $N$ qubit string.  This shows that the dependence on the number of training vectors re--emerges depending on problem-- and architecture--specific issues.  

The quadratic scaling with $E$ means that~\alg{derivAE} may not be preferable to~\alg{deriv} for learning all of the weights.  On the other hand,~\alg{derivAE} can be used to improve gradients estimated using the prior method.  The idea is to begin with a preliminary gradient estimation step using the direct gradient estimation method while using $O({\sqrt{N_{\rm train}}})$ randomly selected training vectors.  Then the gradient is estimated by breaking the results into smaller groups and computing the mean and the variance of each component of the gradient vector over each of the subgroups.  The components of the gradients with the largest uncertainty can then be learned with error that is comparable to the sampling error incurred by only using $N_{\rm train}$ training examples in contrastive divergence training by using~\alg{derivAE} with $\delta\sim 1/\sqrt{N_{\rm train}}$ to estimate them.  Since the two costs are asymptotically comparable, this approach allows the benefits of both approaches to be used in cases where the majority of the uncertainty in the gradient comes from a small number of components.

\section{Hedging strategies}
Large values of $\kappa$ can be an obstacle facing exact state preparation.  This problem emerges because $Q(v,h)$ may assign orders of magnitude less probability to configurations than $P(v,h)$.  For example, we find examples of Boltzmann machines that require values of $\kappa$ in excess of $10^{20}$ are to \emph{exactly} prepare the Gibbs state even for small BMs.  The root of this problem is that taking $Q(v,h)$ to be the MF distribution does not always adequately reflect the uncertainty we have in $P(v,h)\approx Q(v,h)$.  We introduce ``hedging strategies'' to address this problem.  Our strategy is to introduce a hedging parameter $\alpha:1\ge\alpha\ge 0$ that can be adjusted to reduce bias towards the mean--field state.  In particular, if $\mu_i$ is the mean--field expectation value of $v_i$ in the absence of hedging then choosing $\alpha<1$ results in $\mu_i \rightarrow \alpha \mu_i + (1-\alpha)/2$ and similarly for the hidden units.  $Z_{Q}$ retains the same value regardless of $\alpha$.  

This strategy can also be thought of from a Bayesian perspective as parameterizing a prior distribution that transitions from complete confidence in the MF ansatz to a uniform prior.  Preparing the Gibbs state from this state then corresponds to an update of this prior wherein $\mathcal{P}(v,h)$ is the likelihood function in this language.

State preparation is essentially unchanged by the use of hedging, since the only difference is that the mean--field parameters that specify $Q(v,h)$ are altered.  Similar effects can also be obtained by using the method of~\cite{CW12} to prepare linear combinations of the mean--field state and the uniform distribution, but we focus on the former method for simplicity.  We see below that hedging does not substantially increase the overhead of the algorithm but can substantially reduce the complexity of GEQS and GEQAE when the MF approximation begins to break down.  
\section{Numerical experiments}\label{sec:NUM}
In this section we quantify the differences between training Boltzmann machines using contrastive divergence (see~\sec{CD} for a brief review of contrastive divergence) and training them by optimizing $O_{\rm ML}$ using \alg{deriv} or \alg{derivAE}.  Here we present additional data that examines the performance of our algorithms in a wider range of conditions than those considered in the main body.  In particular, we present detailed investigation into the quality of $O_{\rm ML}$ and $\kappa$ for both RBMs as well as full BMs.  We also present evidence showing that the results in the main body, which involved training on synthetic data sets, are comparable to those attained by training using sub--sampled handwritten digits taken from the MNIST database.

\subsection{Data and Methodology}\label{sec:data}


We train the dRBM model using gradient ascent with (a) contrastive divergence to approximate the gradients and (b) ML--objective optimization (ML) using techniques of \alg{deriv} or \alg{derivAE}.
The objective function in both cases is $O_{\rm ML}$.
Since different approximations to the gradient result in learning different local optima, even if the same initial conditions are used in both instances, it is potentially unfair to directly compare the optima vectors found using the two training techniques.
We consider two methods of comparison.
First, we verify that our results lie approximately in an optima of $O_{\rm ML}$ with high probability by using the approach of Donmez, Svore and Burges~\cite{DSB09}. For each proposed optima, we perform many perturbations about the point, fixing all parameters while perturbing one, and repeating many times for each parameter, and then compare the difference in the value of the objective functions at the original and perturbed points.  This allows us to say, with fixed confidence, that the probability of the objective function decreasing in a randomly chosen direction is less than a cited value.  We repeat this process $459$ times with perturbations of size $10^{-3}$, which is sufficient to guarantee that the objective function will not increase in $99\%$ of all randomly chosen directions for steps of size $10^{-3}$.

Second, we use a methodology similar to that used in~\cite{CH05}.  We perform our experiments by running one algorithm until a local optima is found and then use this local optima as the initial configuration for the second algorithm.  In this way we can compare the locations and quality of analogous local optima.  We list the training combinations in~\tab{cdml}, which we denote CD--ML, ML--CD, and ML--ML corresponding to the optimizers used in the first and second steps of the comparison.
A subtle point in considering such comparisons is determination of convergence.  Contrastive divergence and other noisy gradient ascent algorithms (meaning gradient ascent where noise is added to the gradient calculation) do not converge to a single optima, but instead fluctuate about an approximate fixed point.  For this reason we consider an algorithm to have converged when the running average of the value of ${O_{\rm ML}}$ varies by less than $0.001\%$ after at least $10,000$ training epochs with a learning rate of $r=0.01$.  We apply this stopping condition not only in our contrastive divergence calculations, but also when we determine the effect of introducing sampling noise into the gradient of the ML objective function.  We typically optimize the ML objective function in the absence of noise using the Broyden--Fletcher--Goldfarb--Shanno algorithm (BFGS), but also use gradient ascent using discretized derivatives.  In both cases, we choose our stopping condition to occur when an absolute error of $10^{-7}$ in the ML objective function is obtained.
\begin{table}
\begin{tabular}{|c|c|c|}
\hline
&First Optimizer &Second Optimizer\\
\hline
CD--ML& CD$-1$ & Gradient ascent on ML objective\\
ML--CD& BFGS on ML objective & CD$-1$\\
ML--ML&BFGS on ML objective & Noisy gradient ascent on ML objective.\\
\hline

\end{tabular}
\caption{Numerical experiments.}\label{tab:cdml}
\end{table}

\subsection{Effect of noise in the gradient}
We first determine whether a sufficiently accurate estimate of the gradient can be obtained from a small number of samples from a quantum device, for example when training using  \alg{deriv} or \alg{derivAE}.  
Here, we train a single--layer RBM using ML--ML, with $6$ visible units and $4$ hidden units.  
We then proceed by computing the gradient of the ML objective function and add zero-mean Gaussian noise to the gradient vector.  
The training data, consisting of $10,000$ training examples.
A minimum of $10,000$ training epochs were used for each data point and typically fewer than $20,000$ epochs were required before the stopping condition (given in~\sec{data}) was met.

\fig{noise} shows that the mean error in the values of the objective function scales quadratically with the noise in the gradient.  This means that the gradient ascent algorithm is highly resilient to sampling noise in the components in the gradient and a relatively small value of $\delta$, such as $\delta=0.01$, will suffice to yield local optima that are close to those found when $\delta=0$.  Therefore, small sampling errors will not be catastrophic for~\alg{derivAE}.

If the sampling error is zero mean then the learning rate can always be adjusted in order to mitigate such sampling errors.  This means that if improved gradients are needed then it is not necessary to either increase the size of the training set in GEQS.  However, there is no guarantee in GEQAE that the errors are unbiased so reducing the learning rate may not always suffice for reducing such errors.  Regardless, multiple strategies exist to reduce such errors to the $10^{-2}$ threshold empirically needed to achieve negligible error in the quality of the trained model.

\begin{figure}
\begin{minipage}{0.45\linewidth}
\includegraphics{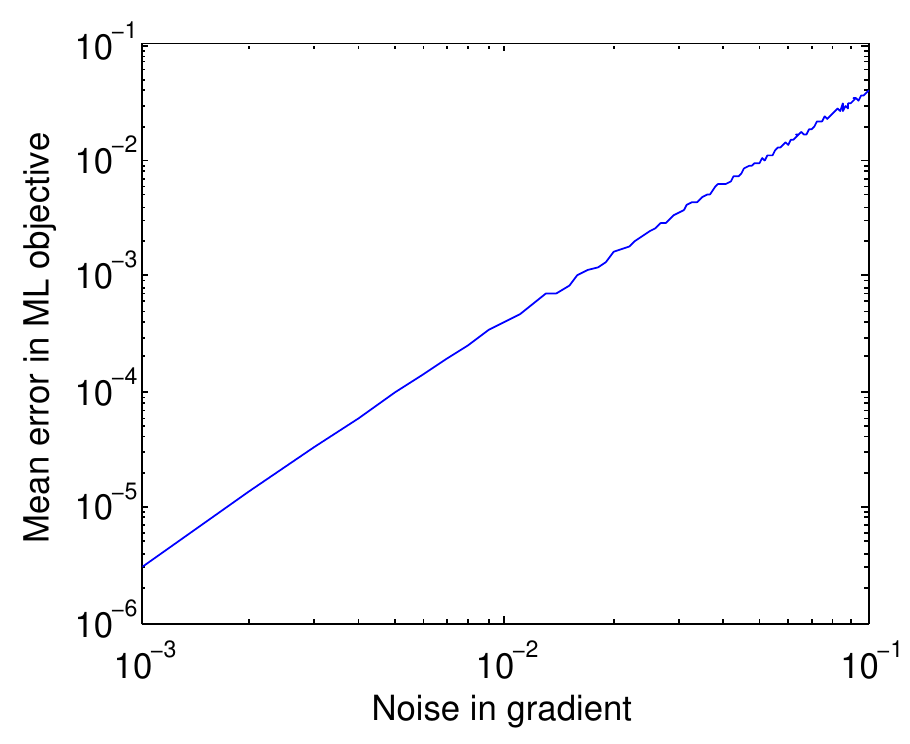}
(a)
\end{minipage}
\hspace{1mm}
\begin{minipage}{0.45\linewidth}
Fit to ${\rm Error} = a \times {\rm Noise}^b$.
\\\vspace{0.5cm}
\begin{tabular}{|c|c|c|}
\hline
Number of hidden units&a&b\\
\hline
$4$&$1.4044$&2.0161\\
$6$&$1.4299$&2.0121\\
$8$&$1.5274$&2.0086\\
\hline
\end{tabular}\\
(b)
\end{minipage}
\caption{(a) Mean discrepancy between ML--objective and ML found by initializing at ML--objective and introducing Gaussian noise to each component of the gradient computed.  Data taken for $6\times 4$ unit RBM with $\mathcal{N}=0$ and 100 samples used in the average for each data point. (b)  Results for $6 \times n_h$ unit RBMs for $n_h=6,8$ are qualitatively identical and strongly support a quadratic relationship between error and the noise in the gradient evaluation.\label{fig:noise}}
\end{figure}

\begin{figure}
\begin{minipage}{0.45\linewidth}
\includegraphics[width=\textwidth]{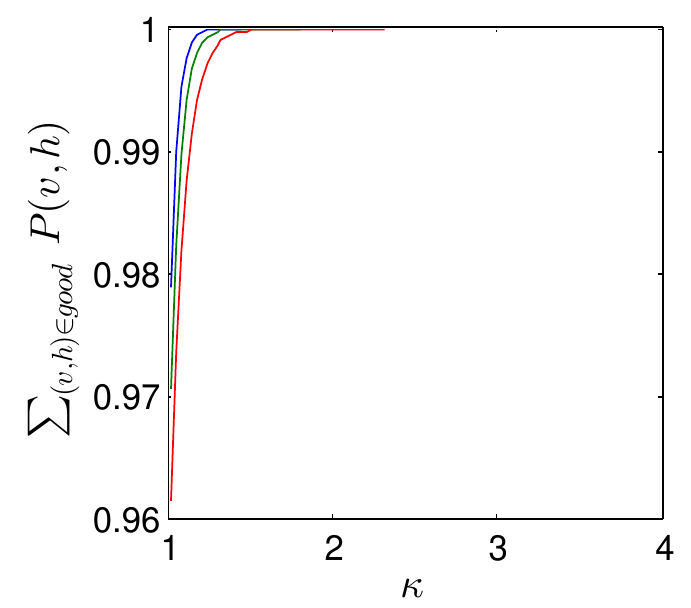}
(a) ${\rm Standard~deviation} = 0.1325$
\end{minipage}
\hspace{1mm}
\begin{minipage}{0.45\linewidth}
\includegraphics[width=\textwidth]{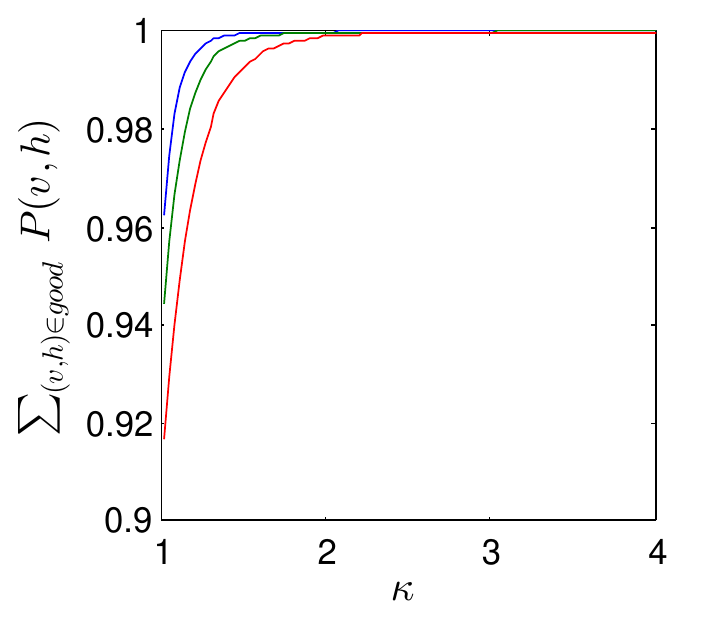}
(b) ${\rm Standard~deviation} = 0.2650$
\end{minipage}
\\
\begin{minipage}{0.45\linewidth}
\includegraphics[width=\textwidth]{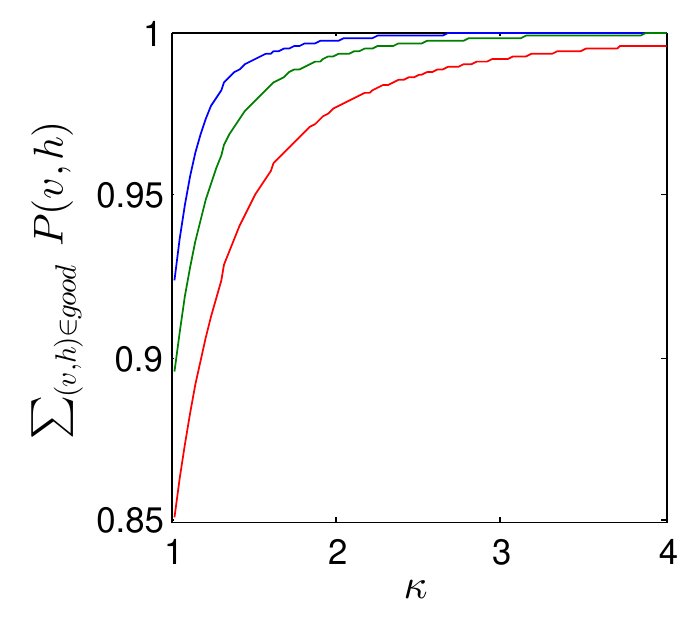}
(c) ${\rm Standard~deviation} = 0.5300$
\end{minipage}
\hspace{1mm}
\begin{minipage}{0.45\linewidth}
\begin{tabular}{|c|c|c|c|}
\hline
$\sigma(w_{i,j})$ & $n_v$ & Mean $ {\rm KL}(Q|| P)$&Mean $\ln(Z)$\\
\hline
0.1325 & 4 & 0.0031&$4.182$ \\
0.1325 & 6 & 0.0045&$5.550$\\
0.1325 & 8 & 0.0055&$6.942$\\
\hline
0.265 & 4 & 0.0115&4.211\\
0.265 & 6 & 0.0170&5.615\\
0.265 & 8 & 0.0245&7.020\\
\hline
0.53 & 4 &0.0423&4.359\\
0.53 & 6 &0.0626&5.839\\
0.53 & 8 &0.0827&7.295\\
\hline
\end{tabular}\\
\vskip1.2em
(d)
\end{minipage}
\caption{Fraction of probability for which $\mathcal{P}(v,h) \le 1$ vs $\kappa$ for synthetic $n_v \times 4$ RBMs with weights chosen with weights chosen randomly according to a normal distribution with varying standard deviation for the weights and $n=4,6,8$ from top to bottom in each graph. Table (d) shows the values of the KL--divergence between the mean--field distribution and the Gibbs distribution for the data shown in (a), (b) and (c).  Expectation values were found over $100$ random instances.}\label{fig:kappa}
\end{figure}

\subsection{Errors due to mean--field approximation and the scaling of $\kappa$}\label{sec:kappa}
Our quantum algorithm hinges upon the ability to prepare an approximation to the Gibbs state from a mean--field, or related, approximation.  \lem{succ} and \lem{kappa} shows that the success probability of the algorithm strongly depends on the value of $\kappa$ chosen and that the accuracy of the derivatives will suffer if a value of $\kappa$ is used that is too small or $Z_{Q}$ differs substantially from $Z$. We analyze the results for random single--layer RBMs with $4,6,8$ visible units and four hidden units.  The value of the standard deviation is an important issue because the quality of the mean--field approximation is known to degrade if stronger weights (i.e., stronger correlations) are introduced to the model~\cite{Wai05}.  We take the weights to be normally distributed in the Boltzmann machine with zero mean and standard deviation a multiple of $0.1325$, which we chose to match the standard deviation of the weight distribution empirically found for a $884$ unit RBM that was trained to perform facial recognition tasks using contrastive divergence.  The biases are randomly set according to a Gaussian distribution with zero mean and unit variance for all numerical experiments in this section.

\fig{kappa} shows that the value of $\kappa$ is not prohibitively large.  In fact, $\kappa<10$ suffices to produce the state with near-zero error for all of the cases considered.  Furthermore, we see that the value of $\kappa$ is a slowly increasing function of the number of visible units in the RBM and the standard deviation of the weights used in the synthetic models.  The fact that $\kappa$ is an increasing function of the standard deviation is not necessarily problematic, however, as regularization often causes the standard deviation of the weights to be less than $1$ in practical machine learning tasks.  It is difficult to extract the scaling of $\kappa$ from this data as the value chosen depends sensitively on the cutoff point chosen for the residual probability.


\begin{figure}
\includegraphics[width=0.49\linewidth]{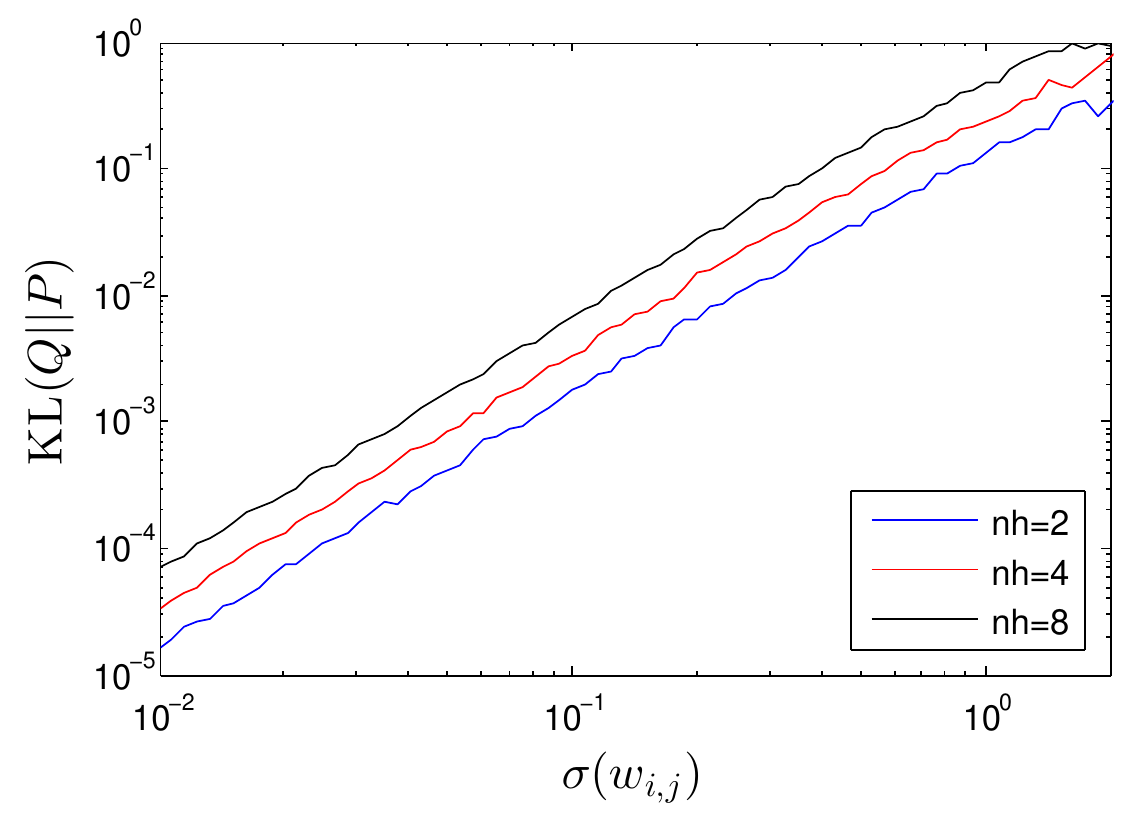}
\caption{Average KL--divergence as a function of the standard deviation of weights in synthetic RBMs with $n_v=4$ and Gaussian weights with zero mean and variance $\sigma^2(w_{i,j})$.  The biases were set to be drawn from a Gaussian with zero mean and unit variance.  Each data point is the average of $100$ random RBMs and the data is consistent with an $O(\sigma^2(w_{i,j})E)$ scaling. \label{fig:KL_nv=4}}
\end{figure}

\begin{figure}
\begin{minipage}{0.49\linewidth}
\includegraphics[width=\linewidth]{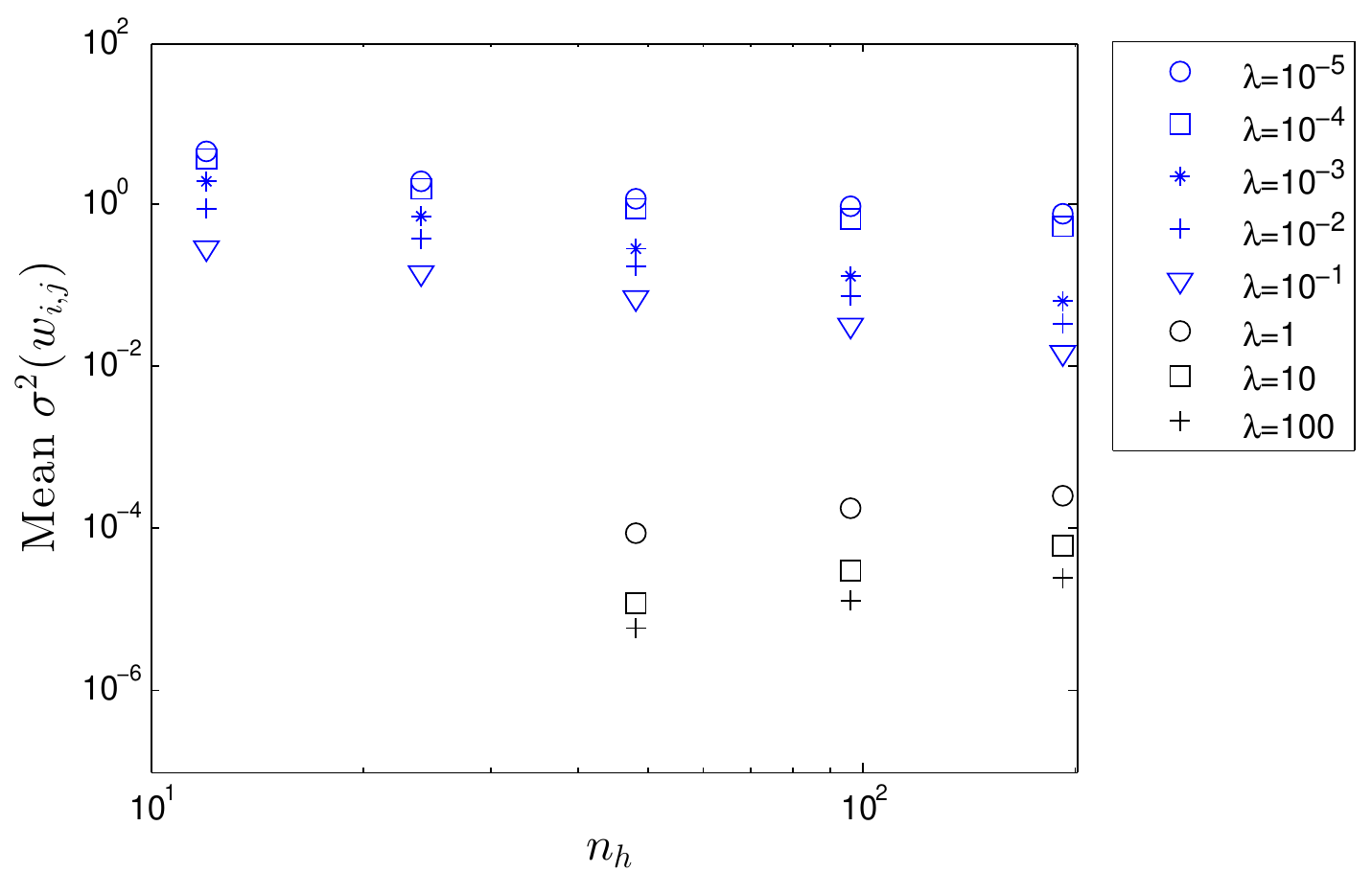}

(a)
\end{minipage}
\hspace{1mm}
\begin{minipage}{0.49\linewidth}
\includegraphics[width=\linewidth]{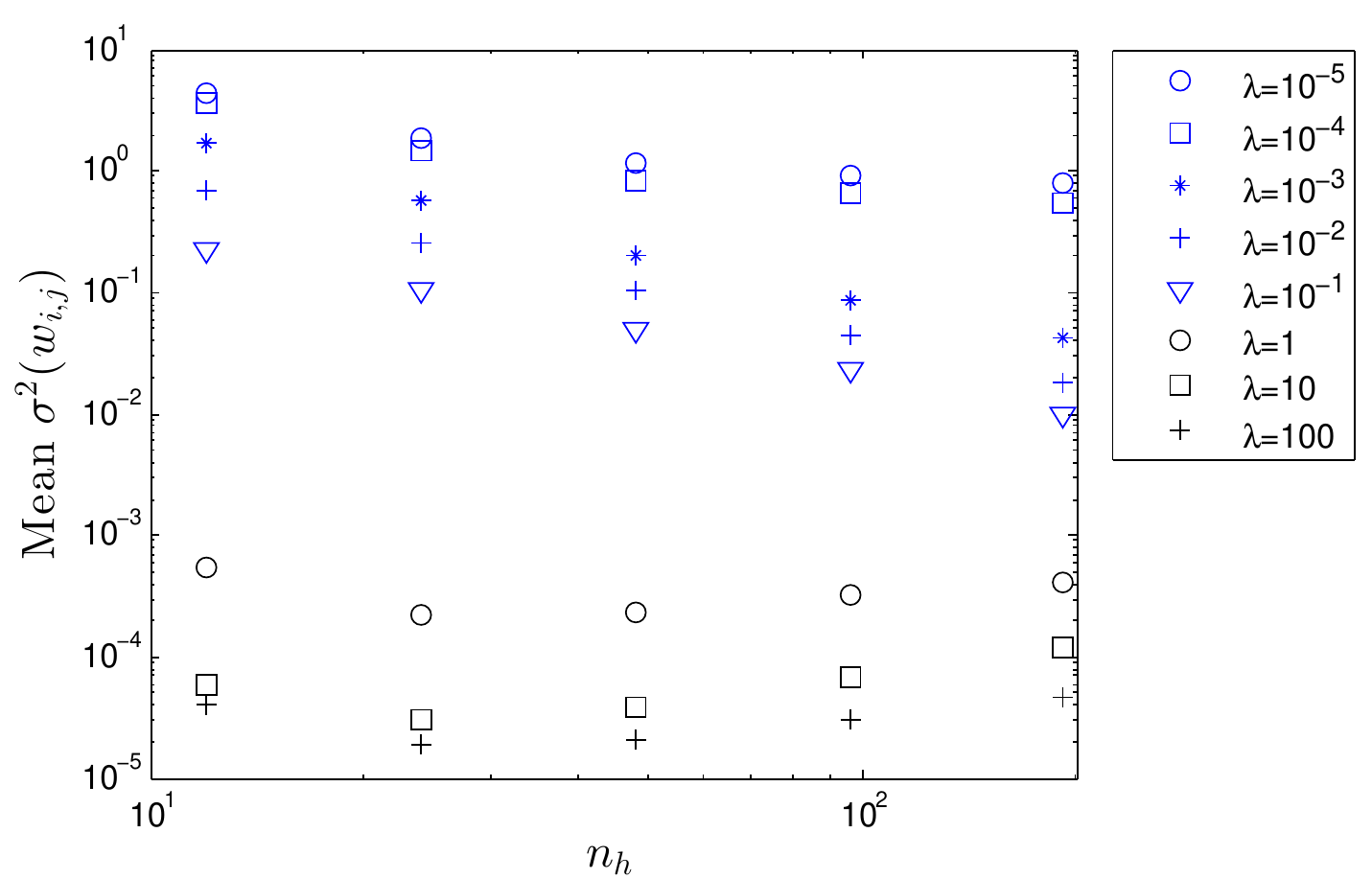}

(b)
\end{minipage}
\begin{minipage}{0.49\linewidth}
\includegraphics[width=\linewidth]{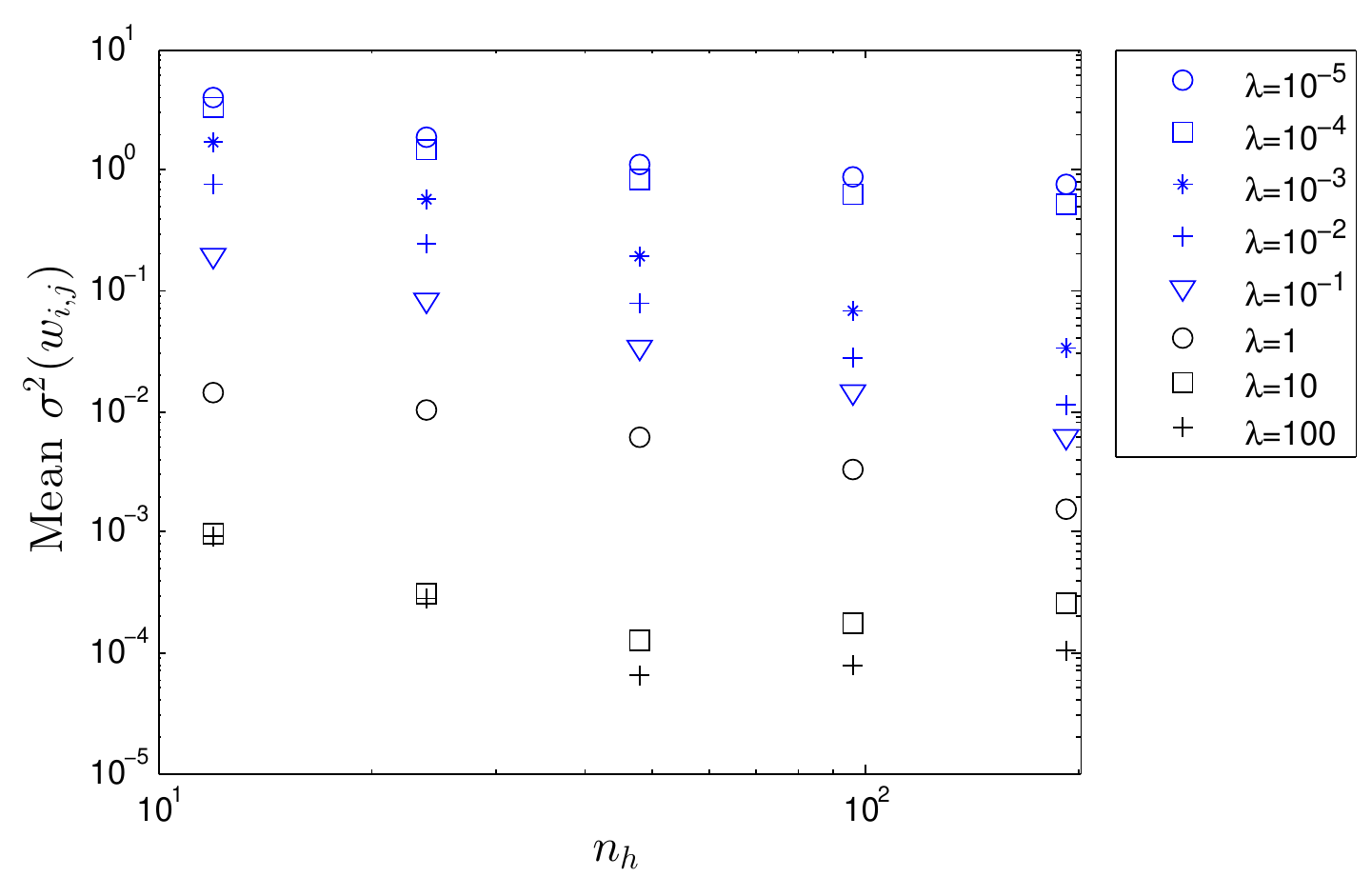}

(c)

\end{minipage}
\begin{minipage}{0.49\linewidth}
\includegraphics[width=\linewidth]{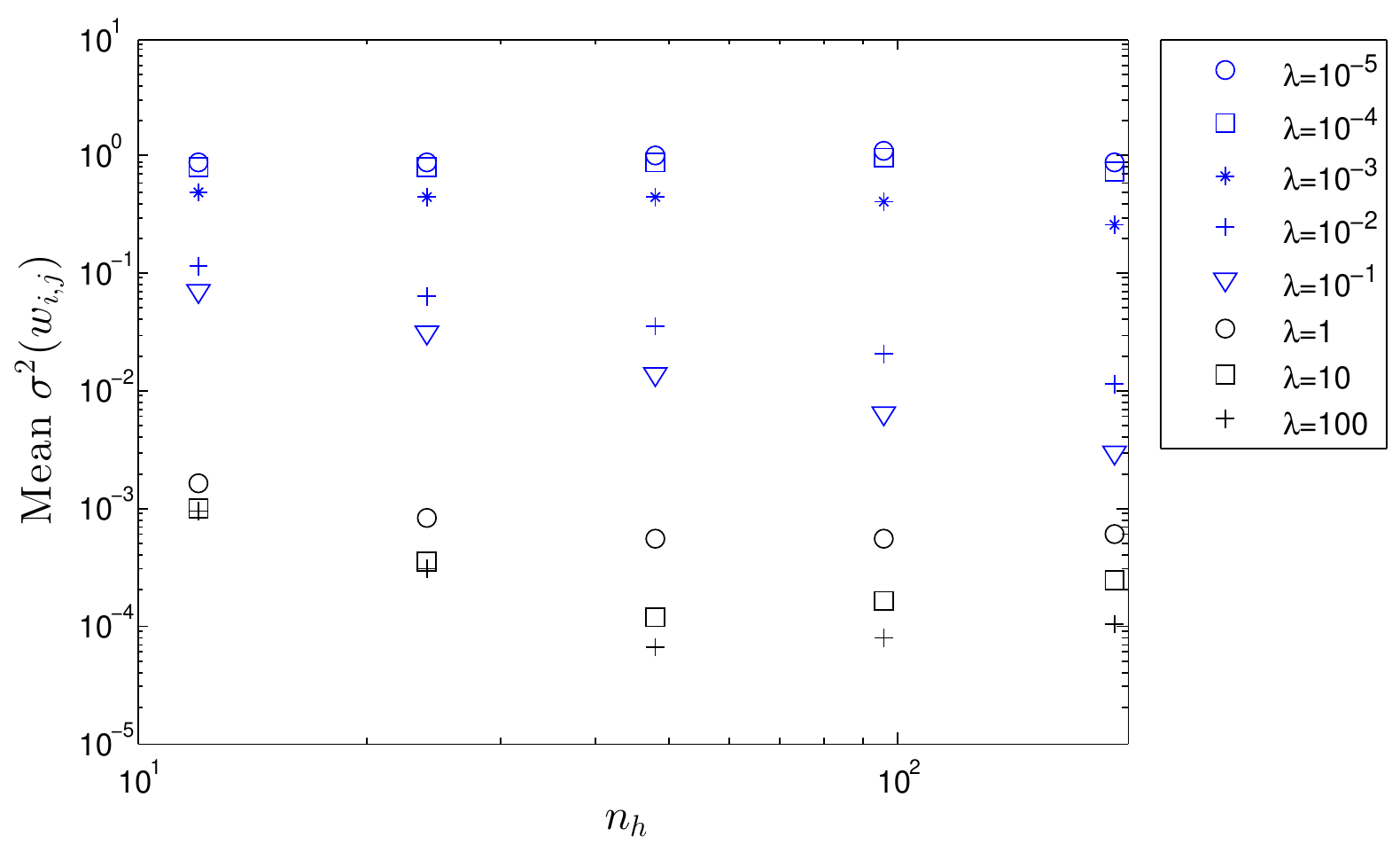}

(d)

\end{minipage}

\caption{Mean value of the variance of the weights for a RBM with (a) $n_v=12$, (b) $n_v=24$, (c) $n_v=48$ and (d) $n_v=48$ with $n_h=\{12,24,48,96,192\}$.  Each RBM was trained using $200,000$ epochs on our synthetic data set via CD$-1$ and learning rate $0.01$.  (a)--(c) use zero noise in the training set whereas (d) gives a $20\%$ chance of each bit flipping.  $100$ different local optima were averaged for each point in the plot.  \label{fig:sigma_CD_scale}}
\end{figure}

The quantity $\kappa$ is small because the mean--field approximation is very close to the true Gibbs state if the edge weights are small.  This can be seen from the Table in~\fig{kappa}(d) and~\fig{KL_nv=4}, which give the mean values of the  KL--divergence between the mean--field approximation and the true Gibbs state for the random RBMs.  
${\rm KL}(Q||P)$ tends to be less than $0.1$ for realistic weight distributions, implying that the mean--field approximation will often be very close to the actual distribution.  The KL--divergence is also the slack in the variational approximation to the log--partition function (see~\sec{meanfield}). This means that the data in~\fig{kappa} also shows that  $Z_{Q}$ will closely approximate $Z$ for these small synthetic models.  

There are two competing trends in the success probability.  As the mean--field approximation begins to fail, we expect that $\kappa$ will diverge.  On the other hand, we also expect $Z/Z_{Q}$ to increase as the ${\rm KL}$--divergence between $Q$ and $P$ increase.
We can better understand the scaling of the error by taking the scaling of $Z/Z_{Q}$ into consideration.  $Z_{Q}$ is a variational approximation to the partition function that obeys $\log(Z_{Q}) =\log(Z) - {\rm KL}(Q||P)$, which implies 
\begin{equation}
P_{\rm success} \ge \frac{Z}{Z_{Q} \kappa}= \frac{e^{{\rm KL}(Q||P)}}{\kappa}\ge \frac{1}{\kappa}.
\end{equation}
The data in~\fig{kappa} and~\fig{KL_nv=4} shows that ${\rm KL}(Q||P)$ empirically scales as $O(\sigma^2(w_{i,j}) E)$ ,where $E$ is the number of edges in the graph and $\sigma(w_{i,j})$ is the standard deviation in the weights of the synthetic models.  Thus we expect that (a) $P_{\rm success} \approx 1/\kappa$ if $\sigma^2(w_{i,j})\in O(1/E)$ and (b) $\kappa-1 \in O(\sigma^2(w_{i,j}) E)$ for $\sigma^2(w_{i,j}) E \ll 1$.  Thus our algorithms should be both exact and efficient if $\sigma^2(w_{i,j}) E$ is small for models that typically emerge in the training process.

We investigate this issue further, shown in~\fig{sigma_CD_scale}, by computing the typical distribution of weights for a RBM with $12,24$ and $48$ visible units and a variable number of hidden units.  This allows us to examine the scaling with the number of edges for a relatively large RBM trained using contrastive divergence.  Although the weights learned via contrastive divergence differ  from those learned using our quantum algorithm, we see in~\sec{compare} that these differences are often small for RBMs and so contrastive divergence gives us a good estimate of how $\sigma(w_{i,j})$ scales for large models that naturally arise through the training process.  We note from~\fig{sigma_CD_scale} that the standard deviation of the weights drops rapidly as more hidden units are added to the model.  This is because regularization (i.e., $\lambda>0$) provides a penalty for adding edges to the model.  The variance in the weights for $\lambda=0.1$ to $\lambda=0.001$  decays faster than the $\Theta(1/E)$ scaling that is expected to result in both high success probability and $Z_{Q} \approx Z$ for (a)-(c).  The results in (d) are qualitatively similar but the necessary scaling only holds for $\lambda=0.1$ to $\lambda=0.01$.  These values for regularization constants are typical of values used in practical ML algorithms and so we do not expect that the scaling of $\kappa$ nor the errors that arise from taking $Z\approx Z_{Q}$ will be an obstacle for applying our methods (or natural generalizations thereof) to practical machine learning problems.


\begin{figure}[t!]
\includegraphics[width=0.45\linewidth]{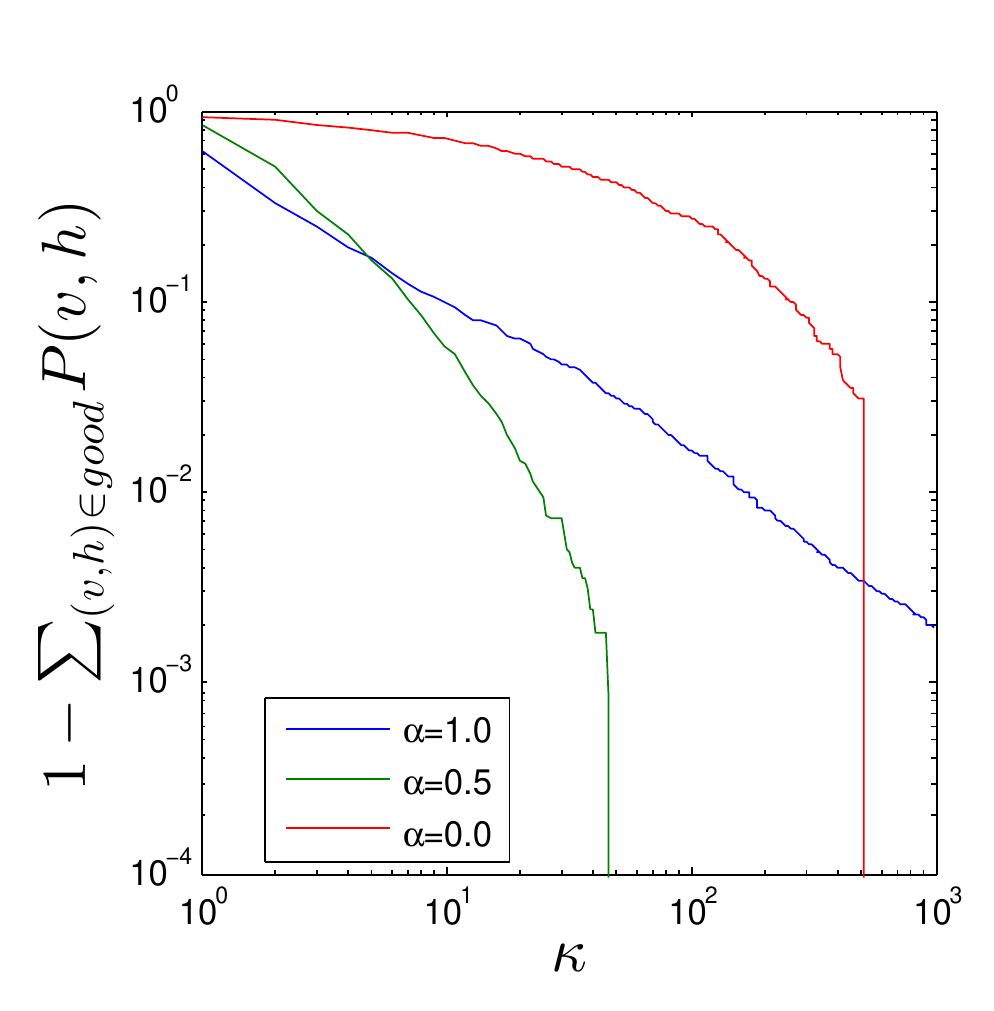}
\caption{Median probability mass of configurations such that $P(v,h)\ge 1$ for an RBM with $n_v=6$ and $n_h=8$ trained using the synthetic training set described in the main body with $\mathcal{N}=0$.}
\label{fig:hedge}
\end{figure}

Several strategies can be used to combat low success probability for the preparation of the Gibbs state.
In the event that the success probability is unacceptably low a more accurate estimate of the partition function than $Z_{Q}$ can be used in the algorithm~\cite{Xin02, OW01,SM08,SH09}.  \alg{deriv} can be used instead of~\alg{derivAE} to provide a quadratic advantage in the success probability.  The value of $\kappa$ chosen can also be decreased, as per \lem{kappa}.  In extreme cases, the regularization constant can also be adjusted to combat the emergence of large weights in the training process;  however, this runs the risk of producing a model that substantially underfits the data.

Hedging can also be used to address the issues posed by values of $\kappa$ that may be impractically large.  We observe this in~\fig{hedge} wherein the probability mass of states such that $\mathcal{P}(v,h)\ge 1$ is investigated.  We see in such cases that no hedging ($\alpha=1$) is superior if an accurate state is desired with a minimal value of $\kappa$.  A modest amount of hedging ($\alpha=0.5$) results in a substantial improvement in the accuracy of the state preparation:  $\kappa=50$ is sufficient to achieve perfect state preparation as opposed to $\kappa>1000$ without hedging.  The case of $\alpha=1$ is inferior to either $\alpha=0$ or $\alpha=0.5$ except for the fact that $Q(v,h)$ no longer needs to be computed coherently in those cases (the mean--field calculation is still needed classically to estimate $Z$).  However, as we see later hedging strategies tend not to be as successful for larger problems, whereas using the mean--field state as the initial distribution $(\alpha=1)$ tends to lead to nearly constant values of $\kappa$ as the size of the systems increases.
\begin{figure}[t!]
\begin{minipage}{0.45\linewidth}
\includegraphics[width=\linewidth]{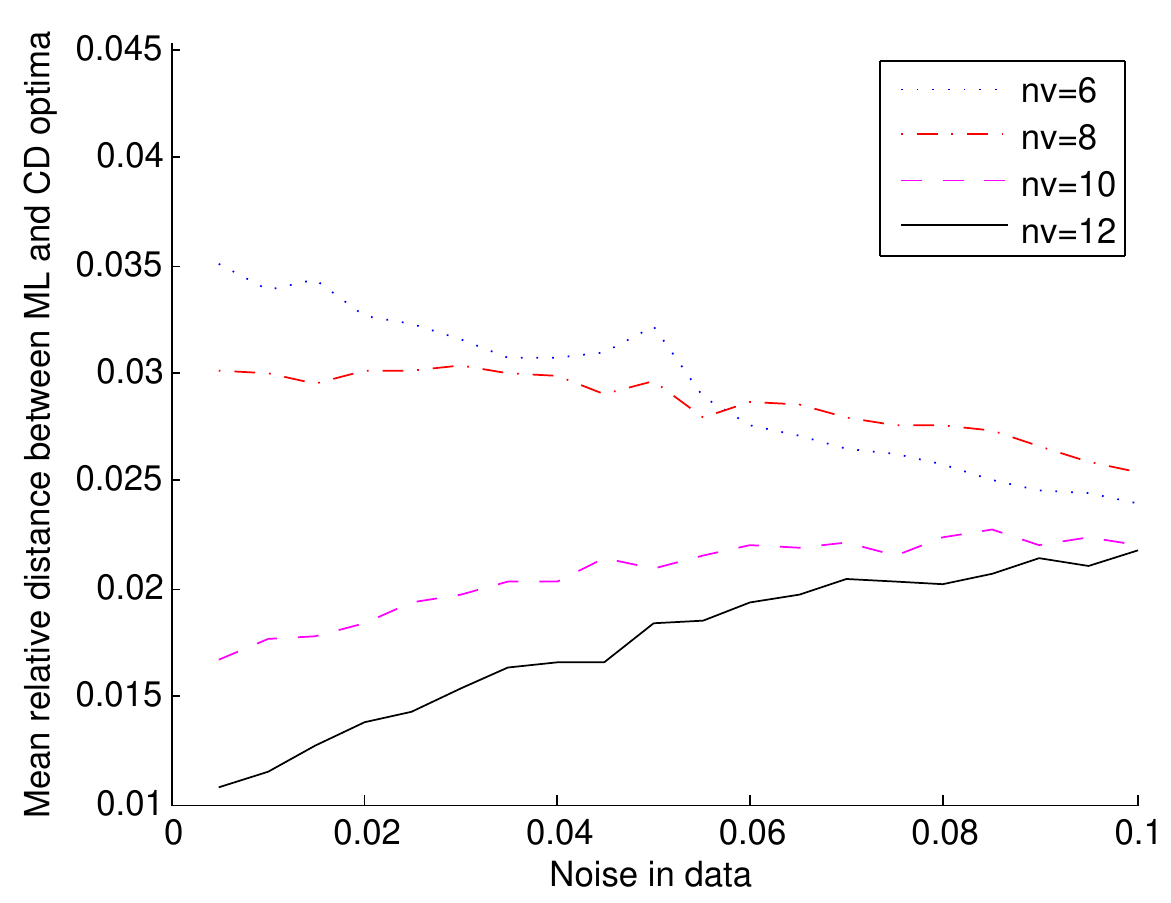}
\\
(a)
\end{minipage}
\hspace{1mm}
\begin{minipage}{0.45\linewidth}
\includegraphics[width=\linewidth]{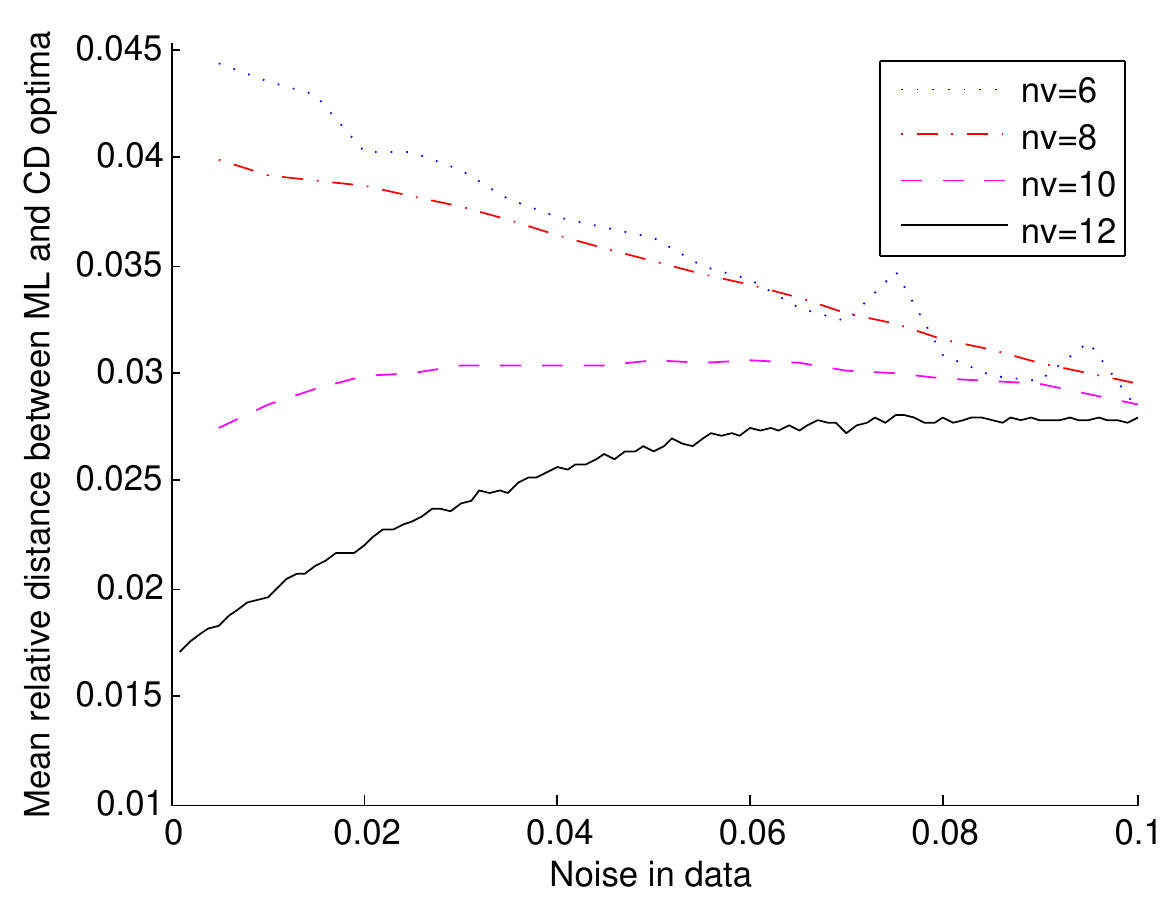}\\
(b)
\end{minipage}
\caption{Relative distances between the CD optima and ML optima for RBMs with $n_v$ visible units and (a) $4$ hidden units and (b) $6$ hidden units.\label{fig:CDMLdist}}
\end{figure}

\begin{figure}
\begin{minipage}{0.45\linewidth}
\includegraphics[width=\linewidth]{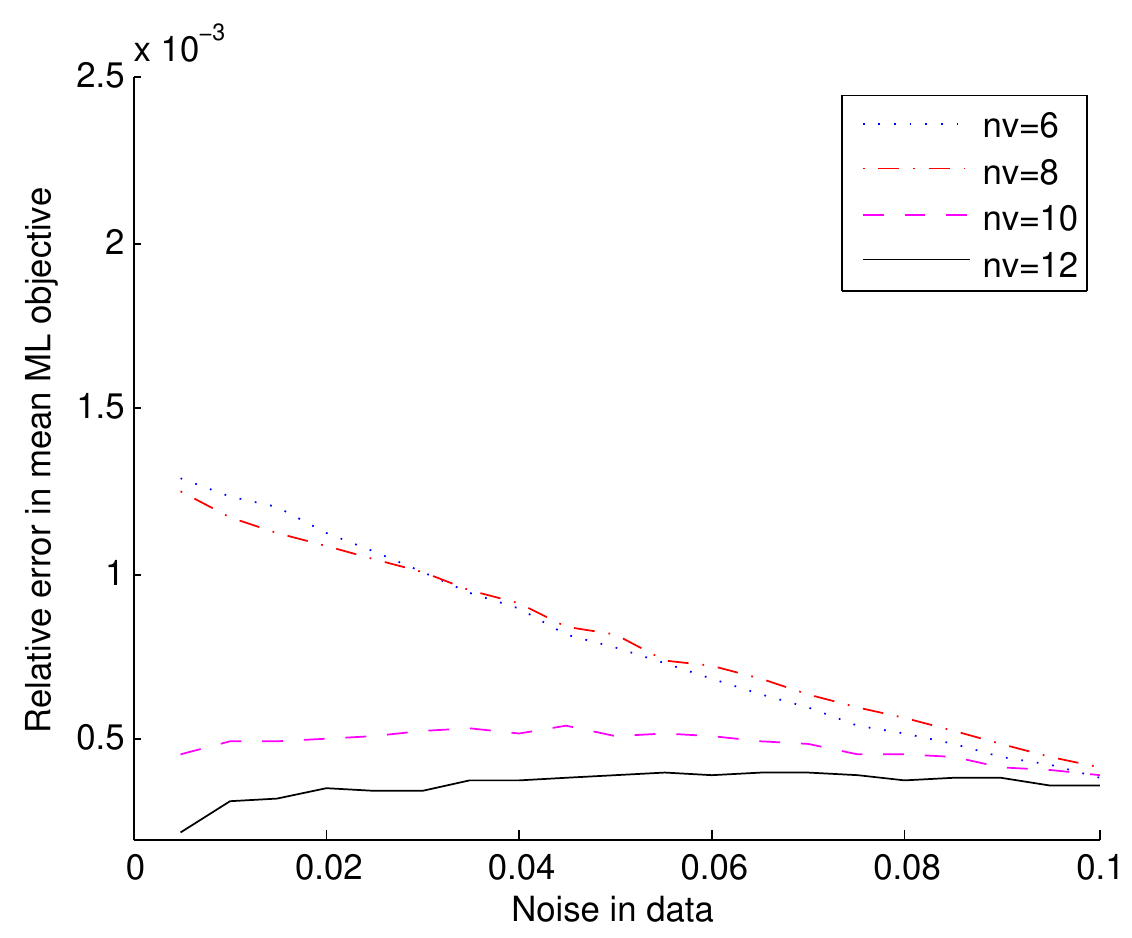}
\\
(a)
\end{minipage}
\hspace{1mm}
\begin{minipage}{0.45\linewidth}
\includegraphics[width=\linewidth]{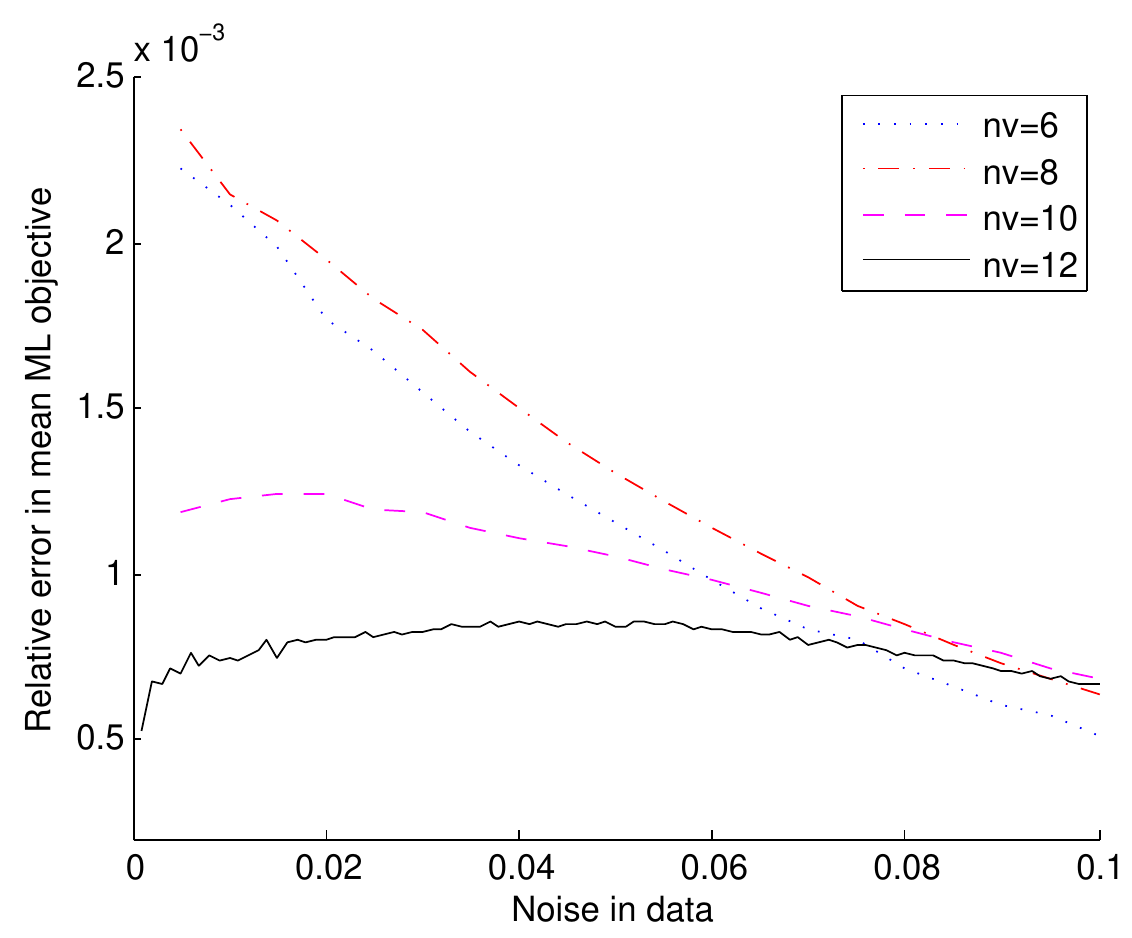}\\
(b)
\end{minipage}
\caption{Relative difference in the values of the ML objectives computed at the CD optima and the ML optima for RBMs with $n_v$ visible units and (a) $4$ hidden units and (b) $6$ hidden units.\label{fig:CDMLobj}}
\end{figure}

\subsection{Comparison of CD$-1$ to ML learning}\label{sec:compare}
An important advantage of our quantum algorithms is that they provide an alternative to contrastive divergence for training deep restricted Boltzmann machines.  We now investigate the question of whether substantial differences exist between the optima found by contrastive divergence and those found by optimization of the ML objective.
We train using CD--ML on single--layer RBMs with up to 12 visible units and up to 6 hidden units and compute the distance between the optima found after first training with CD$-1$ and then training with ML, starting from the optima found using CD.  
We find that the locations of the optima found using both methods differ substantially.

\fig{CDMLdist} illustrates that the distances between the contrastive divergence optima and the corresponding ML optima
are quite significant.  
The distances between the models found by ML training and CD training were found by flattening the weight matrix to a vector, concatenating the result with the bias vectors, and computing the Euclidean distance between the two vectors.  Differences on the order of a few percent are observed in the limit of no noise.  
This suggests that the models learned using CD optimization and ML optimization can differ substantially.  We see that these differences tend to increase as more hidden units are added to the model and that adding Bernoulli noise to the training data tends to cause the differences in the relative distances that arise from varying $n_v$ to shrink.

\fig{CDMLobj} shows that there are differences in the quality of the ML optima found as a function of the Bernoulli noise added to the training data, where quality is determined based on the value of $O_{\rm ML}$ at that point.  The relative errors observed tend to be on the order of $0.1$ percent for these examples, which is small but non--negligible given that differences in classification error on this order are significant in contemporary machine learning applications.  The discrepancies in the values of $O_{\rm ML}$ follow similar trends to the data in~\fig{CDMLdist}.  

These results show, even for small examples, that significant differences exist between the locations and qualities of the CD and ML optima.  Thus our quantum algorithms, which closely approximate ML training, are likely to lead to improved models over current state-of-the-art classical methods based on contrastive divergence if a reasonably small value of $\kappa$ suffices.  This point also is significant for classical machine learning approaches, wherein the use of more costly variants of contrastive divergence (such as CD$-k$ for $k>1$) may also lead to significant differences in the quality of models~\cite{Tie08}.

\subsection{Training full Boltzmann machines under $O_{\rm ML}$}
The prior examples considered the performance of ML--based learning on single-- and multi--layer restricted Boltzmann machines.  Here we examine the quality of ML optima found when training a full Boltzmann machine with arbitrary connections between any two units. 
While classical training using contrastive divergence requires learning over a layered bipartite graph (dRBMs), our quantum algorithms do not need to compute the conditional distributions and can therefore efficiently train full Boltzmann machines given that the mean--field approximation to the Gibbs state has only polynomially small overlap with the true Gibbs state.  
The main question remaining is whether there are advantages to using a quantum computer to train such complete graphical models, and if such models exhibit superior performance over dRBMs.

\fig{BM} shows that the ML objective function found by training a full Boltzmann machine slowly improves the quality of the optima learned as the number of visible units increases.  Although this increase is modest over the range of $n_h$ considered, it is important to note that the value of the mean ML objective attained via BFGS optimization on a single--layer RBM with six visible and four hidden is approximately $-2.33$.  Even the full Boltzmann machine with $7$ units and $21$ edges provided a much better model than an RBM with $10$ units and $24$ edges.  Although this numerical example is quite small, it demonstrates the benefit of introducing full connectivity, namely intra--layer connections, to a Boltzmann machine and therefore suggests that our quantum learning algorithm may lead to better models than those that can be efficiently learned using existing methods.

\begin{figure}[t!]
\includegraphics[width=0.4\linewidth]{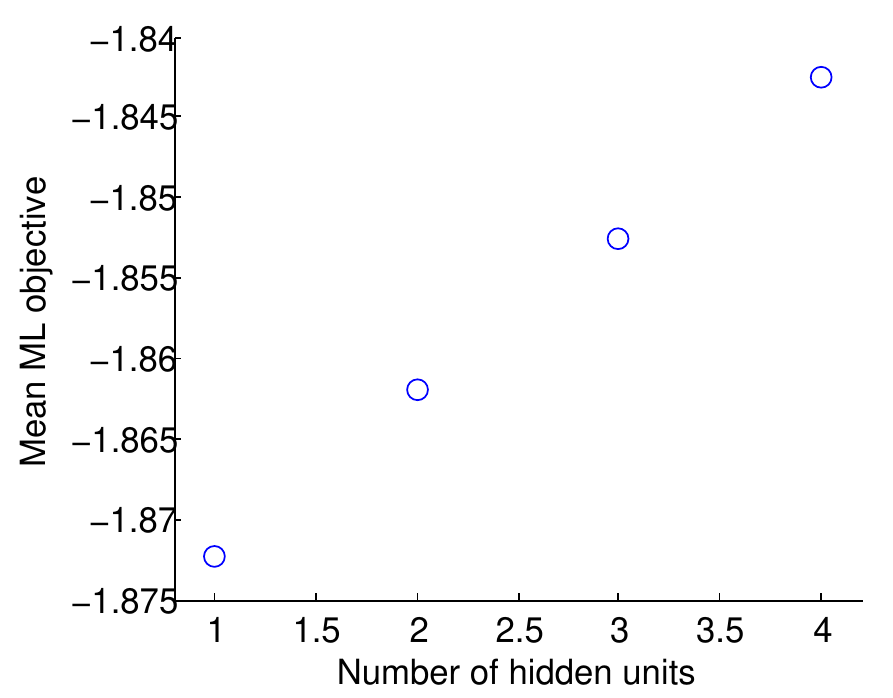}
\caption{$O_{\rm ML}$ for a fully connected Boltzmann machine with six visible units and one through four hidden units.\label{fig:BM}}
\end{figure}

\begin{figure}[t!]
\begin{minipage}{0.45\linewidth}
\includegraphics[width=\textwidth]{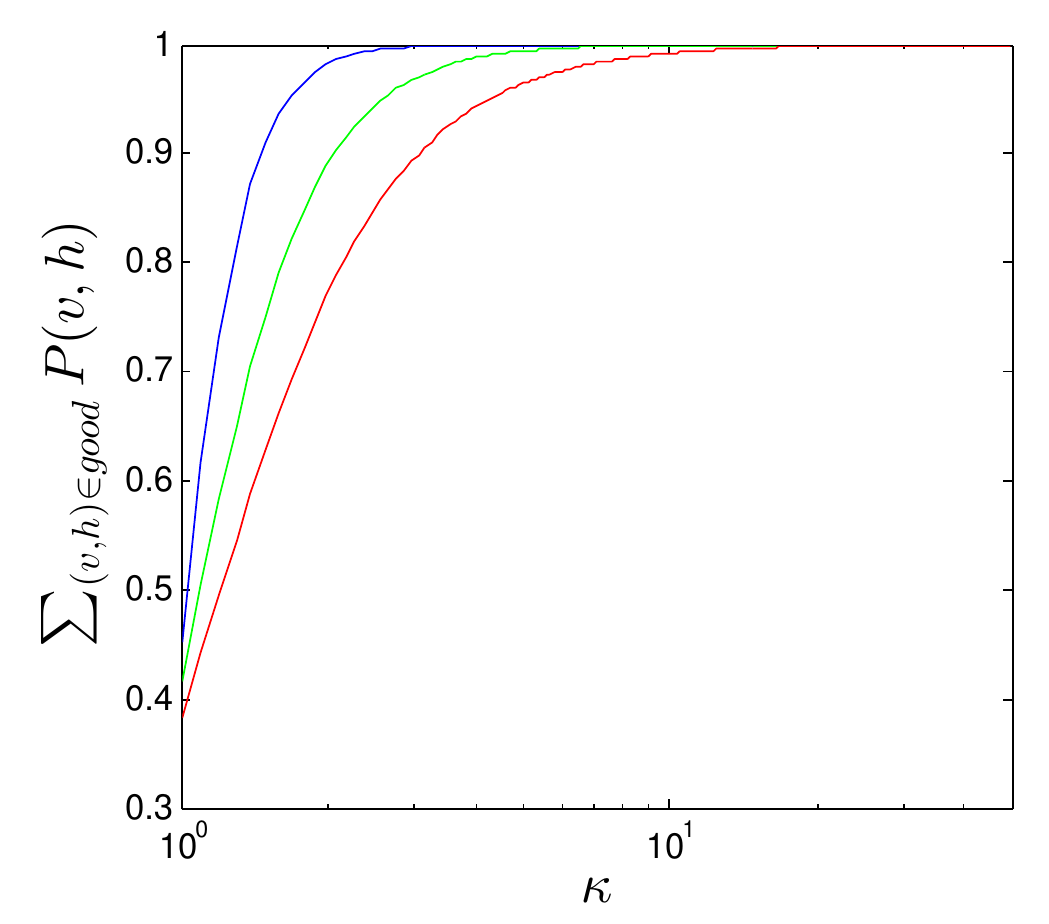}
(a) Standard deviation $=0.25$
\end{minipage}
\hspace{1mm}
\begin{minipage}{0.45\linewidth}
\includegraphics[width=\linewidth]{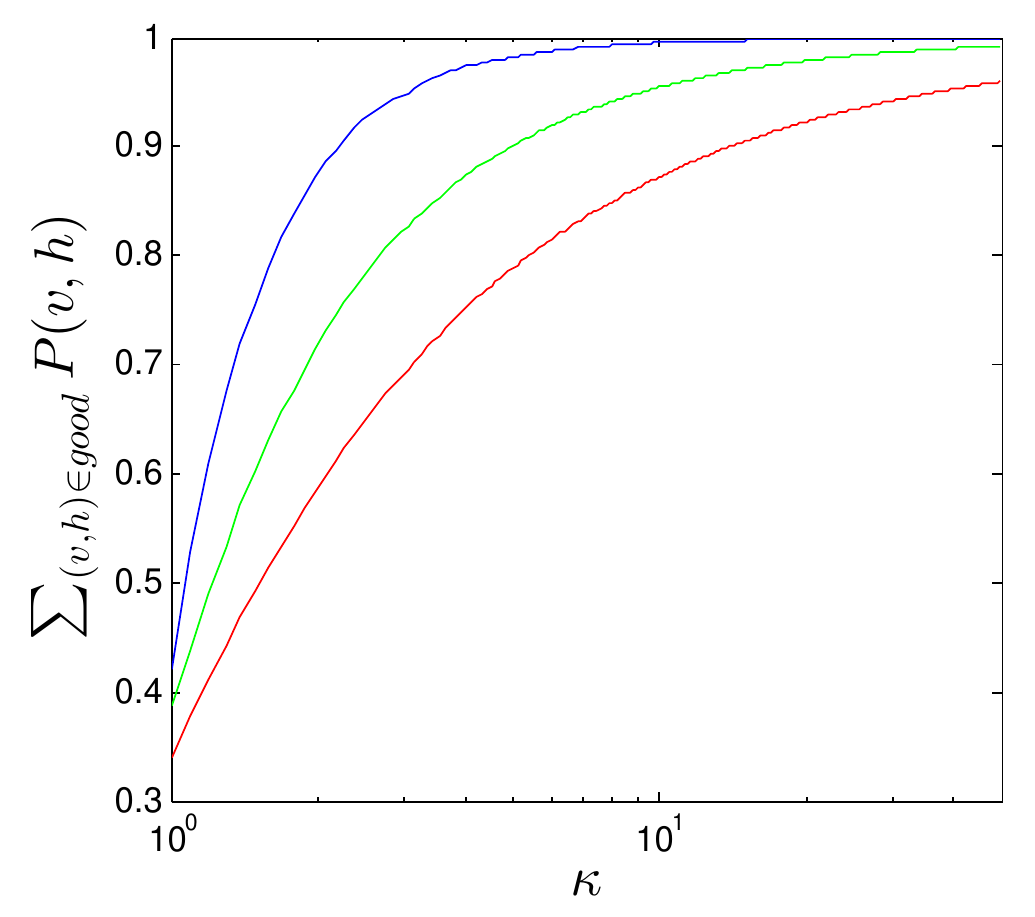}
(b) Standard deviation $=0.5$
\end{minipage}

\begin{minipage}{0.45\linewidth}
\includegraphics[width=\linewidth]{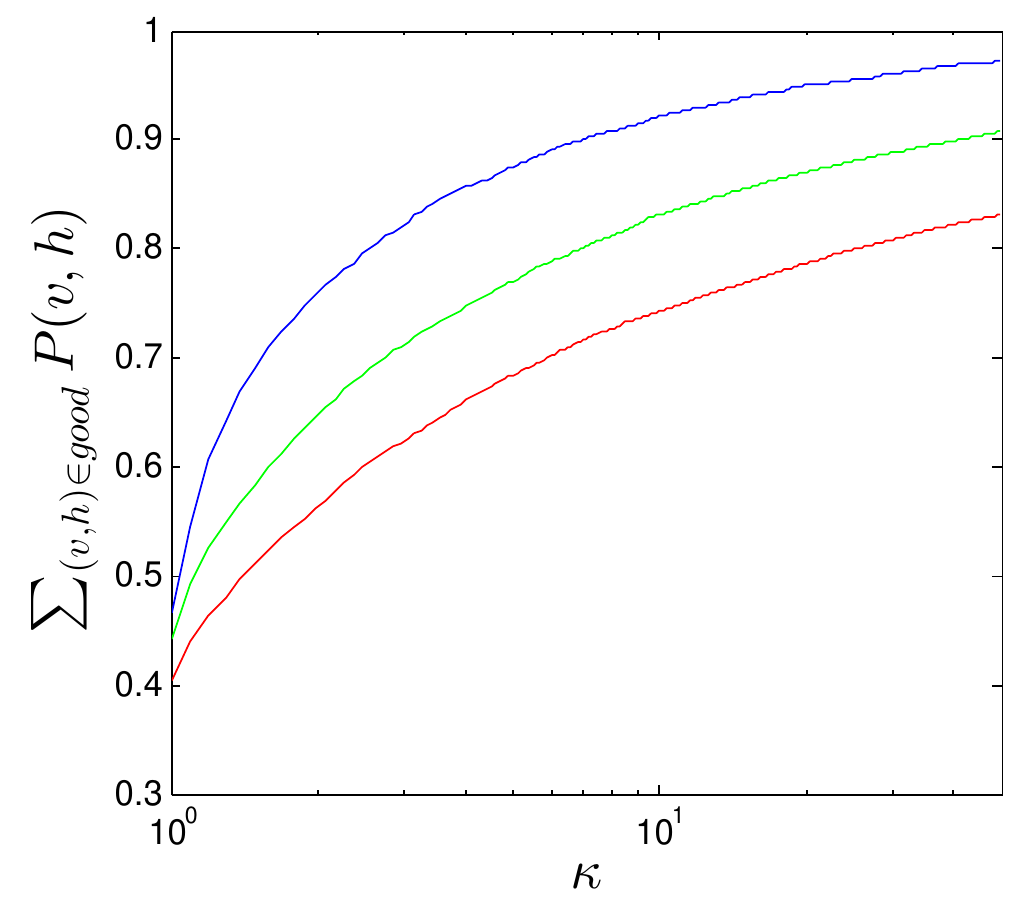}
(c) Standard deviation $=1$
\end{minipage}
\hspace{1mm}
\begin{minipage}{0.45\linewidth}
\begin{tabular}{|c|c|c|c|}
\hline
$\sigma(w_{i,j})$ & $n_v$ & Mean $ {\rm KL}(Q|| P)$&Mean $\ln(Z)$\\
\hline
0.25 & 8 & 0.0371&$6.673$ \\
0.25 & 14 & 0.1102&$12.020$\\
0.25 & 20 & 0.2193&$17.287$\\
\hline
0.5 & 8 & 0.1222&7.125\\
0.5 & 14 & 0.3100&13.701\\
0.5 & 20 & 0.5528&20.228\\
\hline
1 & 8 &0.3209&8.807\\
1 & 14 &0.5187&19.200\\
1 & 20 &0.7822&29.883\\
\hline
\end{tabular}\\
\vskip1.2em
(d)
\end{minipage}
\caption{Fraction of probability for which $\mathcal{P}(v,h) \le 1$ vs $\kappa$ for synthetic full Boltzmann machines on $n$ qubits where $n=8,14,20$ (from top to bottom) with weights randomly initialized according to a Gaussian distribution with zero mean and variance $\sigma^2$ and biases that are set similarly but with mean $0$ and unit variance..  The data shows that the mean--field approximation begins to rapidly break down for Boltzmann machines with that have large weights.  All data points are computed using $100$ random instances.\label{fig:kappaBM}}
\end{figure}

An important limitation of this approach is that the mean-field approximation tends to be much worse for Ising models on the complete graph than
it is for layered networks~\cite{Wai05}.  This means that the value of $\kappa$ needed may also be larger for these systems.  Although the results in~\cite{Wai05} show acceptable performance in cases of small edge weight, further work is needed to investigate the trade off
between model quality and training time for the quantum algorithm.

The remaining question is how does the success probability scale as a function of the standard deviation and number of units in the BM?  We examine this in~\fig{kappaBM} where we find qualitatively similar results to those seen in~\fig{kappa}.  The most notable difference is that we consider a wider range of visible units and larger standard deviations to illustrate how the algorithm can fail in cases where the mean--field approximation is no longer applicable.  This behavior is most notable in (c) of~\fig{kappaBM}, where values of $\kappa>50$ are needed for accurate state preparation.  In contrast, we see that in (a) that smaller weights tend to cause the state preparation algorithm to be much more efficient and $\kappa<20$ suffices to have perfect state preparation even for networks with $20$ hidden units.

Although the expectation values seem to indicate that  the success probability systematically shrinks as a function of $\sigma$, this is not necessarily true.  In our sampling we also find evidence that there are easy as well as hard instances for the state preparation.  This point can be seen in~\fig{95CI} where we plot a $95\%$ confidence interval and see, somewhat surprisingly, that many of the models conform to a mean--field distribution for the $\sigma=1$ data.  In fact, for small $\kappa$, the $95^{\rm th}$ percentile of $\sigma=1$ actually provides a more accurate approximation to the Gibbs state than the corresponding percentile for $\sigma=0.5$.  Conversely, $5^{\rm th}$ percentile for the $\sigma=1$ data has very poor fidelity and does not seem to scale qualitatively the same way with $\kappa$ as the rest of the data considered.  

This shows that although the situation here is qualitatively similar to that of the RBM, larger values of $\kappa$ will be required for the full Boltzmann machine to achieve the same fidelity as a RBM could achieve.  Small numerical studies, unfortunately, are inadequate to say conclusively whether the models that typically arise during training will conform to these easy cases or the hard cases.

The mean value of $\sum_{(v,h)\in {\rm good}} P(v,h)$ in~\fig{95CI} scales as $1-\kappa^{f(n,\sigma)}$ for some function $f(n,\sigma)$.  We find by fitting the data for $n=6,8,\ldots,20$ for $\sigma=0.25,0.5,0.75,1$ that $\sum_{(v,h)\in {\rm good}} P(v,h)-1 \in \kappa^{\Theta(-1/\sigma^{1.5}n)}$.  Hence if a value of $\kappa$ is sought that causes the error in the final Gibbs state to be $\delta$ then it suffices to take $\kappa \in \delta^{-\Theta(\sigma^{1.5} n)}$.  Thus these small numerical experiments suggest that state preparation for the full Boltzmann machine will likely be efficient and exact if $\sigma^2 \in o(n^{-4/3})$ and that it may be inefficient or approximate otherwise.

\begin{figure}
\includegraphics[width=\linewidth]{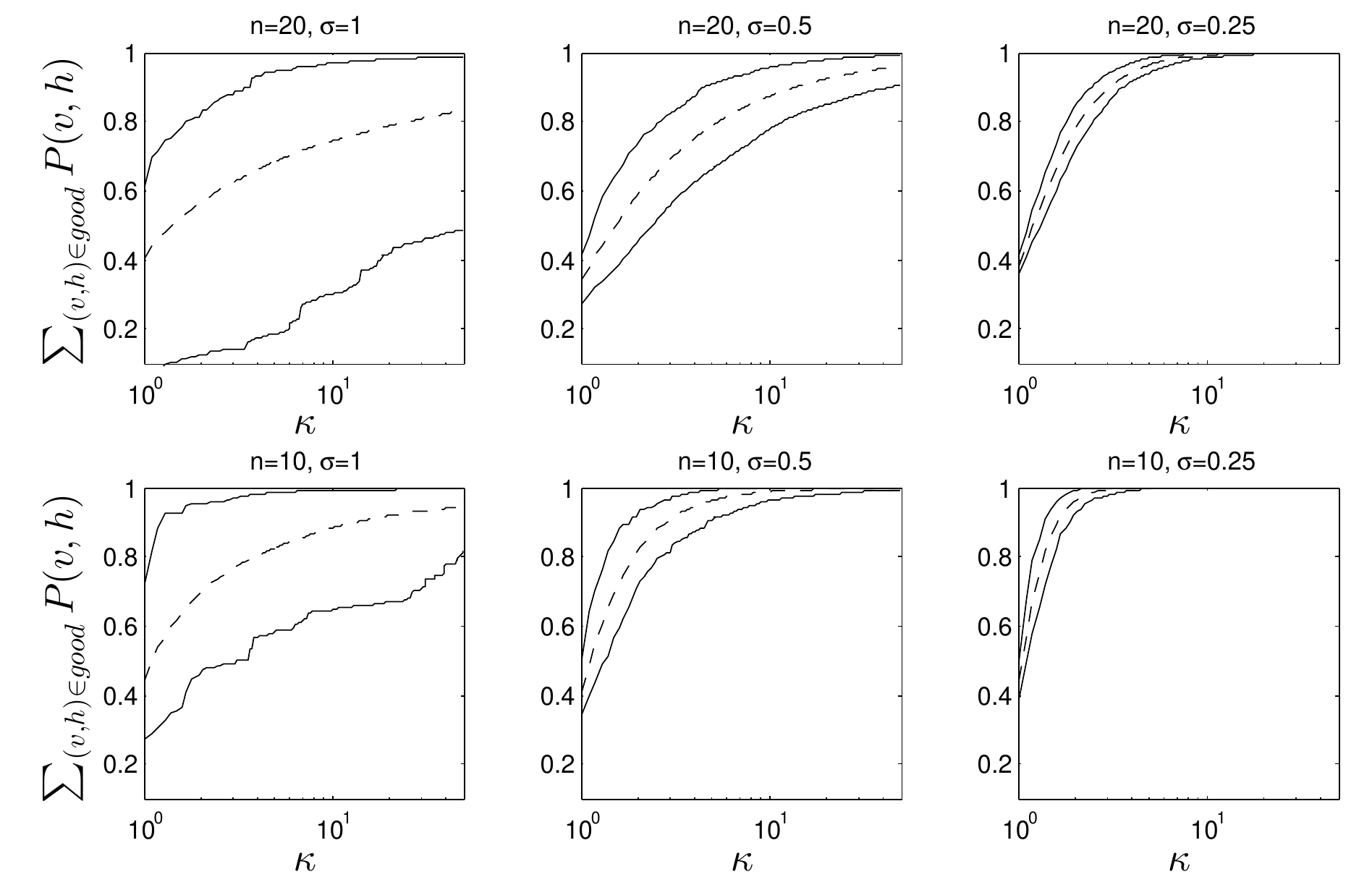}
\caption{Expectation values (dashed lines) and $95\%$ confidence intervals for the fraction of the probability distribution that cannot be prepared properly using a fixed value of $\kappa$ as a function of $\kappa$ for unrestricted Boltzmann machines. The data shows that  the distribution is not strongly concentrated about the mean for $\sigma\ge 1$. \label{fig:95CI}}
\end{figure}

\begin{figure}[t!]
\includegraphics[width=0.7\linewidth]{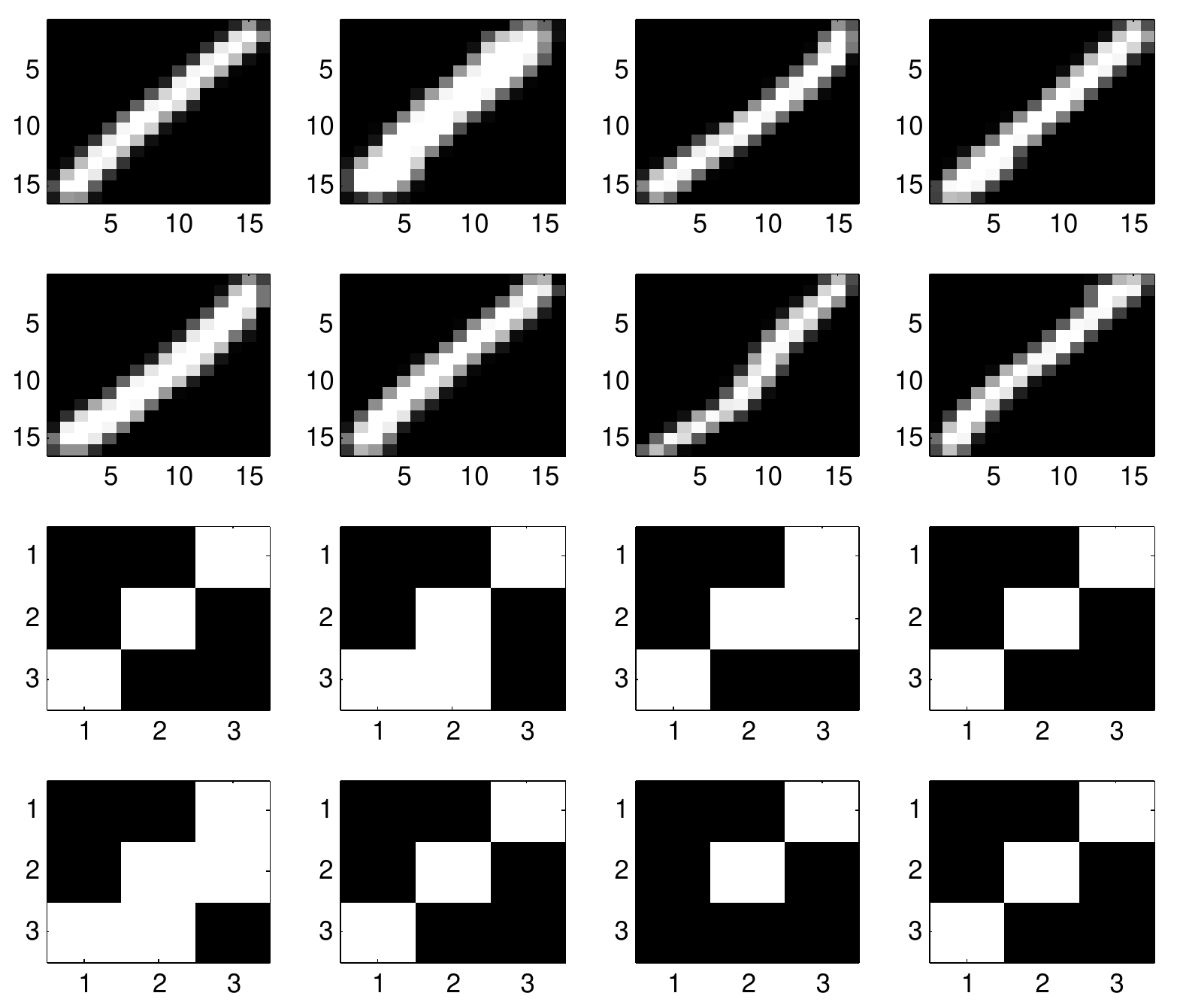}
\caption{Examples of original (top) and sub--sampled (bottom) MNIST ones digits.\label{fig:mnistexamp}}
\end{figure}

\section{Training using sub--sampled MNIST data}
An important criticism of the previous numerical results is that they only examine synthetic data sets.  In order to provide some insight about whether natural data sets are qualitatively different from the synthetic examples we now consider training examples that are taken from the MNIST database of handwritten digits.  The MNIST digits are $16\times 16$ greyscale images and as a result we cannot directly compute $O_{\rm ML}$ for them as computing $P(v,h)$ requires computing $e^{-E(v,h)}$ for a minimum of $2^{256}$ distinct configurations.  We instead look at a simpler problem that consists of $3\times 3$ coarse--grained versions of the original images.  As the resolution of these examples is low enough that the digits can be very difficult to distinguish, we focus on the training examples that are assigned the label of ``1'' in order to avoid confusion with other digits that may appear similar on a $9$ pixel grid.  The resultant $400$ training examples are found by dividing the image into thirds, computing expectation value of all the pixels within that third of the image and setting the corresponding pixel in the $3\times 3$ image to that value.  We then round the pixels to binary values using the mean value of the pixels as a threshold.  The results from this subsampling procedure are illustrated in~\fig{mnistexamp}.

\fig{realCDML} compares the quality of optima found via CD--1 and gradient ascent on $O_{\rm ML}$ for CD--ML experiments as a function of the number of hidden units ($n_h\in\{4,6,8,10\}$).  The expectation values were found by using $1000$ random restarts for the training procedure.  We observe that the relative differences between the locations of the optima found in these experiments vary by a few percent whereas the difference in $O_{\rm ML}$ varies by as much as half a percent.  These differences are comparable to those observed for the relative discrepancies in the trained models and quality of the resultant optima found for the synthetic training sets used in the main body.  

We see evidence that the discrepancies between contrastive divergence training and ML training grow approximately linearly with the number of hidden units in the graph.  We cannot say with confidence that this constitutes a linear scaling in the asymptotic regime because the discrepancies grow modestly with $n_h$ and polynomial or exponential scalings cannot be excluded.

\begin{figure}[t!]
\includegraphics[width=0.5\linewidth]{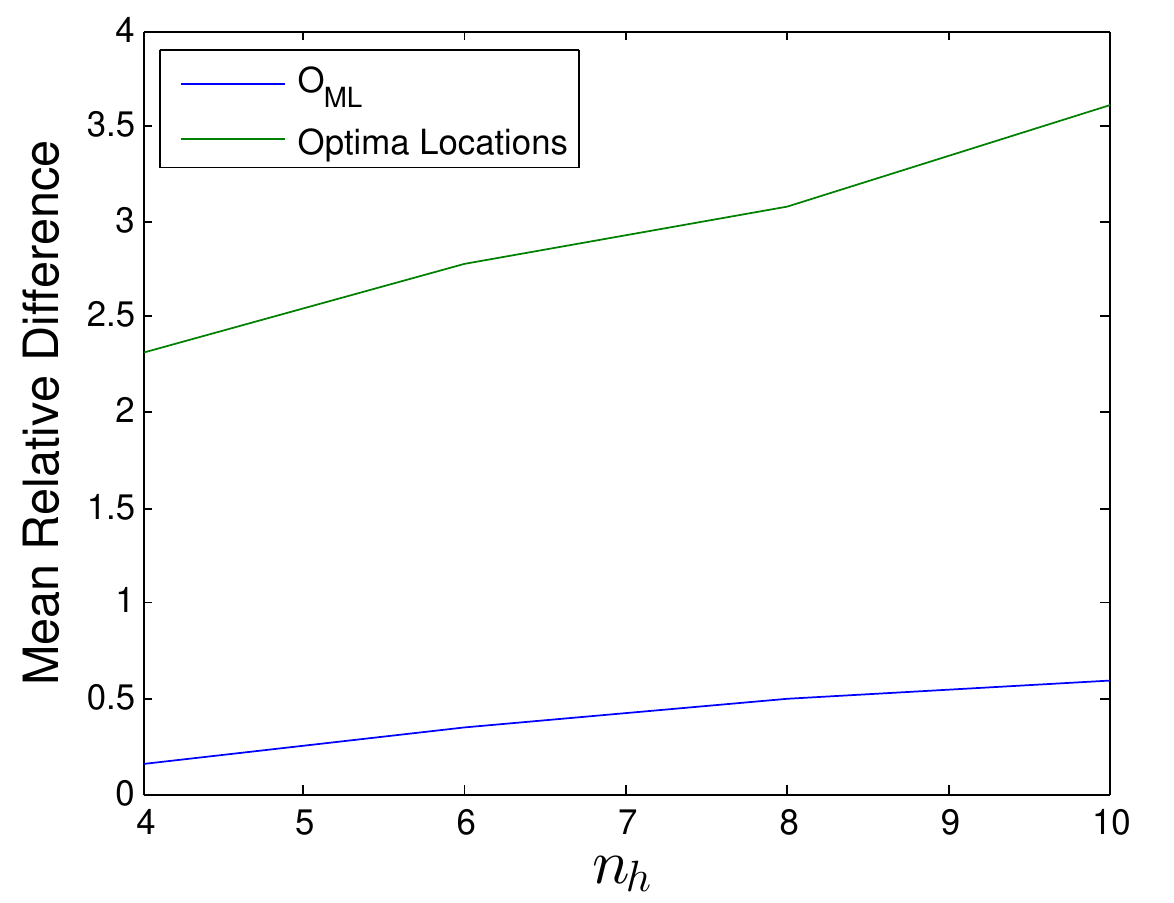}
\caption{Mean relative distance (measured in percent discrepancy) between ML optima and the corresponding relative difference in $O_{\rm ML}$ at the locations of these optima for CDML experiments on RBMs trained using MNIST examples.  A regularization constant of $\lambda=0.01$ is used in all these experiments.\label{fig:realCDML}}
\end{figure}

Next we examine the dependence on the quality of the Gibbs state prepared as a function of $\kappa$ and $n_h$ for models that are trained using ML optimization.  We choose these models as they are typical examples of models inferred towards the end of the training process, whereas the scaling of $\kappa$ found for random models typifies models that arise at the beginning of the training.  The results we obtain in~\fig{kappaReal} show qualitatively similar behavior to those observed for the synthetic examples.  We see that, in all cases the mean--field ansatz initially does a surprisingly good job of predicting the probability of a configuration: it under--estimates roughly $10-15\%$ of the probability mass.  Also a modest amount of hedging results in substantial improvements in the ability of the system to exactly prepare the Gibbs state, with a value of $\kappa$ that is less than $1000$ needed in the vast majority of the cases considered.  The value $\alpha=0.5$ is unlikely to be optimal for the cases considered.  In practice, scanning over $\alpha$ to find an appropriate value may be preferable to choosing any of the three values of $\alpha$ given in~\fig{kappaReal}.

\begin{figure}[t!]
\begin{minipage}{0.45\linewidth}
\includegraphics[width=\textwidth]{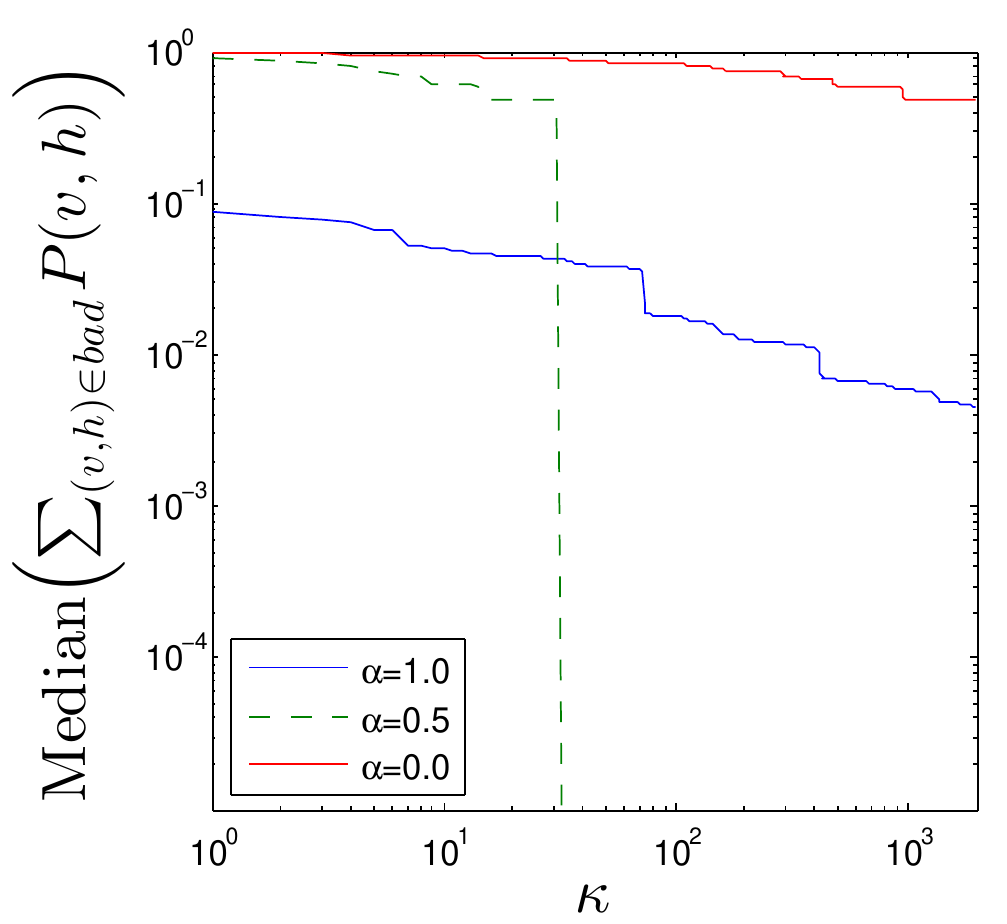}
(a) $n_h=4$
\end{minipage}
\hspace{1mm}
\begin{minipage}{0.45\linewidth}
\includegraphics[width=\linewidth]{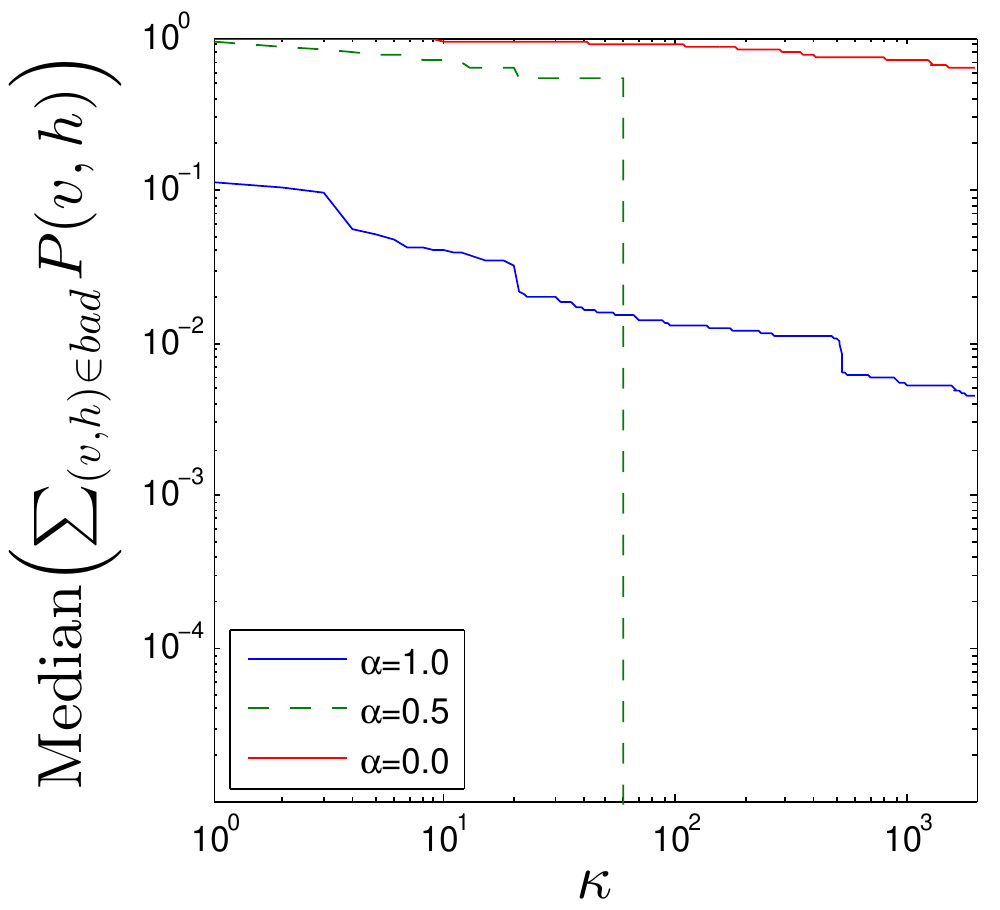}
(b) $n_h=6$
\end{minipage}

\begin{minipage}{0.45\linewidth}
\includegraphics[width=\linewidth]{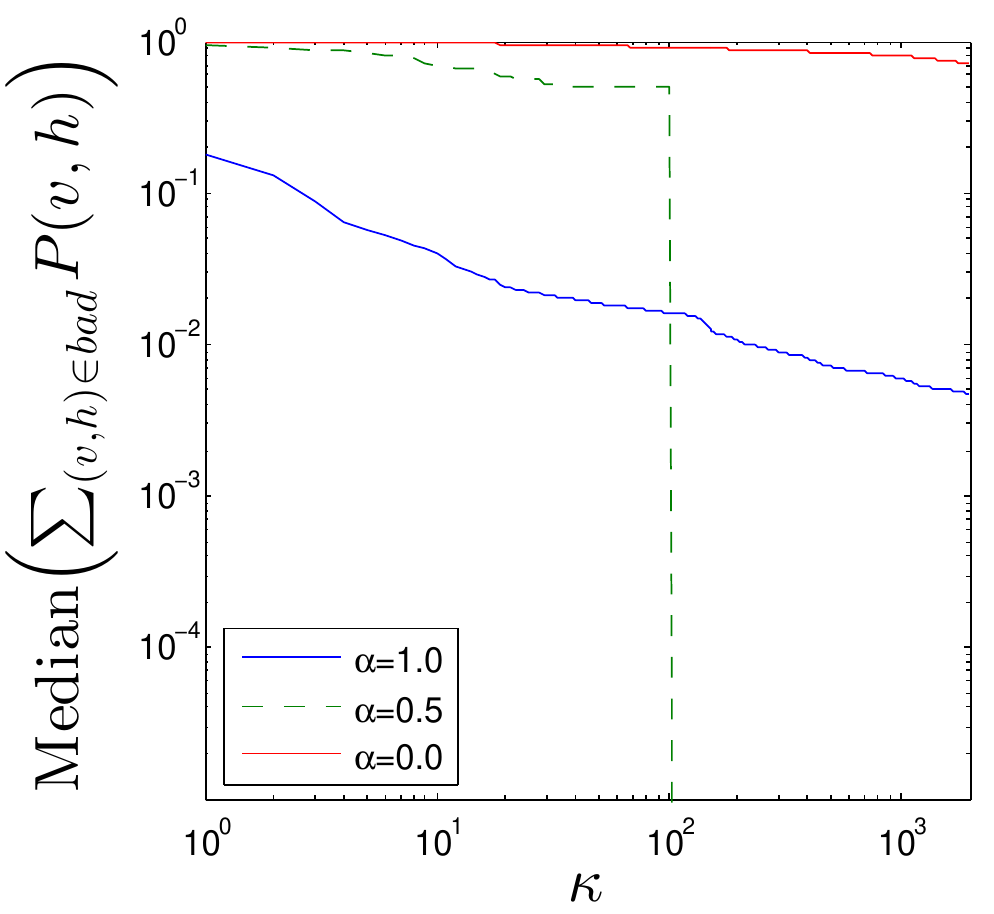}
(c) $n_h=8$
\end{minipage}
\hspace{1mm}
\begin{minipage}{0.45\linewidth}
\includegraphics[width=\linewidth]{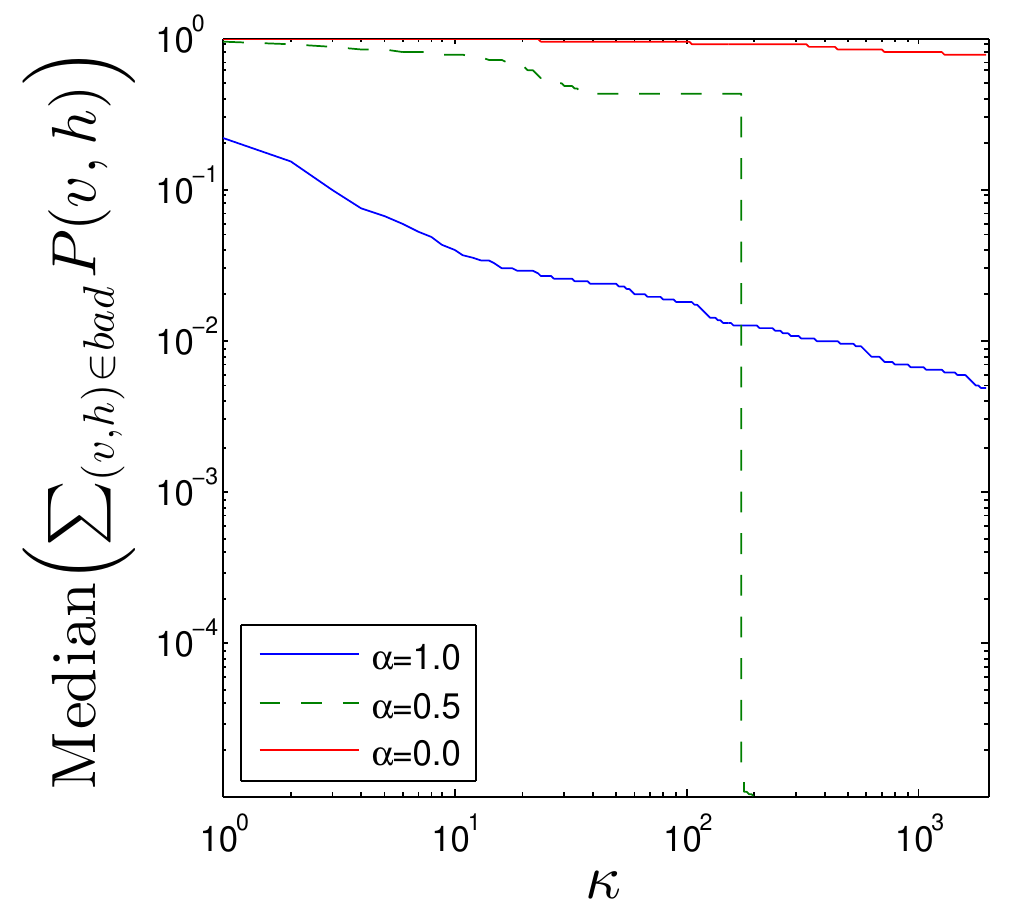}
(c) $n_h=10$
\end{minipage}
\caption{Fraction of probability for which $\mathcal{P}(v,h) \ge 1$ vs $\kappa$ for RBMs trained on truncated MNIST data with a varying number of hidden units and different hedging parameters.\label{fig:kappaReal}}
\end{figure}

The data in~\fig{kappaReal} shows that the mean--field approximation remains much more stable as we add more nodes to these graphical networks.  In particular, the median probability mass of the bad configurations is nearly a constant over all the data considered at $\kappa=2000$.  In contrast, we see evidence for slight variation in the median at $\kappa=1$.  The scaling of the value of $\kappa$ where the system transitions from imperfect state preparation to exact state preparation for $\alpha=0.5$ is unclear even for these small examples; the data is consistent with both a power--law scaling and an exponential scaling with the number of units.  Larger numerical experiments may be needed to distinguish these two possibilities, but in either case the scaling with $\alpha=0.5$ is qualitatively different than that observed for the mean--field initial state $(\alpha=1)$.  Most importantly the results show that nothing qualitatively changes when we transition from synthetic training data to subsampled MNIST training data.

Both the real and synthetic examples considered suffer from the fact that very strong correlations emerge in the model owing to the strong patterns that are present in our training data.  Such correlations are likely deleterious for the mean--field approximation and so structured mean--field approximations may lead to much better fidelity in the small examples that can be simulated using a classical computer.  The issue of how to choose an initial prior distribution for the true likelihoods of the configurations of the system remains an important issue in the field and provides an important way in which our quantum deep learning algorithms can be optimized to improve both the speed of training and the quality of the resultant models.

\section{Review of mean--field theory}\label{sec:meanfield}

The mean--field approximation is a variational approach that finds an uncorrelated distribution, $Q(v,h)$, that has minimal KL--divergence with the joint probability distribution $P(v,h)$ given by the Gibbs distribution.  The main benefit of using $Q$ instead of $P$ is that $\left\langle v_ih_j \right\rangle_{\rm model}$ and $\log(Z)$ can be efficiently estimated using mean--field approximations~\cite{Jor99}.  A secondary benefit is that the mean--field state can be efficiently prepared using single--qubit rotations.
More concretely, the mean--field approximation is a distribution such that
\begin{equation}
Q(v,h) = \left(\prod_{i} \mu_i^{v_i}(1-\mu_i)^{1-v_i}\right)\left(\prod_{j} \nu_j^{h_j}(1-\nu_j)^{1-h_j} \right),\label{eq:Qdef}
\end{equation}
where $\mu_i$ and $\nu_j$ are chosen to minimize ${\rm KL}(Q||P)$.  The parameters $\mu_i$ and $\nu_j$ are called mean--field parameters.

Using the properties of the Bernouli distribution, it is easy to see that
\begin{align}
{\rm KL}(Q|| P)  &= \sum_{v,h} -Q(v,h) \ln(P(v,h)) + Q(v,h) \ln(Q(v,h)),\nonumber\\
&= \sum_{v,h} Q(v,h)\left(\sum_i v_i b_i + \sum_j h_j d_j + \sum_{i,j}w_{i,j}v_ih_j+\ln Z\right) +Q(v,h)\ln(Q(v,h))\nonumber\\
&= \sum_i \mu_i b_i + \sum_j \nu_j d_j + \sum_{i,j}w_{i,j}\mu_i\nu_j  +\ln(Z) \nonumber\\
&\qquad\qquad + \sum_i \mu_i\ln(\mu_i) +  (1-\mu_i)\ln(1-\mu_i)+\sum_j \nu_j\ln(\nu_j) +  (1-\nu_j)\ln(1-\nu_j).
\end{align}
The optimal values of $\mu_i$ and $\nu_i$ can be found by differentiating this equation with respect to $\mu_i$ and $\nu_i$ and setting the result equal to zero.  The solution to this is
\begin{align}
\mu_i &= \sigma(-b_i -\sum_j w_{i,j} \nu_j)\nonumber\\
\nu_j &= \sigma(-d_j -\sum_i w_{i,j} \mu_i),\label{eq:update}
\end{align}
where $\sigma(x) = 1/(1+\exp(-x))$ is the sigmoid function.
These equations can be implicitly solved by fixed point iteration, which  involves initializing the $\mu_i$ and $\nu_j$ arbitrarily and iterate these equations until convergence is reached.  Convergence is guaranteed provided that the norm of the Jacobian of the map is bounded above by 1.  Solving the mean--field equations by fixed point iteration is analogous to Gibbs sampling with the difference being that here there are only a polynomial number of configurations to sample over and so the entire process is efficient. The generalization of this process to deep networks is straight forward and is discussed in~\cite{Ben09}.

Mean--field approximations to distributions such as $P(v,h)=\delta_{v,x} \exp^{-E(x,h)}/Z_x$ can be computed using the exact same methodology.  The only difference is that in such cases the visible units in the mean--field approximation is only taken over the hidden units.  Such approximations are needed to compute the expectations over the data that are needed to estimate the derivatives of $O_{\rm ML}$ in our algorithms.


It is also easy to see from the above argument that among all product distributions, $Q$ is the distribution that leads to the least error in the approximation to the log--partition function in~\eq{Zbd}.  This is because
\begin{equation}
\log(Z_{Q}) = \log(Z) -{\rm KL}(Q||P),\label{eq:Zmfbd}
\end{equation}
and the mean--field parameters found by solving~\eq{update} minimize the KL--divergence among all product distributions.  It is also interesting to note that all such approximations are lower bounds for the log--partition function because ${\rm KL}(Q||P)\ge 0$.

Experimentally, mean--field approximations can estimate the log-partition function within less than $1\%$ error~\cite{LK00} depending on the weight distribution and the geometry of the graph used.  We further show in~\sec{kappa} that the mean--field approximation to the partition function is sufficiently accurate for small restricted Boltzmann machines. Structured mean--field approximation methods~\cite{Xin02}, TAP~\cite{OW01} or AIS~\cite{SM08,SH09} can be used to reduce such errors if needed, albeit at a higher classical computational cost. 

These results also suggest the following result, which shows that the success probability of our state preparation method approaches $1$ in the limit where the strengths of the correlations in the model vanish.
\begin{corollary}
The success probability in~\lem{succ} approaches $1$  as $\max_{i,j}|w_{ij}| \rightarrow 0$.
\end{corollary}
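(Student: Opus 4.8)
The plan is to show that the success probability $P_{\rm success}=Z/(\kappa Z_{Q})$ from \lem{succ} is squeezed to $1$ in the limit of vanishing couplings. Two ingredients are already available: on one hand \lem{kappacond} gives $\kappa Z_{Q}\ge Z$, so $P_{\rm success}\le 1$ always; on the other hand \eq{Psucc} gives $P_{\rm success}\ge 1/\kappa$. Hence it suffices to prove that the smallest admissible constant $\kappa$ of \defn{kappacond}, namely $\kappa=\max_{v,h} e^{-E(v,h)}/(Z_{Q}Q(v,h))$, tends to $1$ as $\max_{i,j}|w_{ij}|\to 0$; the corollary then follows from the sandwich $1/\kappa\le P_{\rm success}\le 1$.

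First I would analyze the decoupled point $w=0$. There the energy in \eq{E} reduces to $E(v,h)=-\sum_i v_i b_i-\sum_j h_j d_j$, so the Gibbs distribution $P$ itself factorizes into independent Bernoulli variables. Since the mean--field distribution $Q$ of \eq{Qdef} is by construction the product distribution minimizing ${\rm KL}(Q||P)$, and $P$ is already a product distribution, the minimizer satisfies $Q=P$ exactly and ${\rm KL}(Q||P)=0$. Equivalently, the fixed--point equations \eq{update} collapse to $\mu_i=\sigma(-b_i)$ and $\nu_j=\sigma(-d_j)$, which are precisely the single--unit marginals of $P$. Consequently $Z_{Q}=Z$ by \eq{Zmfbd}, and $e^{-E(v,h)}/(Z_{Q}Q(v,h))=1$ for every configuration, so the optimal $\kappa$ equals $1$ at $w=0$.

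The remaining work is a continuity argument. The energy $E(v,h)$ is linear, hence continuous, in $w$. Near $w=0$ the mean--field map \eq{update} is a contraction whose Lipschitz constant vanishes as $w\to 0$, so by the parametrized contraction mapping theorem its fixed point $(\mu,\nu)$ is unique and depends continuously on $w$ in a neighbourhood of the origin; therefore $Q(v,h)$ and $Z_{Q}=\sum_{v,h}Q(v,h)\log(e^{-E(v,h)}/Q(v,h))$ are continuous in $w$ as well. Because there are only finitely many configurations $(v,h)$, the map $\kappa(w)=\max_{v,h}e^{-E(v,h)}/(Z_{Q}Q(v,h))$ is a maximum of finitely many continuous functions, hence continuous, so $\kappa(w)\to\kappa(0)=1$ as $\max_{i,j}|w_{ij}|\to 0$. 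Combined with the sandwich above this yields $P_{\rm success}\to 1$.

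I expect the main obstacle to be exactly this continuity of the mean--field solution at $w=0$: one must ensure that the fixed point selected by the iteration converges to the genuine marginals $\sigma(-b_i),\sigma(-d_j)$ rather than to some spurious solution. This is controlled by the contraction property noted after \eq{update} --- the Jacobian norm is below $1$ for sufficiently small $\max_{i,j}|w_{ij}|$ --- which guarantees both uniqueness of the fixed point and its continuous dependence on $w$, closing the argument.
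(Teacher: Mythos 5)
Your proof is correct and follows essentially the same route as the paper's: both arguments rest on the observation that at $w=0$ the Gibbs distribution factorizes, so the mean--field solution is exact (${\rm KL}(Q||P)=0$, $Z_{Q}=Z$, optimal $\kappa=1$) and unique, and then pass to small $w$ by continuity to conclude the success probability tends to $1$. The only difference is one of rigor: where the paper simply asserts that ${\rm KL}(Q||P)\rightarrow 0$ from uniqueness of the fixed point at $w=0$, you justify the continuous dependence of the mean--field fixed point on $w$ via the contraction--mapping property and package the conclusion as the sandwich $1/\kappa \le P_{\rm success}\le 1$, which fills a small gap in the paper's argument rather than changing its substance.
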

\begin{proof}
The energy is a continuous function of $w$ and therefore $e^{-E(v,h)}/Z$ is also a continuous function of $w$.  Therefore, $\lim_{w\rightarrow 0} P(v,h) = e^{-\sum_i b_i v_i -\sum_j b_j h_j}/ \sum_{v,h}e^{-\sum_i b_i v_i -\sum_j b_j h_j}$.  Under such circumstances, $P(v,h)$ factorizes and so there exist $\tilde \mu_i$ and $\tilde \nu_j$ such that
\begin{equation}
\lim_{w\rightarrow 0}P(v,h) = \left(\prod_{i} \tilde \mu_i^{v_i}(1-\tilde \mu_i)^{1-v_i}\right)\left(\prod_{j} \tilde \nu_j^{h_j}(1-\tilde \nu_j)^{1-h_j} \right).\label{eq:Qdef2}
\end{equation}
Hence it follows from~\eq{Qdef} that there exists a mean--field solution such that ${\rm KL}(Q||\lim_{w \rightarrow 0}P)=0$.  Since the solution to~\eq{update} is unique when $w_{i,j}=0$ it follows that the mean--field solution found must be the global optima and hence there ${\rm KL}(Q||P)$ approaches $0$ as $\max_{i,j}|w_{i,j}|\rightarrow 0$.  Therefore~\eq{Zmfbd} implies that $Z_{Q} \rightarrow Z_{x,{Q}}$ in the limit.  Hence as $\max_{i,j}|w_{i,j}| \rightarrow 0$ we can take $\kappa=1$ and $Z /Z_{Q}=1$.  Therefore the success probability approaches $1$ if the optimal value of $\kappa$ is chosen.   The same argument also applies for $Z_x/Z_{x,{Q}}$.
\end{proof}

\section{Review of contrastive divergence training}\label{sec:CD}
The idea behind contrastive divergence is straighforward.  The model average in the expression for the gradient of the average log--likelihood given in the main body can be computed by sampling from the Gibbs distribution $P$.  This process is not tractable classically, so contrastive divergence samples from an approximation to the Gibbs distribution found by applying a finite number of rounds of Gibbs sampling.  The resultant samples drawn from the distribution are then, ideally, drawn from a distribution that is close to the true Gibbs distribution $P$.

Gibbs sampling proceeds as follows.  First the visible units are set to a training vector.  Then the hidden units are set to $1$ with probability $P(h_j=1|v)=\sigma(-d_j -\sum_i w_{i,j} v_i)$.  Once the hidden units are set, the visible units are reset to $1$ with probability $P(v_i=1|h)=\sigma(-b_i -\sum_j w_{i,j} h_j)$.  This process can then be repeated using the newly generated training vector in place of the original $v$.  As the number of rounds of Gibbs sampling increases, the distribution over resultant samples approaches that of the true Gibbs distribution.

The simplest contrastive divergence algorithm, CD$-1$, works by using only one round of Gibbs sampling to reset the visible units.  The probability that each of the hidden units is $1$ is then computed using $P(h_j=1|v)=\sigma(-d_j -\sum_i w_{i,j} v_i)$.  These probabilities are stored and the process of Gibbs sampling and probability computation is repeated for each training vector.  The probabilities necessary for the model average are then set, for each $w_{i,j}$, to be the average of all the probabilities ${\rm Pr}(v_i=1, h_j=1)$ computed in the prior samples.  Closer approximations to the Gibbs distribution can be found by using more steps of Gibbs sampling~\cite{CH05,BD07,Tie08}.  For example, CD$-10$ uses ten steps of Gibbs sampling rather than one and tends to give much better approximations to the true gradient of the objective function.

Contrastive divergence earns its name because it does not try to approximate the gradients of the ML-objective function; rather, CD$-n$ approximately optimizes the difference between the average log--likelihood after zero and $n$ rounds of Gibbs sampling: ${\rm KL}(p_0||p_\infty)-{\rm KL}(p_n||p_\infty)$~\cite{Hin02}.  Also as $n\rightarrow \infty$ the contrastive divergence objective becomes the average log--likelihood, which is $O_{\rm ML}$ in the absence of regularization.  This means that asymptotically CD$-n$ approximates the correct derivatives.  Although contrastive divergence approximates the gradient of the contrastive divergence objective function, the gradients yielded by the algorithm are not precisely the gradients of any objective function~\cite{ST10}.  Thus the analogy of contrastive divergence optimizing an objective function that is close to $O_{\rm ML}$ is inexact.

Although it is efficient, there are several drawbacks to contrastive divergence training.
The main drawback of CD is that it does not permit interactions between hidden and visible units.  This restricts the allowable class of graphical models.  Additionally, the training process can take hours to days~\cite{Tie08}.  Finally, the method does not directly apply to training deep networks.  In order to train deep restricted Boltzmann machines, layer-wise training is typically employed, which breaks the undirected structure of the network and potentially leads to sub--optimal models.  Parallelism can accelerate the training process for training deep RBMs using contrastive divergence, but only to a limited extent because of the sequential nature of the updates on the (now directed) graph.  Our work shows that quantum computing can circumvent such restrictions.

\section{Review of quantum computing}\label{sec:QC}
In quantum information processing (QIP), information is stored in a quantum bit, or {\it qubit}, which is analogous to a classical bit.
Whereas a classical bit has a state value $s\in \{0,1\}$, a qubit state $\ket{\psi}$ is actually a linear {\it superposition} of states:
\begin{equation}
\ket{\psi} = \alpha\ket{0} + \beta\ket{1},
\end{equation}
where the $\{0,1\}$ basis state vectors are represented in Dirac notation (ket vectors) as
$\ket{0} = \begin{bmatrix} 1 & 0 \end{bmatrix}^T$, and
$\ket{1} = \begin{bmatrix} 0 & 1 \end{bmatrix}^T$, respectively.
The {\it amplitudes} $\alpha$ and $\beta$ are complex numbers that satisfy the normalization condition: $|\alpha|^2 + |\beta|^2 = 1$.
Upon {\it measurement} of the quantum state $\ket{\psi}$, either state $\ket{0}$ or $\ket{1}$ is observed with probability $|\alpha|^2$ or $|\beta|^2$, respectively.
Note that a $n$-qubit quantum state is a $2^n \times 1$-dimensional state vector, where each entry represents the amplitude of the corresponding basis state.
Therefore, $n$ qubits live in a $2^n$-dimensional Hilbert space, and we can represent a superposition over $2^n$ states as:
\begin{equation}
\ket{\psi} = \sum_{i=0}^{2^n - 1} \alpha_i \ket{i},
\end{equation}
where $\alpha_i$ are complex amplitudes that satisfy the condition $\sum_i |\alpha_i|^2 = 1$, and $i$ is the binary representation of integer $i$.
Note, for example, that the state $\ket{0000}$ is equivalent to writing the tensor product of the four states: $\ket{0}\otimes\ket{0}\otimes\ket{0}\otimes\ket{0} = \ket{0}^{\otimes 4} = \begin{bmatrix} 1 & 0 & 0 & 0 & 0 & 0 & 0 & 0 \end{bmatrix}^T$.
The ability to represent a superposition over exponentially many states with only a linear number of qubits is one of the essential ingredients of a quantum algorithm --- an innate massive parallelism.

A quantum computation proceeds through the {\it unitary} evolution of a quantum state; in turn, quantum operations are necessarily {\it reversible}.
We refer to quantum unitary operations as quantum {\it gates}.
Note that measurement is not reversible;
it collapses the quantum state to the observed value, thereby erasing the knowledge of the amplitudes $\alpha$ and $\beta$.

An $n$-qubit quantum gate is a $2^n \times 2^n$ unitary matrix acting on an $n$-qubit quantum state.
For example, the {\it Hadamard} gate maps
$\ket{0} \rightarrow \frac{1}{\sqrt{2}}\left( \ket{0} + \ket{1}\right)$, and
$\ket{1} \rightarrow \frac{1}{\sqrt{2}}\left( \ket{0} - \ket{1}\right)$.
An $X$ gate, similar to a classical NOT gate, maps
$\ket{0}\rightarrow \ket{1}$, and
$\ket{1}\rightarrow \ket{0}$.
The identity gate is represented by $I$.
The two-qubit {\it controlled-NOT} gate, $CX$, maps $\ket{x,y}\rightarrow\ket{x, x\oplus y}$.
It is convenient to include a further gate, $T$, which is known as a $\pi/8$--gate and is needed to make the above quantum gate set complete.
The corresponding unitary matrices are given by:
\begin{equation}
H=\frac{1}{\sqrt{2}}\begin{bmatrix} 1 & 1 \\ 1 & -1 \end{bmatrix}, X=\begin{bmatrix} 0 & 1 \\ 1 & 0 \end{bmatrix}, I=\begin{bmatrix} 1 & 0 \\ 0 & 1 \end{bmatrix}, CX=\begin{bmatrix} 1 & 0 & 0 & 0 \\ 0 & 1 & 0 & 0 \\ 0 & 0 & 0 & 1 \\ 0 & 0 & 1 & 0 \end{bmatrix}, T=\begin{bmatrix}1 & 0 \\ 0& e^{i\pi/4} \end{bmatrix}.
\end{equation}

The single qubit rotation is an important operation for quantum computation.  The single qubit rotation, $R_y(2\theta)$, which under the isomorphism between SO(3) and SU(2) corresponds to a rotation of the state vector about the $y$--axis where the states $\ket{0}$ and $\ket{1}$ are computational basis states.  The gate is defined below.
\begin{align}
R_y(2\theta) =\begin{bmatrix}\cos{\theta} & -\sin(\theta) \\ \sin(\theta) &\cos(\theta) \end{bmatrix}.
\end{align}
Unlike the previous gates, single qubit rotations are not discrete.  They can, however, be approximated to within arbitrarily small error using a sequence of fundamental (discrete) quantum operations~\cite{KMM+13,RS14,BRS14}.



\end{document}